\newcommand{\cell}{\mathrm{cell}}
\newcommand{\myparagraph}[1]{\medskip\noindent\textbf{\boldmath #1}}
\newcommand{\core}{\circ}
\newcommand{\bbox}[2]{R(#1,#2)}
\newcommand{\convex}{convex\xspace}
\newcommand{\good}{good\xspace}
\newcommand{\sat}{resolved\xspace}
\newcounter{casecounter}
\newcounter{subcasecounter}
\newcounter{subsubcasecounter}
\newcommand{\ccase}[2][]{\stepcounter{casecounter}\setcounter{subcasecounter}{0}\protected@write \@auxout {}{\string \newlabel {#2}{{#1\thecasecounter}{\thepage}{#1\thecasecounter}{#2}{}} }\hypertarget{#2}{\medskip\noindent\textbf{Case #1\thecasecounter.}}
}
\newcommand{\subcase}[2][]{\stepcounter{subcasecounter}\setcounter{subsubcasecounter}{0}\protected@write \@auxout {}{\string \newlabel {#2}{{#1\thecasecounter.\thesubcasecounter}{\thepage}{#1\thecasecounter.\thesubcasecounter}{#2}{}} }\hypertarget{#2}{\medskip\noindent\textbf{Case #1\thecasecounter.\thesubcasecounter.}}
}
\newcommand{\subsubcase}[2][]{\stepcounter{subsubcasecounter}\protected@write \@auxout {}{\string \newlabel {#2}{{#1\thecasecounter.\thesubcasecounter.\thesubsubcasecounter}{\thepage}{#1\thecasecounter.\thesubcasecounter.\thesubsubcasecounter}{#2}{}} }\hypertarget{#2}{\smallskip\noindent\textbf{Case #1\thecasecounter.\thesubcasecounter.\thesubsubcasecounter.}}
}
\newcommand{\newcase}{\setcounter{casecounter}{0}}
\newtheorem{theorem}{Theorem}
\newtheorem{lemma}{Lemma}
\newtheorem{property}{Property}
\newtheorem{corollary}{Corollary}
\begin{document}

\title{Greedy Rectilinear Drawings\thanks{This work started at the Bertinoro Workshop on Graph Drawing 2017, Italy. Research was partially supported by DFG grant Ka812/17-1 and by the project ``Algoritmi e sistemi di analisi visuale di reti complesse e di grandi dimensioni'' -- Ricerca di Base 2018, Dipartimento di Ingegneria dell'Universit\`a degli Studi di Perugia.}}

\author[1]{P.~Angelini}
\author[1]{M.~A. Bekos}
\author[2]{W.~Didimo}
\author[2]{L.~Grilli}
\author[3]{P.~Kindermann}
\author[4]{T.~Mchedlidze}
\author[4]{R.~Prutkin}
\author[5]{A.~Symvonis}
\author[2]{A.~Tappini}
\affil[1]{Institut f{\"u}r Informatik, Universit{\"a}t T{\"u}bingen, Germany\\
\texttt{$\{$angelini,bekos$\}$@informatik.uni-tuebingen.de}}
\affil[2]{Universit\`a degli Studi di Perugia, Italy,\newline
\texttt{$\{$walter.didimo,luca.grilli$\}$@unipg.it, alessandra.tappini@studenti.unipg.it}}
\affil[3]{Lehrstuhl f\"ur Informatik I, Universit\"at W\"urzburg, Germany\\
\texttt{philipp.kindermann@uni-wuerzburg.de}}
\affil[4]{Institute of Theoretical Informatics, Karlsruhe Institute of Technology, Germany\\
\texttt{mched@iti.uka.de, roman.prutkin@kit.edu}}  
\affil[5]{School of Applied Mathematical \& Physical Sciences, NTUA, Greece\\
\texttt{symvonis@math.ntua.gr}}

\maketitle

\begin{abstract}
	A drawing of a graph is \emph{greedy} if for each ordered pair of vertices $u$ and $v$, there is a path from $u$ to $v$ such that the Euclidean distance to $v$ decreases monotonically at every vertex of the path. From an application perspective, greedy drawings are especially relevant to support routing schemes in ad hoc wireless networks. The existence of greedy drawings has been widely studied under different topological and geometric constraints, such as planarity, face convexity, and drawing succinctness. We introduce \emph{greedy rectilinear drawings}, where edges are horizontal or vertical segments. These drawings have several properties that improve human readability and support network routing. 
	
	We address the problem of testing whether a planar \emph{rectilinear representation}, i.e., a plane graph with prescribed vertex angles, admits a greedy rectilinear drawing. We give a characterization, a linear-time testing algorithm, and a full generative scheme for \emph{universal} greedy rectilinear representations, i.e., those for which \emph{every} drawing is greedy. For general greedy rectilinear representations, we give a combinatorial characterization and, based on it, a polynomial-time testing and drawing algorithm for a meaningful subset of instances.
\end{abstract}

\section{Introduction}\label{se:introduction}

In a \emph{greedy drawing} of a graph in the plane every vertex is mapped to 
a distinct point and, for each ordered pair of vertices $u$ and $v$, there is 
a \emph{distance-decreasing} path from~$u$ to~$v$, i.e., a path such that the 
Euclidean distance to~$v$ decreases monotonically at every vertex of the path. 
Greedy drawings have been originally proposed to support \emph{greedy routing 
schemes} for ad hoc wireless 
networks~\cite{DBLP:journals/tcs/PapadimitriouR05,DBLP:conf/mobicom/RaoPSS03}. 
In such schemes, a node that has to send a packet to a destination~$v$ 
just forwards the packet to one of its neighbors that is closer to~$v$ than 
itself. In their seminal work, Papadimitriou and 
Ratajczak~\cite{DBLP:journals/tcs/PapadimitriouR05} 
showed that $3$-connected planar graphs form the largest class of graphs for which every 
instance may admit a greedy drawing, and they formulated two conjectures: 

\medskip\noindent\emph{Weak conjecture}: Every $3$-connected planar graph admits a greedy 
drawing.

\medskip\noindent\emph{Strong conjecture}: Every $3$-connected planar graph admits a 
\emph{convex} greedy drawing, i.e., a planar greedy drawing with convex 
faces.

\medskip Concerning the weak conjecture, Dhandapani~\cite{DBLP:journals/dcg/Dhandapani10} 
provided an existential proof for maximal 
planar graphs. Later on, Leighton and Moitra~\cite{DBLP:journals/dcg/LeightonM10} 
and Angelini et al.~\cite{DBLP:journals/jgaa/AngeliniFG10} independently 
settled the weak conjecture positively, by also describing constructive 
algorithms. Da Lozzo et al.~\cite{DBLP:conf/compgeom/LozzoDF17} strengthened 
these results, showing that in fact every $3$-connected planar graph admits a 
\emph{planar} greedy drawing, which may however contain non-convex faces. As 
such, this result sits in between the two conjectures, leaving the strong 
conjecture still open. For graphs that are not $3$-connected, N{\"{o}}llenburg 
and Prutkin~\cite{DBLP:journals/dcg/NollenburgP17} characterized the 
trees that admit a greedy drawing. Note that every greedy drawing of a tree 
is planar~\cite{DBLP:journals/networks/AngeliniBF12}.

Greedy drawings have also been investigated in terms of \emph{succinctness}, an important property that helps to make greedy routing schemes work in practice. A drawing is \emph{succinct} if the vertex coordinates are represented by a polylogarithmic number of bits. Since there exist greedy-drawable graphs in the Euclidean sense that do not admit a succinct greedy drawing~\cite{DBLP:journals/networks/AngeliniBF12}, several papers studied succinct greedy drawings in spaces different from the Euclidean one or according to a metric different from the Euclidean distance~\cite{DBLP:journals/tc/EppsteinG11,DBLP:conf/isaac/GoodrichS09,DBLP:journals/algorithmica/HeZ14,DBLP:journals/tcs/LeoneS16,DBLP:journals/tcs/WangH14}.  

A model related to greedy drawings is the one of \emph{self-approaching 
drawings}~\cite{DBLP:conf/gd/AlamdariCGLP12,DBLP:journals/jocg/NollenburgPR16}. 
A straight-line drawing is self-approaching if for any ordered pair of 
vertices $u$ and $v$, there is a path~$P$ from~$u$ to~$v$ in the drawing such that, for any 
point~$q$ on~$P$, as a point $p$ continuously moves along $P$ from $u$ to $q$, the Euclidean distance from $p$ to~$q$ always decreases.
Clearly, every self-approaching 
drawing is greedy, but not vice versa. Hence, self-approaching drawings are 
greedy drawings with stronger properties. In particular, their \emph{dilation}
is bounded by a constant~\cite{ikl-95}, while for greedy drawings it may be 
unbounded~\cite{DBLP:conf/gd/AlamdariCGLP12}. The dilation (or ``stretch-factor'') 
of a straight-line drawing is the maximum value of the ratio 
between the length of the shortest path between two vertices in the drawing 
and their Euclidean distance. 

\myparagraph{Motivation and Contribution.} The rich literature on greedy drawings described above witnesses the relevance of these kind of drawings both from a practical and from a theoretical perspective. In particular, our work enhances the research on greedy drawings that satisfy some interesting topological or geometric requirements, such as planarity~\cite{DBLP:conf/compgeom/LozzoDF17} 
and face convexity~\cite{DBLP:journals/algorithmica/HeZ14,DBLP:journals/tcs/PapadimitriouR05,DBLP:journals/tcs/WangH14}. 

We initiate the study of greedy drawings in the popular \emph{orthogonal drawing} 
convention~\cite{DBLP:books/ph/BattistaETT99,orth-handbook,DBLP:journals/siamcomp/Tamassia87}: 
Vertices are mapped to points 
and edges are sequences of horizontal and vertical segments (consequently, each vertex has 
degree at most $4$). More precisely, we introduce \emph{planar greedy 
rectilinear drawings}, i.e., crossing-free greedy drawings where each edge 
is either a horizontal or a vertical segment.
We address the following general question: ``Let~$H$ be a \emph{planar rectilinear 
representation}, i.e., a plane graph with given values ($90$, $180$, $270$ 
degrees) for the geometric angles around each vertex; is it possible to 
assign coordinates to the vertices of $H$ so that the resulting drawing is 
greedy rectilinear?''.

Figure~\ref{fi:greedy-repr-a} shows a rectilinear drawing that is not greedy; nonetheless, the 
corresponding rectilinear representation has a greedy drawing, as shown in 
Fig.~\ref{fi:greedy-repr-b}. Our question fits into the effective 
\emph{topology-shape-metrics} 
approach~\cite{DBLP:journals/tse/BatiniNT86,DBLP:journals/siamcomp/Tamassia87}, 
which first computes a planar embedding 
of the graph, then finds an embedding-preserving orthogonal representation, 
and finally assigns coordinates to vertices and bends to complete the 
drawing. The topology-shape-metrics approach is successfully used to compute graph layouts in several application domains, including information systems~\cite{DBLP:journals/tsmc/TamassiaBB88,DBLP:journals/spe/BattistaDPP02}, software design~\cite{DBLP:journals/ivs/EiglspergerGKKJLKMS04}, and computer networks~\cite{DBLP:journals/jgaa/CarmignaniBDMP02}; see also~\cite{orth-handbook}. 
We consider orthogonal drawings without bends and we address the last step of the topology-shape-metrics approach. Our contribution is as follows.
 
\begin{figure}[tb]
	\centering
	\subcaptionbox{\label{fi:greedy-repr-a}}{\includegraphics[page=1]{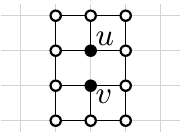}}
	\hfill
	\subcaptionbox{\label{fi:greedy-repr-b}}{\includegraphics[page=2]{greedy-repr}}
	\hfill
	\subcaptionbox{\label{fi:greedy-repr-c}}{\includegraphics[page=3]{greedy-repr}}
	\hfill
	\subcaptionbox{\label{fi:rectangular-inner-faces}}{\includegraphics[page=4]{greedy-repr}}
	\hfill
	\subcaptionbox{\label{fi:orthoconvex-outer-face}}{\includegraphics[page=5]{greedy-repr}}
	\caption{(a) A rectilinear drawing that is not greedy. (b) A greedy rectilinear drawing of the same representation (the distance-decreasing paths between $u$ and $v$ are dashed). (c) Drawing of a universal greedy rectilinear representation. (d)--(e) $H$ is not greedy realizable if an internal face is not a rectangle or the external face is not orthoconvex.}\label{fi:greedy-repr}
\end{figure}  

\begin{itemize}[label=--]
\item We discuss basic properties of greedy rectilinear drawings (Section~\ref{se:preliminaries}). In particular, we prove that the faces are always convex and the dilation is bounded by a small constant.
This makes these representations useful to improve human readability and support network routing.
In contrast, we provide convex (non-rectilinear) 
greedy drawings in which every distance-decreasing path between two vertices is arbitrarily longer than the Euclidean distance.

\item We investigate planar \emph{universal greedy} 
rectilinear representations, i.e., representations for which \emph{every} drawing is greedy (Section~\ref{se:universal}); see for example Fig.~\ref{fi:greedy-repr-c}). We give a characterization and a linear-time recognition algorithm which, 
in the positive case, computes a greedy drawing of minimum area on 
an integer grid. Our characterization may help in the design of algorithms that compute orthogonal representations in the second step of the topology-shape-metrics approach. 
We also describe a generative scheme for constructing 
any possible universal greedy rectilinear representation starting from a rectangle.

\item We extend our study to general rectilinear greedy representations (Section~\ref{se:greedy-rectilinear}). We give a non-geometric characterization, which leads to a linear-time testing algorithm for a meaningful subset of instances. If the condition of the characterization is satisfied, a greedy drawing of minimum area within that condition can be computed in quadratic time. However, we show that greedy rectilinear representations may require exponential area in general. Our non-geometric characterization opens up the way to intriguing theoretical problems, as discussed in Section~\ref{se:conclusions}.
\end{itemize}
We introduce basic concepts of graph drawing and the terminology used in the paper in 
Section~\ref{se:background}. Conclusions and open problems are reported in Section~\ref{se:conclusions}.

\medskip\noindent{\bf Methodological tools and strategy.} 
This paper mainly concentrates on $2$-connected graphs, because, as it will be shown in Section~\ref{se:preliminaries} (Theorem~\ref{th:trees}), the set of greedy rectilinear representations for $1$-connected graphs may be very limited. 
The main results of Section~\ref{se:universal} and Section~\ref{se:greedy-rectilinear} make use of two auxiliary planar DAGs (directed acyclic graphs) $D_x$ and $D_y$ associated with the input rectilinear representation $H$, which allow us to capture and summarize the relative $x$- and $y$-positions of pairs of vertices in a drawing of $H$. We prove that $H$ is universal greedy if and only if both $D_x$ and $D_y$ are Hamiltonian, which leads to an efficient linear-time testing algorithm for this family of representations (Theorem~\ref{th:universal-test}). More in general, the existence of a greedy drawing for $H$ depends on the existence of an $st$-ordering for each of $D_x$ and $D_y$ that guarantees specific connectivity properties for the subgraphs induced by consecutive nodes in that ordering (Theorem~\ref{thm:characterization}). This provides an interesting non-geometric characterization of greedy rectilinear representations and makes it possible to easily design a polynomial-time testing algorithm of greedy realizability for a large subclass of rectilinear representations, namely those for which $D_x$ and $D_y$ are series-parallel graphs. If the test is positive, then a greedy drawing of minimal area for the input representation can be found by solving a linear program (Theorem~\ref{thm:construction}). 

Concerning our generative scheme to create any universal greedy representation (Section~\ref{sse:generative-universal}), this is based on incrementally augmenting a rectilinear representation, starting from a rectangle and by using a small set of primitive operations that either subdivide edges of the external face or attach to the external face a simple path of reflex vertices. Path additions resemble those of an \emph{open ear decomposition} for $2$-connected graphs~\cite{robbins-39}, but they are tailored to plane graphs and of course enhanced with information concerned with the structure of a greedy rectilinear representation.

\section{Background}\label{se:background}

\myparagraph{Drawings and Planarity.} Let $G=(V,E)$ be a graph. A \emph{drawing} of $G$ is 
a geometric representation $\Gamma$ of $G$ in the plane such that each vertex 
$v \in V$ is placed at a distinct point $p_v$ and each edge $e=(u,v) \in E$ is 
drawn as a simple curve connecting $p_u$ and $p_v$. 
We denote by $x(v)$ and $y(v)$ the $x$- and the $y$-coordinate of a vertex $v \in V$ in $\Gamma$, respectively. 
For two vertices $u,v \in V$, we denote by $d(u,v)$ the Euclidean distance 
between $u$ and~$v$ in~$\Gamma$, and by $d_M(u,v)$ the Manhattan distance 
between them. Also, we say that a path from $u$ to $v$ in~$\Gamma$ is a 
\emph{u-v-path}. The degree of $v$ is denoted as $\deg(v)$. 

A drawing $\Gamma$ of a graph $G$ is \emph{planar} if no two edges intersect except at their common end-vertices (when they are adjacent). Graph $G$ is \emph{planar} if it admits a planar drawing $\Gamma$. Such a drawing divides the plane into topologically connected 
regions, called \emph{faces}. Exactly one face of $\Gamma$ is an unbounded 
region and it is called the \emph{external} face of $\Gamma$; the other 
faces are called \emph{internal}. Each internal face is described by the 
counterclockwise sequence of vertices and edges that form its boundary; while for the 
external face we use the clockwise sequence. 
The description of the set of (internal and external) faces 
determined by a planar drawing of $G$ is called a \emph{planar embedding} of~$G$. Recall that a planar embedding uniquely determines, for each vertex $v$, a clockwise ordering of 
the edges incident to $v$.
A planar graph~$G$ together with one of its planar embeddings is a 
\emph{plane graph}: If $\Gamma$ is a planar drawing of $G$ whose set of faces 
coincides with that described by the planar embedding of $G$, then~$\Gamma$ 
\emph{preserves} this embedding.

\myparagraph{Graph Connectivity.} A graph $G$ is \emph{$k$-connected} if 
every two vertices are connected by at least $k$ disjoint paths. If a graph 
is $k$-connected, for $k=1,2,3$, we also say that it is \emph{connected}, 
\emph{biconnected}, and \emph{triconnected}, respectively. 
Let~$G$ be a connected graph that is not biconnected. Then, $G$ contains at 
least a \emph{cutvertex}, namely a vertex whose removal disconnects $G$. 
Also, any maximal subgraph of $G$ that is biconnected is called a \emph{block}
 of~$G$. Finally, the \emph{block-cutvertex tree} $\mathcal{T}$ of $G$ is a 
tree whose \emph{C-nodes} are the cutvertices of $G$, and whose \emph{B-nodes}
are the blocks of $G$; then, $\mathcal{T}$ contains an edge between a B-node~$b$ 
and a C-node $c$ if and only if the cutvertex $c$ belongs to block $b$.

\myparagraph{Directed Graphs and Series-Parallel Compositions.}
A \emph{DAG} (directed acyclic graph) is a directed graph 
without directed cycles.  A node of a DAG with only outgoing (incoming) edges 
is a \emph{source} (\emph{sink}). A DAG $D$ is an $st$-\emph{digraph} if it has a 
single source $s$ and a single sink $t$. An \emph{$st$-ordering} of $D$ is a linear 
order $\mathcal{S} = v_1, \dots, v_n$ of its nodes such that $i < j$ for any directed edge 
$(v_i,v_j) \in D$; observe that $v_1=s$ and $v_n=t$ always holds. Every $st$-digraph $D$ 
admits an $st$-ordering, which can be computed in $O(n)$ time~\cite{EvenT76}. Finally, $D$ is \emph{series-parallel} if one of the following holds:
\begin{enumerate}[label=(\roman*)] 
	\item $D$ is a single edge $(s,t)$ connecting a source to a sink;
	\item $D$ is obtained from a set of series-parallel $st$-digraphs 
	$D_1,\dots, D_k$ with sources $s_1,\dots,s_k$ and sinks $t_1,\dots, t_k$, by 
	identifying $s_1,\dots,s_k$ into a single node $s$, which becomes the source 
	of~$D$, and $t_1,\dots,t_k$ into a single node~$t$, which 
	becomes the sink of~$D$. This operation is a \emph{parallel composition};
	\item  $D$ is obtained from a set of series-parallel $st$-digraphs 
	$D_1,\dots, D_k$ with sources $s_1,\dots,s_k$ and sinks $t_1,\dots, t_k$, by 
	identifying node $t_{i}$ with $s_{i+1}$, for each $i=1,\dots, k-1$. 
	Here, $s=s_1$ and $t=t_k$ are the source and the sink of the resulting graph $D$. This 
	operation is called \emph{series composition}.
\end{enumerate} 

\myparagraph{Orthogonal Drawings and Representations.} The concept of (rectilinear) orthogonal drawing has been already defined in the introduction. We now give a more formal definition of (rectilinear) orthogonal representations. Let $G$ be a plane graph, $v$ be a vertex of $G$, and $e_1, e_2$  
be two edges incident to $v$ that are consecutive in the clockwise order around $v$ (note that $e_1=e_2$, if $v$ has degree~$1$). We say that 
$a = \langle e_1,v,e_2 \rangle$ is an \emph{angle at $v$} of~$G$, or simply an \emph{angle} of 
$G$. Let $\Gamma$ and $\Gamma'$ be two rectilinear orthogonal drawings of~$G$ 
that preserve its planar embedding. We say that $\Gamma$ and $\Gamma'$ are 
\emph{shape equivalent} if for any angle $a$ of~$G$, the geometric angle 
corresponding to $a$ is the same in~$\Gamma$ and~$\Gamma'$.
In other words, two shape equivalent rectilinear orthogonal drawings~$\Gamma$ and~$\Gamma'$ may only differ for the coordinates of their vertices, while the angles around any vertex are the same in the two drawings.
Clearly, the shape equivalence relationship partitions the infinite set of rectilinear orthogonal drawings of a plane graph into a finite number of equivalence classes. Each of these classes is called a \emph{rectilinear orthogonal representation} $H$ of $G$. One can regard $H$ as a partial description of a drawing $\Gamma$ that only specifies the angles at each vertex but that does not fix the vertex coordinates. Hence, $H$ can be described by the embedding of~$G$ together with the geometric value of each angle of $G$ 
($90$, $180$, $270$ degrees)\footnote{Every degree-1 vertex has a single angle of 360 degrees, thus one can avoid to specify it.}.      
If $\Gamma$ is a rectilinear orthogonal drawing within class $H$, we  also say that $\Gamma$ is a rectilinear orthogonal drawing of $H$.
For example, Fig.~\ref{fi:greedy-repr-a} and Fig.~\ref{fi:greedy-repr-b} are two shape equivalent drawings, i.e., they are drawings of the same rectilinear orthogonal representation.

For the sake of simplicity, we will use the term \emph{rectilinear drawing} 
in place of rectilinear orthogonal drawing and the term \emph{rectilinear 
	representation} in place of rectilinear orthogonal representation.

Consider a rectilinear drawing $\Gamma$ of a rectilinear representation $H$. Since $H$ just fixes the angles around the vertices of $\Gamma$, rotating $\Gamma$ by a multiple of $90^\circ$ does not change $H$. Due to this observation, we can assume, without loss of generality, that $H$ always comes with a specific orientation of its edges, i.e., we shall assume that for every edge $(u,v)$ of $H$, it is fixed whether $u$ is to the left, to the right, above, or below $v$ in every rectilinear drawing $\Gamma$ of $H$.
A \emph{flat vertex} of $H$ (or of $\Gamma$) is a vertex with a flat angle ($180$ degrees). A flat angle formed by two horizontal segments is \emph{north-oriented} (\emph{south-oriented}) if it is above (below) the two segments. A flat angle between two vertical segments is either \emph{east-oriented} or \emph{west-oriented}. Finally, a \emph{staircase path} between two vertices $u$ and $v$ of $H$ (resp. of $\Gamma$) is either an edge or an $x,y$-monotone (zigzag) path that connects $u$ and $v$. 

\myparagraph{Greedy Drawings.} Let $\Gamma$ be a drawing of $G$. A path $(v_0, v_1, \dots, v_k)$ 
of $G$ is \emph{distance-decreasing} if $d(v_{i+1},v_k) < d(v_i,v_k)$, 
for $i=0, \dots, k-1$. Drawing~$\Gamma$ is \emph{greedy} if for any ordered pair 
of vertices $u,v$, there exists a distance-decreasing $u$-$v$-path.
If a rectilinear representation $H$ admits a greedy rectilinear drawing,~$H$ 
is \emph{greedy realizable} or, equivalently, it is a \emph{greedy rectilinear representation}.

\section{Basic Properties of Greedy Rectilinear Representations}\label{se:preliminaries}

In this section, we discuss some properties of rectilinear representations 
with respect to their possible greedy rectilinear drawings. We start with an 
additional definition concerning general (not necessarily rectilinear) greedy 
drawings. Let~$G$ be a graph, and let $v$ be a vertex of $G$ with 
neighbors $u_1, u_2, \dots, u_h$. The \emph{cell} of $v$ in a drawing $\Gamma$ of $G$, 
denoted by $\cell(v)$, is the (possibly unbounded) region of all points of 
the plane that are closer to $v$ than to any~$u_i$. The following geometric 
characterization is proven in~\cite{DBLP:journals/tcs/PapadimitriouR05}. 

\begin{theorem}[Papadimitriou and Ratajczak~\cite{DBLP:journals/tcs/PapadimitriouR05}]\label{th:papa-charact}
A drawing of a graph is greedy if and only if for every vertex $v$, $\cell(v)$ contains no vertex other than~$v$. 
\end{theorem}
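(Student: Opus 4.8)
The plan is to prove the two implications separately, keeping in mind that $\cell(v)$ is precisely the Voronoi region of $v$ in the Voronoi diagram of the point set $\{v,u_1,\dots,u_h\}$; equivalently, a point $p$ lies in $\cell(v)$ exactly when $d(p,v)\le d(p,u_i)$ for every neighbor $u_i$ of $v$ (I take the closed region; see the last paragraph for why this matters).

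\emph{Only if.} Assume $\Gamma$ is greedy and fix a vertex $v$ together with an arbitrary vertex $w\ne v$. By greediness there is a distance-decreasing $v$-$w$-path $v=v_0,v_1,\dots,v_k=w$ with $k\ge 1$, so $v_1$ is some neighbor $u_i$ of $v$ and the distance-decreasing condition at the first step gives $d(u_i,w)=d(v_1,w)<d(v_0,w)=d(v,w)$. Hence $w$ is strictly closer to the neighbor $u_i$ than to $v$, so $w\notin\cell(v)$. Since $w$ was an arbitrary vertex different from $v$, $\cell(v)$ contains no vertex other than $v$.

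\emph{If.} Assume that for every vertex the cell contains only that vertex, and fix an ordered pair $(u,v)$; I must exhibit a distance-decreasing $u$-$v$-path. The engine is the following local step: for any $w\ne v$ we have $v\notin\cell(w)$ (since $v\ne w$), which by the (closed) region description means that some neighbor $u_j$ of $w$ satisfies $d(u_j,v)<d(w,v)$; so from any vertex other than $v$ one can hop to a neighbor that is strictly closer to $v$. To turn this into a proof I would argue by contradiction: let $B$ be the set of vertices from which no distance-decreasing path to $v$ exists, assume $B\ne\emptyset$, and pick $w^*\in B$ minimizing $d(w^*,v)$. Applying the local step to $w^*$ yields a neighbor $u_j$ with $d(u_j,v)<d(w^*,v)$; by minimality $u_j\notin B$, so there is a distance-decreasing $u_j$-$v$-path, and prepending the edge $w^*u_j$ to it (legitimate because that first hop strictly decreases the distance to $v$) produces a distance-decreasing $w^*$-$v$-path, contradicting $w^*\in B$. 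Hence $B=\emptyset$ and $\Gamma$ is greedy. Equivalently and more concretely: iterating the local step from $u$ produces a walk along which $d(\cdot,v)$ strictly decreases, so no vertex repeats; the walk is therefore finite, and since it can stop only at $v$, it reaches $v$.

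The one place that needs care — and the only real obstacle — is degeneracy. If one instead reads $\cell(v)$ as the \emph{open} region $\{p:d(p,v)<d(p,u_i)\ \forall i\}$, then a vertex $v$ may sit on a perpendicular bisector between $w$ and one of its neighbors, i.e., on the boundary of $\cell(w)$; the hypothesis $v\notin\cell(w)$ would then only provide a weakly closer neighbor and the local step stalls, so the characterization would fail in such a degenerate drawing. This is exactly why the closed Voronoi region is the right notion here (alternatively one may restrict attention to drawings in general position, where the issue does not arise). Incidentally, the extremal argument above uses no connectivity hypothesis, and in fact shows that the cell condition already forces $G$ to be connected.
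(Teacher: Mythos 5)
Your proof is correct. Note that the paper does not actually prove this statement --- it is quoted from Papadimitriou and Ratajczak, so there is no in-paper argument to compare against; what you give is the standard proof of their characterization (the first-edge argument for necessity, and greedy descent formalized by an extremal/minimal-counterexample argument for sufficiency, including the check that prepending $w^*$ keeps the path simple because distances to $v$ strictly decrease along it). Your closing caveat is also well taken and worth keeping: the paper defines $\cell(v)$ as the set of points \emph{closer} to $v$ than to any neighbor, i.e.\ the open Voronoi region, and with that reading the ``if'' direction genuinely fails in degenerate drawings (e.g.\ a vertex lying on the perpendicular bisector of an edge incident to its only route toward it yields empty open cells but no distance-decreasing path); the theorem as used throughout the paper implicitly requires the closed cell, exactly as you say.
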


For a rectilinear representation $H$, the cell of each vertex $v$ has a specific shape in any rectilinear drawing of $H$, which depends on $\deg(v)$ and on the angles at $v$. Fig.~\ref{fi:ortho-cells} shows all the possible shapes. Note that, if $\deg(v) \leq 3$, then $\cell(v)$ is always unbounded. 

\begin{figure}[t]
	\centering
	\subcaptionbox{}{\label{fi:cell-deg-4}\includegraphics[page=5]{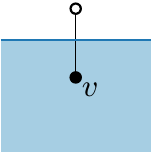}}
	\hfill
	\subcaptionbox{}{\label{fi:cell-deg-3}\includegraphics[page=4]{cell}}
	\hfill
	\subcaptionbox{}{\label{fi:cell-deg-2a}\includegraphics[page=2]{cell}}
	\hfill
	\subcaptionbox{}{\label{fi:cell-deg-2b}\includegraphics[page=3]{cell}}
	\hfill
	\subcaptionbox{}{\label{fi:cell-deg-1}\includegraphics[page=1]{cell}}
	\caption{Different types of cells (shaded regions) of a vertex $v$ in a rectilinear drawing of a graph: (a) $\deg(v)=4$; (b) $\deg(v)=3$; (c)-(d) $\deg(v)=2$, (e) $\deg(v)=1$.
	}\label{fi:ortho-cells}
\end{figure}

We restrict our study to biconnected graphs, because if a graph is not biconnected, the set of its greedy rectilinear drawings may be very limited, as shown by the following result for trees. 

\begin{theorem}\label{th:trees}
	A tree $T$ of vertex degree at most four admits a greedy rectilinear drawing 
	if and only if it has at most four leaves.
\end{theorem}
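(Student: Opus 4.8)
The plan is to prove the two implications separately, throughout using the geometric characterization of Theorem~\ref{th:papa-charact}: a drawing is greedy if and only if, for every vertex $v$, the cell $\cell(v)$ contains no vertex other than $v$.

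\emph{Necessity.} Suppose $T$ has a greedy rectilinear drawing $\Gamma$. The key step is to inspect the cell of a leaf. If $\ell$ is a leaf whose incident edge is horizontal with its neighbor lying to the left of $\ell$, then the perpendicular bisector of that edge is a vertical line strictly to the left of $\ell$, so $\cell(\ell)$ is the open half-plane to the right of that line; greediness forces this half-plane to be free of other vertices, i.e.\ $\ell$ must be the \emph{strictly} rightmost vertex of $\Gamma$. Symmetrically, a leaf whose edge points rightward (resp.\ upward, downward) must be the strictly leftmost (resp.\ bottommost, topmost) vertex of $\Gamma$. Since each of these four ``extreme'' roles is realised by at most one vertex, $\Gamma$ accommodates at most four leaves; in particular, a tree with at least five leaves would, by the pigeonhole principle, require two distinct leaves to be simultaneously, say, the unique rightmost vertex, a contradiction. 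This direction is short and is not where the difficulty lies.

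\emph{Sufficiency.} Here I would first pin down the possible shapes of $T$: the handshake identity for trees gives $\sum_{\deg(v)\ge 3}(\deg(v)-2)=L-2$, where $L$ is the number of leaves, and combined with $\deg(v)\le 4$ and $L\le 4$ this shows that $T$ is either a path, or a \emph{spider} with one degree-$3$ vertex and three arms, or a \emph{star} with one degree-$4$ vertex and four arms, or a \emph{double spider} consisting of two degree-$3$ vertices $c_1,c_2$ joined by a path (the \emph{spine}) with two arms attached at each of $c_1$ and $c_2$. For each case I would exhibit an explicit bend-free rectilinear drawing (choosing embedding, angles, and coordinates) and check the cell condition. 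A path is placed on a horizontal line with strictly increasing $x$-coordinates. For the spider (resp.\ the star) the arms are drawn as straight axis-parallel rays emanating from the branch vertex along three (resp.\ all four) of the cardinal directions; then the branch vertex's cell is a half-strip (resp.\ a bounded axis-parallel box), and every other cell is empty because coordinates change monotonically along each arm and distinct arms occupy distinct axis directions.

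The real work, and the main obstacle, is the double spider, where there are now \emph{two} vertices whose cells are three-sided half-strips, and a naive layout fails: if the spine is horizontal and both arms at each of $c_1,c_2$ are drawn vertically, then the two upward arms compete to be the topmost vertex, and a leaf placed high on an arm of $c_1$ can fall inside the cell of a leaf on an arm of $c_2$. The resolving idea is exactly the one dictated by the necessity argument: the four arms must point in the four \emph{distinct} cardinal directions. Concretely, draw the spine as a horizontal path of unit-length edges with $c_1$ to the left of $c_2$, route the two arms of $c_1$ leftward and upward, and the two arms of $c_2$ rightward and downward, each as a straight ray with unit spacing. Then each leaf is the unique extreme vertex in one direction; the degree-$2$ vertices have thin-strip cells that are empty by monotonicity along arms together with the horizontal separation between the two arm clusters; and --- the crux --- the cell of $c_1$ is a half-strip that is bounded in $x$ (width one, centred at $x=x(c_1)$) and extends downward, so the only other vertices lying in the relevant vertical strip are those on the upward arm of $c_1$, which lie strictly above $c_1$ and hence outside $\cell(c_1)$; symmetrically for $c_2$. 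I would present this construction with a figure and carry out the (routine but somewhat lengthy) verification of the cell condition for all vertex types; by Theorem~\ref{th:papa-charact} this completes the proof.
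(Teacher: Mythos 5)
Your proposal is correct and follows essentially the same route as the paper: your necessity argument (each leaf's half-plane cell forces it to be the unique extreme vertex in one of the four axis directions) is the paper's observation that two equally oriented leaves cannot coexist, and your case-by-case constructions for the path, spider, star, and double spider all reduce to the paper's single uniform drawing --- a horizontal path between two leaves with the remaining (at most two) leaves attached by vertical paths going in opposite directions. The only difference is presentational: the paper states the construction uniformly and leaves the cell verification as ``immediate,'' whereas you enumerate the tree shapes and sketch the verification explicitly.
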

\begin{proof}
	Given a leaf $v$ of $T$ and a rectilinear drawing $\Gamma$ of $T$, we say 
	that $v$ is \emph{north-oriented} (\emph{south-oriented}) if $v$ is above (below) 
	its neighbor in $\Gamma$. Similarly,~$v$ is \emph{east-oriented} 
	(\emph{west-oriented}) if $v$ is to the right (left) of its neighbor in $\Gamma$. 
	If~$T$ has at least five leaves, then there are at least two leaves $u$ and~$v$ 
	in $\Gamma$ that are equally oriented, say north-oriented. This implies 
	that $\cell(u)$ contains~$v$ or $\cell(v)$ contains $u$ (or both). By 
	Theorem~\ref{th:papa-charact}, $\Gamma$ is not~greedy.
	
	\begin{figure}[tb]
		\centering
		\subcaptionbox{\label{fi:tree-a}}{\includegraphics{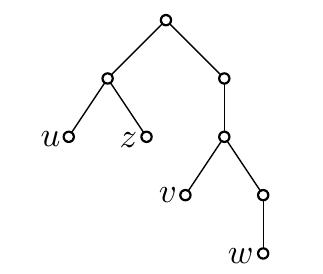}}
		\hfil
		\subcaptionbox{\label{fi:tree-b}}{\includegraphics{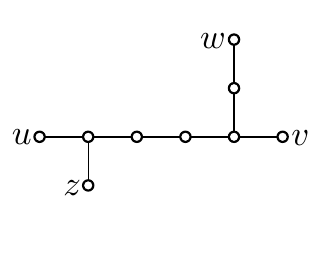}}
		\caption{(a) A tree $T$ with four leaves. (b) A greedy rectilinear representation of $T$}
		\label{fi:tree}
	\end{figure}
	
	Suppose vice versa that $T$ has at most four leaves $u,v,w,z$. A greedy 
	drawing~$\Gamma$ of $T$ is constructed as follows (see Figs.~\ref{fi:tree-a} 
	and~\ref{fi:tree-b}). All vertices of the path~$\pi$ between $u$ and $v$ in $T$ 
	are horizontally aligned, so that $u$ is west-oriented and $v$ is east-oriented. 
	Let $\pi'$ be the path connecting $\pi$ to $w$ and $\pi''$ be the 
	path connecting $\pi$ to $z$; note that $\pi'$ and $\pi''$ may be attached to 
  the same vertex of~$\pi$. Then, all vertices of $\pi'$ are vertically aligned 
	in such a way that~$w$ is north-oriented, while all vertices of $\pi''$ are 
	vertically aligned in such a way that~$z$ is south-oriented. It is immediate 
	to see that $\Gamma$ is a greedy drawing. 
\end{proof}

Observe that, with the same argument as in the proof of Theorem~\ref{th:trees}, it is possible to prove that any graph with more than four degree-$1$ vertices does not admit a greedy rectilinear drawing.

\subsection{Convexity of greedy rectilinear representations}
\medskip
We now show that we can further restrict our study to \emph{\convex} rectilinear 
representations, i.e., those whose every internal face is rectangular and whose external face is bounded by an \emph{orthoconvex} polygon. Recall that a simple polygon $P$ is orthoconvex if for any vertical or horizontal line $\ell$, the intersection between $P$ and $\ell$ is either empty or a single segment. An illustration of a \convex rectilinear representation is given in Fig.~\ref{fi:orthoconvex-example}.

\begin{lemma}\label{le:necessityconvex}
A rectilinear representation is greedy realizable only if it is convex.
\end{lemma}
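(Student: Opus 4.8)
The plan is to argue via the geometric characterization of Theorem~\ref{th:papa-charact}: given a rectilinear representation $H$ that is not \convex, I would pick an arbitrary rectilinear drawing $\Gamma$ of $H$ and exhibit a vertex $v$ with $\cell(v)$ containing another vertex, so $\Gamma$ is not greedy. The workhorse is an elementary remark on cells of degree-$2$ vertices: if the two edges at a degree-$2$ vertex $v$ leave $v$ towards the east and towards the north, then both neighbours of $v$ have strictly larger $x$- resp.\ $y$-coordinate, so the two defining perpendicular bisectors place the whole open quadrant $\{x<x(v),\,y<y(v)\}$ inside $\cell(v)$; by the four rotations of this statement, in a greedy drawing no vertex may lie in the open quadrant ``opposite'' the $90^\circ$ wedge of a degree-$2$ vertex. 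Everything below amounts to finding, in a non-\convex $H$, a degree-$2$ vertex whose forbidden quadrant is forced to be occupied.

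First I would show every internal face is a rectangle. If an internal face $f$ is not a rectangle, then, being a simple rectilinear polygon with $\#\text{convex}-\#\text{reflex}=4$, it has a reflex vertex $v$, which has degree $2$ (the remaining angle at $v$ being $90^\circ$); up to symmetry its edges leave $v$ to the east and the north, with $f$ locally occupying the south-west, north-west and south-east quadrants at $v$ but not the north-east one. For a generic small $\varepsilon>0$ (below the distance from $v$ to the rest of $\partial f$, and with $x(v)-\varepsilon$ distinct from every vertex $x$-coordinate), the point $p=(x(v)-\varepsilon,\,y(v)-\varepsilon)$ is interior to $f$; a vertical ray shot downwards from $p$ leaves the bounded face $f$ through the interior of a horizontal edge $e$ of $\partial f$ of height below $y(v)$, and the left endpoint $u$ of $e$ is a vertex with $x(u)<x(v)$ and $y(u)<y(v)$, hence $u\in\cell(v)$ and $u\neq v$ --- contradicting Theorem~\ref{th:papa-charact}.

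Next comes the external face. Let $R$ be the region bounded by the external boundary cycle (a simple rectilinear polygon, now tiled by the rectangular internal faces), and suppose $R$ is not orthoconvex. Then, up to symmetry, some vertical line $x=c$ with $c$ not a vertex $x$-coordinate meets $R$ in at least two segments; picking two consecutive ones, ending at $(c,\alpha)$ and starting at $(c,\beta)$ with $\alpha<\beta$, the open segment $\{c\}\times(\alpha,\beta)$ lies in the external face and forms part of a \emph{pocket} of it. The point $(c,\alpha)$ lies in the interior of a maximal horizontal segment $e^-$ of $\partial R$ having $R$ locally below it, and a short analysis of the angles of $R$ at an endpoint $w$ of $e^-$ shows that either $w$ is a degree-$2$ vertex whose $270^\circ$ angle is the external face --- so $\cell(w)$ is a quadrant extending away from $R$ into the pocket side --- or $w$ is a vertex at which the external face makes a $90^\circ$ angle, that is, a reflex vertex of $R$ of degree $3$ or $4$. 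In the favourable case, where the relevant corners are degree-$2$, I would pick such a pocket-corner $w^-$ on $e^-$ and, symmetrically, a degree-$2$ pocket-corner $w^+$ on the maximal horizontal segment of $\partial R$ through $(c,\beta)$ (which has $R$ above it); comparing the four defining coordinates then forces one of $w^+\in\cell(w^-)$, $w^-\in\cell(w^+)$ (the heights already give $\beta>\alpha$ above both relevant midpoints, and whichever of $w^-,w^+$ lies farther from the line $x=c$ has the other in its cell), again contradicting Theorem~\ref{th:papa-charact}.

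The hard part is this external-face analysis, and precisely the configurations where the pocket-bounding corners are reflex vertices of $R$ of degree $3$ or $4$: such a vertex has only a half-strip (degree $3$) or a small box (degree $4$) as its cell, so no forbidden quadrant can be read off directly, and one must instead descend into a narrower or deeper sub-pocket, or play two such reflex corners against a degree-$2$ convex corner of $R$ lying further inside the dent. I expect the bulk of the write-up to be a case analysis over the local shape of $\partial R$ around the pocket, after reducing --- by choosing the witnessing line and the pocket suitably (for instance an innermost one) --- to the favourable case. This also explains the asymmetry with internal faces: a bounded face ``closes in'' on any reflex corner and pins a vertex into its opposite quadrant, forcing rectangularity, whereas the unbounded external face tolerates isolated reflex corners of $R$, and only two-sided dents (those witnessed by a line crossing $R$ twice) are fatal, forcing merely orthoconvexity.
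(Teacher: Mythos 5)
Your internal-face argument is correct and is essentially the paper's: a $270^\circ$ corner of a non-rectangular internal face is a degree-$2$ vertex whose cell contains the whole opposite open quadrant, and a ray-shooting argument finds another vertex of the (bounded) face inside that quadrant. The genuine gap is in the external-face direction, and you have flagged it yourself: you only close the argument in the ``favourable case'' where the two corners you pair up are degree-$2$ convex corners of the outer polygon $R$, and you defer the configurations in which the pocket-bounding horizontal segments end in reflex corners of $R$ (degree $3$ or $4$, whose cells are half-strips or bounded boxes) to an unwritten case analysis with only a vague plan (``descend into a sub-pocket'', ``choose an innermost pocket''). As written this does not establish the lemma. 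Moreover, even the favourable case is not airtight: the claim that ``whichever of $w^-,w^+$ lies farther from the line $x=c$ has the other in its cell'' only works when the two chosen convex corners lie on the same side of the witnessing line, so that their forbidden quadrants point towards each other. If the only degree-$2$ corner available on $e^-$ is, say, its left endpoint (cell containing the closed north-west quadrant) while the only one on $e^+$ is its right endpoint (cell containing the closed south-east quadrant), the two quadrants point away from each other and neither vertex need lie in the other's cell, so this sub-case is not covered either.

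The paper avoids the whole difficulty by choosing different witness vertices. Cut $P$ by the vertical line $\ell$ and take the two sub-polygons $P'$ and $P''$ lying on the same side of $\ell$, say to the right, with $s'$ above $s''$. Instead of the corners adjacent to the pocket, take a \emph{rightmost} vertical side $r'$ of $P'$ and $r''$ of $P''$. The topmost point of $r''$ is automatically a convex corner of $P$: since nothing of $P''$ lies to the right of $r''$, both boundary edges at that point must turn back into the polygon, so it is a degree-$2$ vertex with a $270^\circ$ angle on the external face, and its cell contains the entire closed quadrant opening upward and to the right, i.e.\ towards $P'$. If $x(r')\ge x(r'')$ this quadrant must contain a vertex of $P'$ (for instance an endpoint of $r'$), and the symmetric choice (bottommost point of $r'$) handles $x(r')<x(r'')$. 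If you wish to keep your pocket-corner approach you would have to actually carry out the deferred case analysis; switching to this extremal-side argument is the cleaner fix.
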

\begin{proof}
	Suppose first that a rectilinear representation $H$ has an internal face~$f$ 
  that is not a rectangle. This means that there is a vertex $v$ with an angle 
  of~$270$ degrees inside $f$. 
	Let $\Gamma$ be any rectilinear drawing of~$H$. Suppose that, when moving 
  counterclockwise along the boundary of~$f$, we enter $v$ horizontally from 
  west and leave $v$ vertically towards south; the other cases are symmetric. 
  Since~$\Gamma$ has no bend, there exists a vertex $u$ to the right 
	of $v$ and above $v$ (see Fig.~\ref{fi:rectangular-inner-faces-app}). 
	Therefore, $\cell(v)$ contains $u$, which means that $\Gamma$ is not greedy 
	by Theorem~\ref{th:papa-charact}.
	
	\begin{figure}[tb]
		\centering
		\subcaptionbox{\label{fi:orthoconvex-example}}{\includegraphics[page=4]{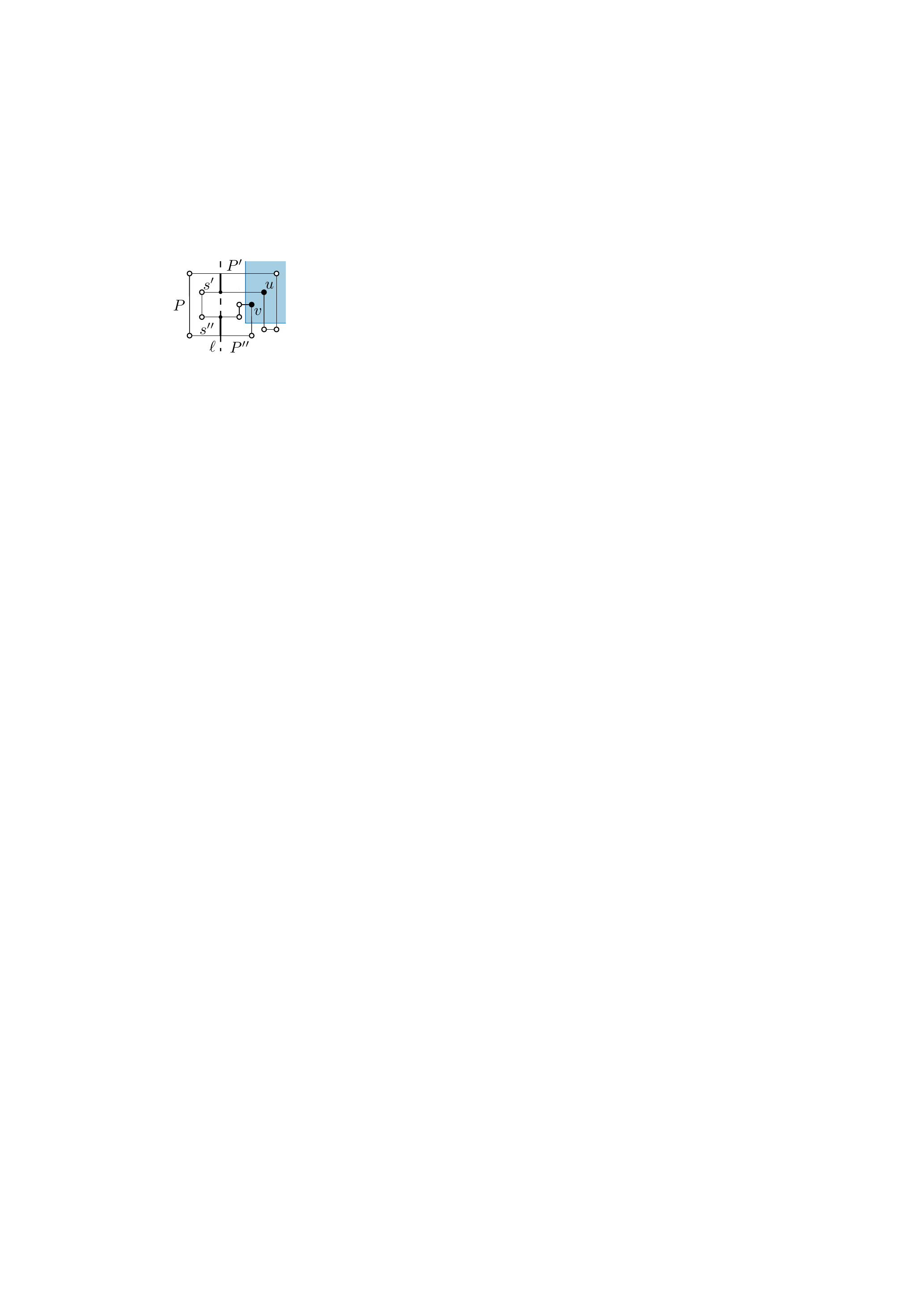}}
		\hfil
		\subcaptionbox{\label{fi:rectangular-inner-faces-app}}{\includegraphics{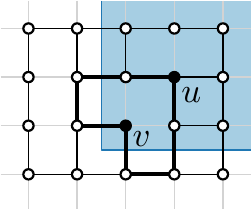}}
		\hfil
		\subcaptionbox{\label{fi:orthoconvex-outer-face-app}}{\includegraphics[page=3]{orthoconvex-outer-face}}
		\caption{
			(a) A \convex rectilinear representation, where the vertices on the orthoconvex outer face are filled white. 
			(b-c) Illustration for the proof that a rectilinear representation is not greedy realizable if (b) an internal face is not a rectangle, or 
			(c) the external face is not orthoconvex.}\label{fi:greedy-repr-app}
	\end{figure} 
	
	Suppose now that the polygon $P$ defined by the external boundary of $H$ is 
	not orthoconvex. Let $\Gamma$ be any rectilinear drawing 
	of $H$ and let $P$ be the polygon defined by the boundary of the external 
	face of $\Gamma$. Since $P$ is not orthoconvex, we can assume without loss of 
	generality that there exists a vertical line $\ell$ whose intersection 
	with~$P$ consists of at least two segments~$s'$ and $s''$. Suppose that~$s'$ is 
	above $s''$. Clearly, $\ell$ cuts $P$ into at least three distinct polygons, 
	two of which lie on the same side of $\ell$, say to the right of $\ell$: The 
	polygon having $s'$ as a leftmost side is denoted by~$P'$ while the polygon 
	having $s''$ as a leftmost side is denoted by~$P''$. Refer to 
	Fig.~\ref{fi:orthoconvex-outer-face-app} for an illustration. Let $r'$ be a rightmost 
	vertical side of~$P'$ and let $r''$ be a rightmost vertical side of $P''$; 
	also, denote by~$x(r')$ and~$x(r'')$ the $x$-coordinate of $r'$ and of $r''$, respectively.
	Assume first that $x(r') \geq x(r'')$ and let $p$ be the topmost point 
	on $r''$. Clearly,~$\Gamma$ has a vertex~$v$ at point $p$, and $v$ is a degree-$2$ 
	vertex that forms an angle of $270$ degrees on the external face of $\Gamma$. 
	Also, since $P'$ and $P''$ cannot intersect, there must be at least a vertex $u$ 
	of $\Gamma$ on $P'$ that is above $v$ and not to the left of $v$. It 
	follows that~$\cell(v)$ contains~$u$, and hence $\Gamma$ is not greedy. The 
	case in which $x(r') < x(r'')$ is handled symmetrically, choosing $p$ as the 
	bottommost point on~$r'$.
\end{proof}
 
For a rectilinear drawing of a \convex rectilinear representation $H$, and for any two vertices $u$ and $v$ of $H$, let $R(u,v)$ denote the minimum bounding box (rectangle or segment) including~$u$ and $v$. The next property immediately follows from the convexity of $H$.

\begin{property}\label{pr:right-angles}
Let $f$ be a face of $H$ and $w$ be any vertex of $H$ with an angle of $90$ degrees inside $f$. Denote by $u$ and $v$ the two neighbors of $w$ along the boundary of $f$. In any rectilinear drawing of $H$, there is no vertex properly inside $R(u,v)$.
\end{property}

\subsection{Dilation of greedy rectilinear representations} 
\medskip
We now exploit Property~\ref{pr:right-angles} to show that the dilation of greedy rectilinear representations is always bounded by a small constant. 

\begin{theorem}\label{th:bounded-dilation}
	In a rectilinear greedy drawing on an integer grid, for every two vertices $s,t$ there is a distance-decreasing $s$-$t$-path of length at most~$3 \sqrt{2} \cdot d(s,t)$.
\end{theorem}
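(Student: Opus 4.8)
The plan is to use that a greedy rectilinear drawing is automatically convex, to isolate an elementary monotonicity property of distance-decreasing paths, to build one such path that is essentially monotone toward~$t$, and then to bound its length by a Manhattan-distance estimate. By Lemma~\ref{le:necessityconvex}, the given drawing $\Gamma$ is convex, so all its internal faces are rectangles and its outer boundary is an orthoconvex polygon. We may assume $s\neq t$. Since rotating $\Gamma$ by a multiple of $90^\circ$ and reflecting it change neither greediness nor the rectilinear structure, we may also assume that $t$ lies weakly to the north-east of $s$, that is, $x(s)\le x(t)$ and $y(s)\le y(t)$.

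Next I would record the following elementary fact: along \emph{any} distance-decreasing path $P=(v_0=s,v_1,\dots,v_k=t)$ in $\Gamma$, both $|x(v_i)-x(t)|$ and $|y(v_i)-y(t)|$ are non-increasing in~$i$. Indeed, if $v_iv_{i+1}$ is a horizontal edge then $y(v_{i+1})=y(v_i)$, and expanding $d(v_{i+1},t)^2<d(v_i,t)^2$ gives $\bigl(x(v_{i+1})-x(t)\bigr)^2<\bigl(x(v_i)-x(t)\bigr)^2$; the vertical case is symmetric. Call a horizontal edge of $P$ an \emph{overshoot} if its endpoints lie on opposite sides of the line $x=x(t)$, in which case its length equals the sum of the two values of $|x(\cdot)-x(t)|$ at its endpoints; by contrast, a non-overshoot horizontal edge has length exactly the amount by which it decreases $|x(\cdot)-x(t)|$. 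Overshoots of vertical edges with respect to the line $y=y(t)$ are defined analogously.

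The heart of the proof — and the step I expect to be the main obstacle — is to construct a distance-decreasing $s$-$t$-path $P$ that contains at most one horizontal overshoot and at most one vertical overshoot. The natural approach is to route from $s$ toward $t$ one edge at a time: at a vertex $v\neq t$, greediness (Theorem~\ref{th:papa-charact}) guarantees a neighbour $u$ with $d(u,t)<d(v,t)$, and whenever $t$ lies strictly to the north-east of $v$ this neighbour must lie to the north or to the east of $v$, since moving west or south would increase the distance to~$t$. One then prefers a closer neighbour that does not push the current $x$- or $y$-coordinate past $x(t)$ or $y(t)$, and uses the rectangular shape of the internal faces together with the orthoconvexity of the outer boundary — via Property~\ref{pr:right-angles} — to analyse the local configurations in which every closer neighbour is forced to cross one of these two lines. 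The aim of this analysis is to show that once the route has passed the line $x=x(t)$ it never has to cross it again, and symmetrically for $y=y(t)$, so that at most one overshoot of each type occurs along $P$. Carrying out this local case analysis rigorously is where essentially all of the work lies; a cleaner alternative would be to prove directly that a fully monotone, hence overshoot-free, $s$-$t$-path always exists in $\Gamma$, which would even yield the better bound $\sqrt2\cdot d(s,t)$.

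Finally, the length estimate is a short computation. Let the successive values of $|x(\cdot)-x(t)|$ taken at the horizontal edges of $P$ be $c_0>c_1>\dots>c_p=0$, so $c_0=|x(s)-x(t)|$. The non-overshoot horizontal edges have lengths $c_{j-1}-c_j$, which telescope to $c_0$, while the single possible horizontal overshoot contributes an extra amount at most $2c_0$; hence the horizontal edges of $P$ have total length at most $3\,|x(s)-x(t)|$. By the symmetric argument the vertical edges have total length at most $3\,|y(s)-y(t)|$. Therefore $P$ is a distance-decreasing $s$-$t$-path of length at most $3\bigl(|x(s)-x(t)|+|y(s)-y(t)|\bigr)=3\,d_M(s,t)\le 3\sqrt2\cdot d(s,t)$, as claimed.
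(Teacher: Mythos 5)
Your reduction at the end is sound: the monotonicity of $|x(\cdot)-x(t)|$ and $|y(\cdot)-y(t)|$ along any distance-decreasing path is correct, and \emph{if} one has a distance-decreasing $s$-$t$-path with at most one horizontal and at most one vertical overshoot, the telescoping computation does give length at most $3\,d_M(s,t)\le 3\sqrt2\,d(s,t)$. But the existence of such a path is the entire content of the theorem, and you explicitly leave it as a sketch (``the natural approach is\dots'', ``carrying out this local case analysis rigorously is where essentially all of the work lies''). That is a genuine gap, not a routine verification: monotonicity of $|x(\cdot)-x(t)|$ alone does not prevent repeated overshoots, and repeated overshoots break your bound. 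For instance, if the successive values of $x(\cdot)-x(t)$ at the horizontal edges were $10,-5,3,-2,1,0$, then $|x(\cdot)-x(t)|$ decreases at every step, yet the horizontal length is $15+8+5+3+1=32>3\cdot 10$. So you must actually prove that after a first crossing of $x=x(t)$ the path never needs to cross back, and your proposed greedy routing rule (``prefer a closer neighbour that does not overshoot'') does not obviously have this property: after a forced horizontal overshoot you sit at a vertex with $t$ to its north-west, and nothing you have established rules out that its only closer neighbour lies back across the line $x=x(t)$.

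For comparison, the paper does not construct a path edge by edge; it inducts on $d_M(s,t)$ and analyses the \emph{shortest} distance-decreasing $s$-$t$-path $\rho$. If some vertex other than $s$ lies on the top or left side of $R(s,t)$, it splits there and applies induction. Otherwise it shows (using greediness to argue that a half-strip near $s$ is contained in $\cell(s)$ and hence empty) that the first edge $(s,u)$ overshoots by less than $d_M(s,t)$, and then controls the rest of $\rho$ via its maximal staircase prefix from $u$, using Property~\ref{pr:right-angles} and face convexity to derive a contradiction with the minimality of $\rho$ in the problematic configurations. Something of this nature — an argument that actually exploits minimality of the path or emptiness of cells, not just local neighbour selection — is what your ``heart of the proof'' needs before the overshoot-counting can be invoked. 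As written, the proposal identifies the right target but does not prove it.
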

\begin{proof}
	Let $\Gamma$ be a rectilinear greedy drawing of a rectilinear representation $H$. We prove that for every two vertices $s,t \in H$ there exists a distance-decreasing $s$-$t$-path in $\Gamma$ of length
	at most~$3 d_M(s,t)$. Then, the statement follows by the fact that the Manhattan distance between two points is at most $\sqrt{2}$ times the Euclidean distance between them, that is, $d_M(s,t) \leq \sqrt{2} d(s,t)$. 
	
	We use induction on $d_M(s,t)$, which is always an integer number, since $\Gamma$ has integer vertex coordinates. First note that every vertex $u$ of $H$ is connected to every vertex that is closest to $u$ in~$\Gamma$ with respect to the Euclidean distance~\cite{DBLP:journals/tcs/PapadimitriouR05}.
	
	In the base case $d_M(s,t)=1$, we have that $x(s)=x(t)$ or~$y(s)=y(t)$. 
	Since~$t$ is the closest vertex to~$s$, we have that~$s$ and~$t$ are adjacent 
	in~$\Gamma$, and the statement trivially holds. 
	
	Suppose now that $d_M(s,t)>1$ and that the statement holds for every pair of vertices whose Manhattan distance is less than $d_M(s,t)$. 
	If $x(s)=x(t)$ or $y(s)=y(t)$, then there must be in $\Gamma$ a distance-decreasing straight $s$-$t$-path (horizontal or vertical), as otherwise $\Gamma$ would not be greedy. In this case, the length of this path equals $d_M(s,t)=d(s,t)$. 
	Suppose now that $s$ and $t$ are not horizontally or vertically aligned. Without loss of generality, let $t$ lie to the right of $s$ and below it. Recall that~$\bbox{s}{t}$ denotes the bounding box of $s$ and $t$.
	We distinguish between the following cases:
	
	\newcase
	\ccase{c:dilation-boundary} There is a vertex~$v \neq s$ on
	the top or left boundary of~$\bbox{s}{t}$. Then,~$s$ is connected
	to~$v$ by a straight path. Since $d_M(v,t) < d_M(s,t)$, 
	by induction there exists a distance-decreasing $v$-$t$-path of length
	at most~$3 d_M(v,t)$. Concatenating this $v$-$t$-path and the
	straight $s$-$v$-path creates a distance-decreasing $s$-$t$-path of
	length at most \[d_M(s,v) + 3d_M(v,t) < 3 (d_M(s,v) + d_M(v,t)) = 3 d_M(s,t).\]
	
	\ccase{c:dilation-noboundary} There is no vertex~$v \neq s$ on
	the top or left boundary of~$\bbox{s}{t}$. Consider the shortest
	distance-decreasing $s$-$t$-path~$\rho$. Let $(s,u)$ be the first edge of 
	this path, and assume, without loss of generality, that this edge is 
	horizontal (the other case is symmetric). Then, we 
	have \[x(t) < x(u) < x(s) + 2(x(t)-x(s))\] and hence $d_M(u,t) < d_M(s,t)$.
	Note that, in this case, $s$ cannot have a neighbor~$w$ below it, as this would imply $y(w) < y(t)$, thus 
	violating Property~\ref{pr:right-angles}; see Fig.~\ref{fig:dilation:1}. 
	Then, the region delimited by the vertical lines $x = x(s)$,
	$x = \frac{1}{2}(x(s) + x(u))$, and by the upper horizontal line $y = y(s)$ 
	is a subset of $\cell(s)$, and thus it contains no vertex other than~$s$, since $\Gamma$ is greedy; see Fig.~\ref{fig:dilation:2}. Let~$v$ be 
	the latest successor of $u$ along $\rho$, such that the $u$-$v$-subpath of 
	$\rho$ is a staircase. The following subcases are possible:

	\begin{figure}[tb]
		\subcaptionbox{\label{fig:dilation:1}}{\includegraphics[page=1]{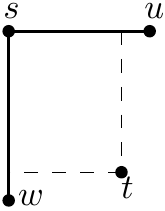}}
		\hfill
		\subcaptionbox{\label{fig:dilation:2}}{\includegraphics[page=2]{dilation}}
		\hfill
		\subcaptionbox{\label{fig:dilation:3}}{\includegraphics[page=3]{dilation}}
		\hfill
		\subcaptionbox{\label{fig:dilation:4}}{~\includegraphics[page=4]{dilation}}
		\caption{Illustration for the proof of Theorem~\ref{th:bounded-dilation}.
			(a)~There is no edge~$(s,w)$ with~$y(w)<y(s)$, and
			(b)--(d)~the path from~$u$ to~$v$ is the longest staircase on~$\rho$.}
		\label{fig:dilation}
	\end{figure}
	
	\subcase{sc:dilation-v=t} $v = t$. Then, the
	length of~$\rho$ is \[x(u)-x(s) + d_M(u,t) \leq 2 d_M(s,t) + d_M(u,t) \leq 3 d_M(s,t).\]
	
	\subcase{sc:dilation-v>t} $v \neq t$ and $y(v) > y(t)$. Then,~$x(v) > \frac{1}{2}(x(s) + x(u))$. Furthermore,
	$x(v) < x(t)$, as otherwise the edge following~$v$ would not be distance decreasing; see Fig.~\ref{fig:dilation:2}. The $s$-$v$-subpath
	of~$\rho$ has length \[x(u) - x(s) + x(u) - x(v) + y(s) - y(v) < 3\left(x(v)-x(s)\right)+y(s)-y(v)< 3d_M(s,v).\]
	Since $d_M(v,t) < d_M(s,t)$ by induction, the $v$-$t$-subpath
	of~$\rho$ has length at most $3 d_M(v,t)$.  Therefore, the
	length of~$\rho$ is at most~$3 d_M(s,v) + 3 d_M(v,t) = 3 d_M(s,t)$.
	
	\subcase{sc:dilation-v<t} $v \neq t$ and $y(v) \leq y(t)$. If $x(v) < x(t)$, then for the predecessor~$v'$
	of~$v$ on~$\rho$, we have $y(v') > y(t)$; see Fig.~\ref{fig:dilation:3}. Then, Case~\ref{sc:dilation-v>t} can be applied to~$v'$
	instead of~$v$. Conversely, assume that $x(v) \geq x(t)$; see
	Fig.~\ref{fig:dilation:4}. In this case, starting from $t$, we repeatedly go
	upwards or to the right in~$\Gamma$. We cannot get stuck, since
	otherwise we have a vertex with no edge to the right and no edge
	upwards, a contradiction to face convexity. This implies that at some point we reach (intersect)
	the~$s$-$v$-subpath of~$\rho$ by going upwards or to the right, and thus
	we construct a distance-decreasing $s$-$t$-path shorter than $\rho$, a contradiction.
\end{proof}

Observe that the property we proved in Theorem~\ref{th:bounded-dilation} cannot be guaranteed for general greedy drawings, even if they are convex. For instance, Fig.~\ref{fig:dilation:example} depicts a greedy (non-orthogonal) convex drawing of a biconnected planar graph in which every distance-decreasing path from vertex $s$ to vertex $t$ can be made arbitrarily longer than the Euclidean distance between $s$ and $t$. This construction shows the existence of a family of greedy convex drawings with unbounded dilation.

\begin{figure}[tb]
	\centering \includegraphics[page=5]{dilation}
	\caption{The bold zigzag path is the shortest distance-decreasing
		$s$-$t$-path.}
	\label{fig:dilation:example}
\end{figure}

\subsection{Conflicts in rectilinear representations.} 
\medskip
In this subsection we define the concept of ``conflict'' between two vertices of a rectilinear representation $H$. Intuitively, two vertices are in conflict if they can have different relative positions (either right/left or top/bottom) in different drawings of $H$. Studying the pairs of conflicting vertices of $H$ will be fundamental for our results. To formalize this concept, we first define two directed acyclic graphs (DAGs) $D_x$ and $D_y$ associated with $H$, which have been already used in previous works on orthogonal compaction~\cite{DBLP:journals/comgeo/BridgemanBDLTV00,DBLP:journals/siamcomp/Tamassia87}; refer to Fig.~\ref{fi:dx_dy} for an illustration.

The DAG $D_x$ is obtained from~$H$ by orienting the horizontal edges 
from left to right and by contracting each maximal path of vertical edges 
into a node; $D_y$ is defined symmetrically on the maximal paths of 
horizontal edges. Note that $D_x$ and $D_y$ may have multiple 
edges: also, since the external face of~$H$ is orthoconvex, each of $D_x$ and $D_y$ 
has a single source and a single sink, that is, it is an $st$-digraph. 
For any vertex~$u$ of~$H$, we denote by $c_x(u)$ (by $c_y(u)$) the node 
of~$D_x$ (of $D_y$) corresponding to the maximal vertical (horizontal, respectively) 
path containing $u$ in $H$. If $c_x(u) \neq c_x(v)$, 
the notation $u \prec_x v$ ($u \nprec_x v$) denotes the existence (absence)
of a directed path from $c_x(u)$ to $c_x(v)$ in $D_x$. The notation $u \sim_x v$ 
means that either $u \prec_x v$ or $v \prec_x u$ holds, while $u \not\sim_x v$ 
means that none of them holds. The notations $u \prec_y v$, $u \nprec_y v$, 
$u \sim_y v$, and $u \not\sim_y v$ are symmetric for $D_y$. 
Clearly,~$\prec_x$ and $\prec_y$ are  transitive relations.

Lemma~\ref{le:comparable-x-y} 
shows that there is a directed path between any 
two vertices of $H$ in at least one 
of~$D_x$ and~$D_y$. For this, we first give an auxiliary lemma.

\begin{lemma}\label{le:same-coordinate}
	Let $\Gamma$ be a drawing of a \convex rectilinear representation $H$. For two distinct nodes $c_x(u)$ and $c_x(v)$ of $D_x$ such that there exists a horizontal line crossing both the vertical paths corresponding to $c_x(u)$ and to $c_x(v)$ in $\Gamma$, we have that $u \sim_x v$. A symmetric property holds for the nodes of $D_y$. 
\end{lemma}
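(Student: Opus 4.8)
The plan is to prove the sharper statement that if the point $p_u$ at which the given horizontal line $\ell$ meets the vertical path $P_u$ of $c_x(u)$ lies to the left of the point $p_v$ at which $\ell$ meets the vertical path $P_v$ of $c_x(v)$, then $u \prec_x v$. Since $c_x(u)\neq c_x(v)$, the points $p_u$ and $p_v$ are well defined and have distinct $x$-coordinates (a common crossing point would be a vertex lying on both maximal vertical paths, which is impossible), so one of the two orderings holds and we may assume this one; the goal is then to exhibit a directed path from $c_x(u)$ to $c_x(v)$ in $D_x$.

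The geometric ingredient I would use is that the horizontal segment $\sigma$ joining $p_u$ and $p_v$ is entirely contained in the closed region $\Pi$ bounded by the external boundary of $\Gamma$: since $H$ is \convex, this boundary is an orthoconvex polygon, so its intersection with $\ell$ is a single segment, and as $p_u$ and $p_v$ are points of $\Gamma$ they both lie in $\Pi$, hence so does all of $\sigma$. Moreover $\Pi$ is exactly the union of the closed internal faces of $\Gamma$, each of which is a rectangle by convexity. I would then traverse $\sigma$ from $p_u$ to $p_v$, i.e., from left to right, and build the directed path piece by piece. Whenever $\sigma$ crosses the interior of an internal rectangle $f$, it enters $f$ through the left side of $f$ and leaves through the right side of $f$; since the top side of $f$ is a horizontal path of $H$ from the top-left corner of $f$ to its top-right corner, it becomes in $D_x$ (that is, after orienting horizontal edges from left to right and contracting maximal vertical paths) a directed path from $c_x(L_f)$ to $c_x(R_f)$, where $L_f$ and $R_f$ are the maximal vertical paths carrying the left and right sides of $f$. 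Whenever $\sigma$ instead runs along horizontal edges of $H$, each such edge is by construction a directed edge of $D_x$. Concatenating these pieces in the order in which $\sigma$ meets them --- the first piece contributing a directed path starting at $c_x(u)$ because $\sigma$ leaves $p_u \in P_u$, and the last piece contributing one ending at $c_x(v)$ because $\sigma$ reaches $p_v \in P_v$ --- yields the required directed path, hence $u \prec_x v$. The statement for $D_y$ would follow by interchanging the two coordinate axes and tracing a vertical segment.

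The step I expect to be the main obstacle is making the informal ``traverse $\sigma$'' completely rigorous: one has to show that, at each of its points other than $p_u$ and $p_v$, $\sigma$ lies in the interior of an internal rectangle, or crosses a vertical edge (or a maximal vertical path) transversally, or runs along a horizontal edge --- never leaking into the external face and never getting stuck at a vertex. This is exactly where the two defining properties of a \convex representation come in: the absence of reflex ($270^\circ$) angles inside internal faces forces the face immediately to the east of any interior point of $\sigma$ to be internal, while orthoconvexity of the external face confines $\sigma$ to $\Pi$. One also has to check that consecutive pieces glue up correctly (the right side of one rectangle is the left side of the next, a corner of a rectangle is an endpoint of the adjacent horizontal subpath, and so on) and to handle the degenerate case in which a node of $D_x$ corresponds to a single vertex of $H$. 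All of this should reduce to a routine, if slightly tedious, case analysis on the at most four edges incident to a vertex of $\Gamma$ that lies on $\ell$.
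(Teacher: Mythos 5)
Your proof is correct and takes essentially the same approach as the paper's: both consider the horizontal segment between the two vertical paths and turn the sequence of traversed rectangular internal faces and overlapped horizontal edges into a directed path in $D_x$. The paper's version is merely terser, leaving implicit the confinement of the segment to the union of internal faces, which you justify explicitly via orthoconvexity of the external boundary.
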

\begin{proof}
	We give the proof for the first part of the statement; the argument for the 
	second~part is symmetric. Consider a horizontal line crossing both the 
	vertical paths corresponding to~$c_x(u)$ and $c_x(v)$ in $\Gamma$. Let $s_x$ 
	be the portion of this line between the two vertical paths. If $s_x$ does not 
	traverse any face, then it overlaps with a set of horizontal edges in $\Gamma$. 
	Thus, there is a path in~$D_x$ between~$c_x(u)$ and $c_x(v)$. Otherwise, 
	since every face of $H$ is rectangular, there exists a path in $D_x$ 
	between~$c_x(u)$ and $c_x(v)$ whose internal vertices are the vertical paths 
	containing the vertical edges of the faces traversed by $s_x$.
\end{proof}

\begin{lemma}\label{le:comparable-x-y}
For any two vertices $u$ and $v$ of a \convex rectilinear representation~$H$, at least one of the following holds:
\begin{enumerate*}[label=(\roman*)]
	\item\label{it:comp-u<xv} $u \sim_x v$ or
	\item\label{it:comp-u<yv} $u \sim_y v$.
\end{enumerate*}
\end{lemma}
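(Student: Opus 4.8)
The plan is to argue by contradiction: suppose that $u \not\sim_x v$ and $u \not\sim_y v$ simultaneously. The idea is that the first condition forces $c_x(u)$ and $c_x(v)$ to be ``incomparable'' in $D_x$, and by (the contrapositive of) Lemma~\ref{le:same-coordinate}, in \emph{every} rectilinear drawing $\Gamma$ of $H$ no horizontal line can cross both vertical paths through $u$ and $v$. Dually, $u \not\sim_y v$ means that in every drawing no vertical line crosses both the horizontal paths through $u$ and $v$. I would then fix one arbitrary rectilinear drawing $\Gamma$ of $H$ (which exists since $H$ is a valid rectilinear representation) and derive a geometric impossibility from these two facts.

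Concretely, let $I_x(u)$ and $I_x(v)$ be the closed vertical extents (projections onto the $y$-axis) of the maximal vertical paths through $u$ and $v$ in $\Gamma$, and similarly let $I_y(u)$, $I_y(v)$ be their horizontal extents. The ``no horizontal line crosses both vertical paths'' statement says $I_x(u) \cap I_x(v)$ contains at most a single point, i.e.\ these two intervals are essentially disjoint; so without loss of generality the vertical path through $u$ lies entirely above that through $v$ (in $y$-coordinate). Dually, $I_y(u)$ and $I_y(v)$ are essentially disjoint, so without loss of generality the horizontal path through $u$ lies entirely to the left of that through $v$. Now I would look at the vertex $u$ itself: it sits at the intersection of its maximal horizontal and vertical paths, so $y(u) \in I_x(u)$ and $x(u) \in I_y(u)$; likewise $y(v) \in I_x(v)$ and $x(v) \in I_y(v)$. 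From the two separation statements we get $y(u) > y(v)$ and $x(u) < x(v)$, so $u$ lies strictly to the upper-left of $v$, and the bounding box $R(u,v)$ is a genuine (nondegenerate) rectangle. The contradiction I expect to extract is that this rectangle must be ``bridged'': walking along the boundary of the external face of $H$ from $u$ to $v$ (or using orthoconvexity of the outer face together with rectangularity of inner faces), one can find either a horizontal line through the interior of $R(u,v)$ meeting the vertical path of $u$ and that of $v$, or a vertical line meeting the two horizontal paths — contradicting one of the two separation statements.

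The cleanest way to make that last step rigorous is probably to use the $st$-digraph structure noted in the text: since the external face of $H$ is orthoconvex, $D_x$ is an $st$-digraph with a single source $s_x$ and single sink $t_x$, and likewise $D_y$. In an $st$-digraph every node lies on a directed $s_x$--$t_x$ path, so $c_x(u)$ and $c_x(v)$ both lie on such paths; if they were on a \emph{common} directed path we would have $u \sim_x v$, contradiction, so they lie on two ``parallel'' directed paths. Geometrically, two parallel directed $s_x$--$t_x$ paths in $D_x$ bound a region, and I would argue that, because $u$ and $v$ are also related in the $D_y$ orientation only trivially, $u$ and $v$ end up on ``opposite sides'' in a way that forces, via Lemma~\ref{le:same-coordinate} applied in the other coordinate, the relation $u \sim_y v$ after all. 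In other words, the two DAGs together must ``cover'' every pair, and failing in $D_x$ forces success in $D_y$.

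I expect the main obstacle to be pinning down this covering argument precisely — i.e.\ showing that if neither a horizontal nor a vertical separating line can be promoted to a directed path in the respective DAG, then one actually exists. The danger is hand-waving over the orthoconvexity of the outer face and the exact placement of $u$ and $v$ relative to the separating lines (degenerate cases where $u$ or $v$ sits on the boundary of $R(u,v)$, or where one of the maximal paths is a single vertex). I would handle this by reducing to the single fixed drawing $\Gamma$, applying Lemma~\ref{le:same-coordinate} in the form: if the vertical path of $u$ is entirely above that of $v$ \emph{and} the horizontal path of $u$ is entirely left of that of $v$, then consider the lowest point of the vertical path through $u$ and the highest point of the vertical path through $v$; the vertical segment of $\Gamma$ joining their $y$-levels must, by planarity and rectangularity of internal faces plus orthoconvexity of the external face, meet a chain of faces whose horizontal edges give a directed path in $D_y$ from $c_y(u)$ to $c_y(v)$, yielding $u \prec_y v$ and hence $u \sim_y v$, the desired contradiction.
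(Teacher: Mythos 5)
Your setup is sound and overlaps with the paper's: both fix an arbitrary drawing $\Gamma$ and lean on Lemma~\ref{le:same-coordinate}. The contrapositive step is correct --- if $u\not\sim_x v$ then no horizontal line meets both maximal vertical paths, so their $y$-ranges are disjoint, and symmetrically for $u\not\sim_y v$; hence $u$ and $v$ sit in strictly diagonal position. The gap is in the step that is supposed to produce the contradiction. The claim in your second paragraph --- that convexity must then force a single horizontal line meeting both vertical paths, or a single vertical line meeting both horizontal paths --- is not the right target and is false in general: $u\sim_x v$ typically holds by \emph{transitivity} through intermediate nodes of $D_x$ even when the vertical paths of $u$ and $v$ share no $y$-level, so the nonexistence of such a direct separating line contradicts nothing. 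Your final paragraph correctly shifts the goal to exhibiting a directed path in one of the DAGs via a chain of faces, but the construction offered (``the vertical segment of $\Gamma$ joining their $y$-levels'') is not well defined: the two vertical paths sit at different $x$-coordinates, so no single vertical segment joins them, and it is unclear which faces that segment traverses or why their edges would concatenate into a consistently oriented path of $D_y$.

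The paper closes exactly this gap with a device your proposal lacks: with $u$, say, below and to the left of $v$, take a maximal path $\pi$ in $H$ starting at $u$ each of whose edges goes up or to the right. Rectangularity of the internal faces forces $\pi$ to terminate at the top-right corner of the external face, so $\pi$ runs from the quadrant below-left of $v$ to a point weakly above-right of $v$; hence either some vertex of $\pi$ has $c_x$ equal to $c_x(v)$ (giving $u\prec_x v$ directly), or $\pi$ crosses the horizontal line through $v$ to the left of $v$, or the vertical line through $v$ below $v$. In the first crossing case, the vertex $u_i$ of $\pi$ whose vertical path is hit satisfies $u\prec_x u_i$ (or $c_x(u)=c_x(u_i)$) by the monotonicity of $\pi$, and $u_i\prec_x v$ by Lemma~\ref{le:same-coordinate}; transitivity yields $u\prec_x v$. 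The other case is symmetric in $D_y$. Without some analogue of this monotone staircase to a corner, your ``the two DAGs together must cover every pair'' remains an assertion rather than a proof.
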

\begin{proof}
	If $c_x(u) = c_x(v)$, then $u$ and $v$ belong to the same vertical path, 
  and thus $u \sim_y v$. The case $c_y(u) = c_y(v)$ is symmetric. Suppose now 
  that $c_x(u)\neq c_x(v)$ and~$c_y(u)\neq c_y(v)$. Let~$\Gamma$ be any rectilinear drawing 
  of~$H$, and assume that $u$ is below and to the left of $v$; the other cases 
  are symmetric. Consider a maximal path~$\pi=(u_1,\ldots,u_k)$ in $H$, 
  with~$u_1=u$, such that for each edge $(u_i,u_{i+1})$, $u_i$ lies either below or to 
  the left of $u_{i+1}$. Since all internal faces of~$H$ are rectangles, $u_k$ 
  is a top-right corner of~$H$ on the external face.

	Note that, if there is a node $u_i$ in $\pi$ such that $c_x(u_i) = c_x(v)$, then $u \prec_x v$ and $u \prec_y v$. Otherwise, $\pi$ crosses either the horizontal line $\ell_h$ through $v$ (to the left of $v$) or the vertical line through $v$ (below $v$). In the former case, let $u_i$ be a node of $\pi$ such that the vertical path containing $u_i$ is crossed by $\ell_h$. Note that, by the construction of $\pi$, either $c_x(u)=c_x(u_i)$ or $u\prec_x u_i$. Also, by Lemma~\ref{le:same-coordinate}, $u_i \prec_x v$. By transitivity, $u \prec_x v$. In the latter case, a symmetric argument is used to prove that $u \prec_y v$. This concludes the proof of the lemma.
\end{proof}

\begin{figure}[tb]
	\centering
	\subcaptionbox{\label{fi:non-universal}}{\includegraphics[page=1]{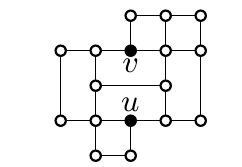}}
	\hfill
	\subcaptionbox{\label{fi:dx}}{\includegraphics[page=2]{dx-dy}}
	\hfill
	\subcaptionbox{\label{fi:dy}}{\includegraphics[page=3]{dx-dy}}
	\hfill
	\subcaptionbox{\label{fi:rigid-non-rectangular}}{\includegraphics[page=4]{dx-dy}}
	\caption{(a) A greedy realizable \convex rectilinear representation $H$ that is not universal greedy due to conflict $\{u,v\}$; (b)--(c) The DAGs $D_x$ and $D_y$ for $H$. (d) A non-\convex representation such that $u \in \cell(v)$ for any drawing, even though $u \prec_x v$ and $u \prec_y v$.}
	\label{fi:dx_dy}
\end{figure}

Let $u$ and $v$ be two vertices of $H$ such that $c_x(u) \neq c_x(v)$ 
and $c_y(u) \neq c_y(v)$. By Lemma~\ref{le:comparable-x-y}, at least one of 
$u \sim_x v$ and $u \sim_y v$ holds. 
If both such conditions hold, then the relative positions (left/right/top/bottom) of 
$u$ and $v$ are fixed (they are the same in any drawing of $H$); in this 
case, we prove that none of the two vertices lies in the cell of the other in 
any drawing of $H$ (refer to Lemma~\ref{lem:conflict-sat-greedy}). Conversely, this is 
not guaranteed when the two vertices are not comparable in one of the two DAGS, 
that is, when either $u \not\sim_x v$ or $u \not\sim_y v$ (refer to Theorem~\ref{th:universal-char}). 
In this case, we say that $u$ and $v$ form a \emph{conflict}, denoted by $\{u,v\}$; in particular, if 
$u \not\sim_x v$, then $\{u,v\}$ is an $x$-conflict, otherwise it is a $y$-conflict.

Let $\{u,v\}$ be an $x$-conflict. By Lemma~\ref{le:comparable-x-y}, we have that either 
$u \prec_y v$ or $v \prec_y u$; suppose that $u \prec_y v$, the other case is symmetric. 
Consider the topmost vertex $u'$ of the vertical path corresponding to 
$c_x(u)$ and the bottommost vertex $v'$ of the vertical path 
corresponding to $c_x(v)$. We say that $u'$ and~$v'$ are \emph{responsible} 
for the conflict~$\{u,v\}$;
e.g., in Fig.~\ref{fi:non-universal}, $\{u,v\}$ is an $x$-conflict with $u'=u$ and $v'=v$.
The next lemma proves that both $u'$ and $v'$ are flat vertices.

\begin{lemma}\label{le:responsible-flat}
Let $u'$ and $v'$ be the responsible vertices for a conflict $\{u,v\}$. Then, both $u'$ and $v'$ are flat vertices.
\end{lemma}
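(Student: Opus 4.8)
The plan is to prove that $u'$ is flat; the statement for $v'$ then follows by the point‑reflection symmetry, which exchanges $u$ with $v$, exchanges ``topmost vertex of $c_x(u)$'' with ``bottommost vertex of $c_x(v)$'', and, after relabelling, preserves both $u\not\sim_x v$ and $u\prec_y v$. I also use the left--right mirror symmetry, which reverses every edge of $D_x$ but affects neither $u\not\sim_x v$ nor $u\prec_y v$; this lets me assume that any single horizontal edge I have to locate at $u'$ points to the right.

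The first step is to determine the local shape at $u'$. As the topmost vertex of the maximal vertical path $c_x(u)$, the vertex $u'$ has no incident edge going upward, so $\deg(u')\le 3$. If $\deg(u')=3$, the three angles at $u'$ are positive multiples of $90^{\circ}$ summing to $360^{\circ}$, hence they equal $90^{\circ},90^{\circ},180^{\circ}$ and $u'$ is flat; and if $c_x(u)$ consists of the single vertex $u'$, then $u'$ has degree $2$ with two opposite (hence collinear) horizontal edges, so it is again flat. The one remaining possibility is $\deg(u')=2$ with one edge going down into $c_x(u)$ and one edge going right to a vertex $w$; in that case $u'$ carries a $270^{\circ}$ angle, which, since $H$ is convex (internal faces are rectangles), lies on the external face, so locally around $u'$ the external face covers every direction except the south‑east quadrant, and consequently $u'$ is the top of the whole column of $H$ at abscissa $x(u')$. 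I now derive a contradiction from this configuration.

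Fix a rectilinear drawing $\Gamma$ of $H$ and set $x_0:=x(u')$. Since $u\not\sim_x v$, Lemma~\ref{le:same-coordinate} forbids any horizontal line from meeting both $c_x(u)$ and $c_x(v)$; together with $u\prec_y v$ this forces $y(u')<y(v')$, so $c_x(v)$, and in particular $v$, lies strictly above $u'$; moreover, since $H$ is orthoconvex and its column at $x_0$ tops out at $u'$, the path $c_x(v)$ cannot lie on the line $x=x_0$, hence it lies either entirely to the left or entirely to the right of it. The edge $u'w$ yields a directed edge $c_x(u)\to c_x(w)$ in $D_x$, so it suffices to exhibit a directed path in $D_x$ between $c_x(u)$ and $c_x(v)$. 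The idea is to trace the upper‑right boundary of $H$ out of $u'$ — the maximal $x,y$‑monotone staircase that begins with $u'w$ and afterwards always goes up or right, which by convexity ends at a top‑right corner of the external face, exactly as in the proof of Lemma~\ref{le:comparable-x-y} — and to argue that, because $c_x(v)$ lies strictly above $u'$ and $H$ is orthoconvex, this staircase (or, in the ``$c_x(v)$ to the left'' case, a suitable monotone staircase issued from $v'$) must reach, or be crossed by a horizontal line that also crosses, the path $c_x(v)$; by Lemma~\ref{le:same-coordinate} and the directed edges along the staircase this produces the required directed path, contradicting $u\not\sim_x v$. The step I expect to be the main obstacle is precisely this geometric claim that such a staircase cannot ``miss'' $c_x(v)$: making it rigorous seems to require combining orthoconvexity with the one‑sided emptiness forced by the $270^{\circ}$ angle at $u'$ (and, when $v'$ is also not flat, by the analogous corner structure at $v'$), so as to exclude the ``sheared'' configuration in which $c_x(u)$ forms a low stretch of one side boundary of $H$ while $c_x(v)$ forms a higher stretch of the opposite side boundary; this is where all the hypotheses — $u\not\sim_x v$, $u\prec_y v$, and the extremality of $u'$ and $v'$ on their respective paths — are genuinely used together.
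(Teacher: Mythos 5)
Your reduction to a single hard case is sound and is in fact the same reduction the paper uses: since $u'$ is the topmost vertex of its maximal vertical path, it has no upward edge, and the only way it can fail to be flat is to carry a $270^{\circ}$ angle, which (all internal faces being rectangles) must lie on the external face. The easy subcases (degree $3$, and degree $2$ with two collinear horizontal edges) are handled correctly, as is the observation that $v'$ must lie strictly above $u'$ because Lemma~\ref{le:same-coordinate} and $u\not\sim_x v$ forbid any horizontal line from meeting both vertical paths. However, the proof has a genuine gap exactly where you flag it: the reflex case is never actually excluded. Your plan --- trace the maximal up/right staircase out of $u'$ and argue that it cannot ``miss'' $c_x(v)$ --- is left as a sketch, and you yourself identify the sheared configuration ($c_x(u)$ low on one side of the outer boundary, $c_x(v)$ high on the other) as an unresolved obstacle. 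Ruling out precisely that configuration is the entire content of the lemma in this case, so the argument is incomplete as written.

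The paper closes this case without any staircase chase, using orthoconvexity of the outer polygon $P$ directly. With your normalization ($u'$ reflex, its single horizontal edge pointing right), $u'$ is a corner on the top--left part of the boundary of $P$. The vertical line through $u'$ meets $P$ in a single segment whose top endpoint is $u'$, so every point of $P$ strictly above the height of $u'$ avoids that vertical line; combined with the fact that the horizontal cross-section of $P$ at the height of $u'$ has $u'$ as its left endpoint (the reflex angle makes everything immediately to the left exterior), one gets that the entire portion of $P$ strictly above $u'$ lies strictly to the right of $u'$. This is exactly what kills the sheared configuration: $c_x(v)$, which you showed lies strictly above $u'$, cannot sit on the left. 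Then $v'$ is above and to the right of $u'$, and the maximal up/right path from $u'$ --- which necessarily begins with the horizontal edge at $u'$ --- gives $u'\prec_x v'$ by the argument of Lemma~\ref{le:comparable-x-y}, contradicting $u\not\sim_x v$. So the missing ingredient is not a more elaborate staircase argument but the one-sided position, relative to $u'$, of the whole part of the outer polygon above~$u'$.
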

\begin{proof}
Assume, without loss of generality, that the conflict is an $x$-conflict and that $u \prec_y v$. The convexity of $H$ implies that also $u'\prec_y v'$. We show that $u'$ is a north-oriented flat vertex, the argument for showing that $v'$ is a south-oriented flat vertex is symmetric. Since by definition $u'$ is the topmost vertex  of a vertical path, it must be either north-oriented flat or reflex. In the latter case, $u'$ lies on the outer face of $H$; in particular, it lies on the top boundary of the orthoconvex polygon delimiting the outer face of $H$; assume that it also lies on the left side, the other case is symmetric. Since $u' \prec_y v'$, we have that $v'$ lies in the part of the orthoconvex polygon that is above $u'$. Since every point of this part also lies to the right of $u'$, we have that $u' \prec_x v'$, a contradiction. Therefore, $u'$ is a north-oriented flat vertex.
\end{proof}

A conflict is \emph{\sat} in a drawing $\Gamma$ of~$H$ if none of the two 
vertices that are responsible for it lies in the cell of the other. Finally, a convex rectilinear representation
$H$ is \emph{conflict-free} if it has no conflict.

\begin{lemma}\label{lem:conflict-sat-greedy}
  Let $H$ be a \convex rectilinear representation of a biconnected graph. A 
  rectilinear drawing $\Gamma$ of $H$ is greedy if and only if every conflict 
  is \sat~in~$\Gamma$.
\end{lemma}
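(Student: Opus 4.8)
The plan is to reduce everything to Theorem~\ref{th:papa-charact}: $\Gamma$ is greedy if and only if no vertex lies in the cell of another. The forward direction is then immediate, since a greedy drawing has, in particular, no responsible vertex of a conflict inside the cell of the other, so every conflict is resolved. For the converse I would assume every conflict is resolved, suppose for a contradiction that $u\in\cell(v)$ for some $u\neq v$, and argue as follows. If $u$ and $v$ lie on a common maximal horizontal or vertical path of $H$ (that is, $c_x(u)=c_x(v)$ or $c_y(u)=c_y(v)$), then the neighbor of $v$ on that path toward $u$ lies strictly between them on a line and is therefore strictly closer to $u$ than $v$ is --- a contradiction. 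So from now on $c_x(u)\neq c_x(v)$ and $c_y(u)\neq c_y(v)$, and by Lemma~\ref{le:comparable-x-y} at least one of $u\sim_x v$, $u\sim_y v$ holds; using the $90^\circ$-rotation and reflection symmetries of the orthogonal convention I would normalize so that $u\prec_y v$ (hence $y(u)<y(v)$) whenever $u\sim_y v$.

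\emph{Case 1: both $u\sim_x v$ and $u\sim_y v$.} Normalizing further, $u\prec_x v$ and $u\prec_y v$, so $u$ lies strictly to the south-west of $v$ in $\Gamma$. Here I would use convexity to exhibit a neighbor $w$ of $v$ lying in $R(u,v)$ with either $y(w)=y(v)$ and $x(w)<x(v)$, or $x(w)=x(v)$ and $y(w)<y(v)$; concretely, $w$ is the penultimate vertex of an $x,y$-monotone staircase path from $u$ to $v$ inside $H$, which exists because every internal face of $H$ is a rectangle and its outer face is orthoconvex. Since then $x(u)\le x(w)\le x(v)$ and $y(u)\le y(w)\le y(v)$, a one-line computation gives $d(u,w)<d(u,v)$, contradicting $u\in\cell(v)$. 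This is the step that genuinely needs convexity rather than mere $D_x$/$D_y$-comparability --- Fig.~\ref{fi:rigid-non-rectangular} shows a non-convex representation with $u\prec_x v$ and $u\prec_y v$ in which $u\in\cell(v)$ in every drawing --- so I expect it to be the main obstacle of the whole proof.

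\emph{Case 2: exactly one comparability fails}, say $u\not\sim_x v$ (the case $u\not\sim_y v$ is symmetric under swapping the two axes). Then $\{u,v\}$ is an $x$-conflict with $u\prec_y v$; let $u'$ be the topmost vertex of $c_x(u)$ and $v'$ the bottommost vertex of $c_x(v)$, the vertices responsible for it. By Lemma~\ref{le:responsible-flat} both $u'$ and $v'$ are flat, and by convexity $u'\prec_y v'$, so $x(u')=x(u)$, $x(v')=x(v)$ and $y(u)\le y(u')<y(v')\le y(v)$. If $v\neq v'$, then the south-neighbor $s$ of $v$ along $c_x(v)$ has $x(s)=x(v)$ and $y(u)<y(s)<y(v)$, whence $d(u,s)<d(u,v)$, a contradiction. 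If $v=v'$, then writing $\delta=x(v)-x(u)$ and inspecting the cell of the flat vertex $v'$ (Fig.~\ref{fi:ortho-cells}), the two inequalities $d(u,v)<d(u,w)$ for the horizontal neighbors $w$ of $v'$ confine $\delta$ to precisely the open interval that characterizes $u'\in\cell(v')$; hence $u\in\cell(v)$ would force $u'\in\cell(v')$, contradicting that $\{u,v\}$ is resolved. In every case we reach a contradiction, so $\Gamma$ is greedy.
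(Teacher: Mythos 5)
Your overall skeleton (reduce to Theorem~\ref{th:papa-charact}, dispose of the same-path case, then split on which comparabilities hold) is sound, and your Case~2 is essentially correct: the south-neighbour argument for $v\neq v'$ and the observation that $u\in\cell(v)$ and $u'\in\cell(v')$ are governed by the same strip inequalities together reproduce, in slightly different packaging, the flat-vertex analysis that constitutes the paper's sufficiency proof. The problem is Case~1. There you rest the entire argument on the claim that a staircase path from $u$ to $v$ ``exists because every internal face of $H$ is a rectangle and its outer face is orthoconvex.'' That justification cannot be right as stated: if convexity alone implied staircase paths between all pairs, then by the paper's Corollary~\ref{co:staircase-char} every \convex representation would be conflict-free, contradicting Fig.~\ref{fi:non-universal}. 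So the staircase claim, if true, must use the hypotheses $u\prec_x v$ and $u\prec_y v$ in an essential way --- and proving it from those hypotheses is not routine. Indeed, the pointwise statement ``$u\prec_x v$ and $u\prec_y v$ imply a staircase from $u$ to $v$'' is (a strengthening of) the hard direction of Corollary~\ref{co:staircase-char}, which the paper obtains only as a consequence of Theorem~\ref{th:universal-char}, i.e.\ downstream of the very lemma you are proving. You flag this step yourself as ``the main obstacle of the whole proof,'' but then you assert it rather than prove it; that is a genuine gap, not a detail.

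What is actually needed in Case~1 is weaker and more targeted than a full staircase: a neighbour $w$ of $v$ inside $\bbox{u}{v}\setminus\{v\}$, or directly the emptiness of the south-west portion of $\cell(v)$. Establishing even this requires two ingredients you never supply. First, if $v$ has both a west and a south neighbour, the $90$-degree angle between them and Property~\ref{pr:right-angles} forbid any vertex (in particular $u$) from lying properly inside the bounding box of those two neighbours. Second, if $v$ lacks a south neighbour (so the relevant angle at $v$ is flat or an outer corner), one needs a planarity/blocking argument: any directed $D_x$-path from $c_x(u)$ to $c_x(v)$ consists of vertical paths at $x$-coordinates in $(x(u),x(v)]$ joined by rightward horizontal edges, and since none of these vertical paths can cross the horizontal edge from $v$'s west neighbour $w_W$ to $v$, the hypothesis $u\prec_x v$ forces $x(u)\le x(w_W)$, which places $u$ outside the strip $\cell(v)$. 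This is exactly how the paper handles the case of two comparable vertices (``$u$ does not lie to the right of the left neighbor of $v$''), after first using orthoconvexity, rectangularity and Property~\ref{pr:right-angles} to reduce all of Case~1 to flat vertices. Until you replace the staircase assertion with an argument of this kind, Case~1 --- and hence the sufficiency direction --- is not proved.
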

\begin{proof}
	By Theorem~\ref{th:papa-charact}, drawing $\Gamma$ is greedy if and only if 
  for any vertex $v$ of~$H$, we have that $\cell(v)$ contains no vertex 
  distinct from $v$. This already proves the necessity, since a conflict that 
  is not \sat implies that a vertex lies in the cell of another vertex, by
  definition.
	
	We now prove the sufficiency. First note that, if~$v$ is a vertex on the 
	external face, then the portion of~$\cell(v)$ that belongs to the external 
	face is empty, since the external boundary defines an orthoconvex polygon. 
	Also, since all internal faces of~$H$ are rectangles, there is no internal 
	angle of $270$ degrees. Further, by Property~\ref{pr:right-angles}, if two 
	edges incident to a vertex~$v$ create an angle of $90$ degrees, then the 
	portion of $\cell(v)$ delimited by these two edges is always empty. Thus, the 
	only possible vertices whose cells may be non-empty in $\Gamma$ 
	are the flat vertices.
	Let $v$ be a flat vertex, and assume that the flat angle at~$v$ is south-oriented (the other cases are symmetric). Consider any other vertex~$u$. If~$u$ 
	and~$v$ are not in an $x$-conflict, then either $u\prec_x v$ or~$v\prec_x u$, say 
	the former. Then,~$u$ lies to the left of $v$ in $\Gamma$; also, $u$ does not 
	lie to the right of the left neighbor of $v$, which implies that it lies to 
	the left of $\cell(v)$. Finally, if $u$ and $v$ are in an $x$-conflict, then this is \sat by assumption. Hence, $u \notin \cell(v)$ 
	by definition. Repeating this argument for every flat vertex proves the statement.
\end{proof}

\section{Universal Greedy Rectilinear Representations}\label{se:universal}

We recall that a rectilinear representation $H$ is universal greedy if every rectilinear drawing of $H$ is greedy.
In this section, we first provide a linear-time algorithm to test whether a rectilinear representation is universal greedy (refer to Section~\ref{sse:testing-universal}) and then we describe a full generative scheme for this family of representations (refer to Section~\ref{sse:generative-universal}).

\subsection{Testing algorithm}\label{sse:testing-universal} 
\medskip
Our algorithm to test whether a convex rectilinear representation is universal greedy is based on the following concise characterization. 

\begin{theorem}\label{th:universal-char}
	Let $H$ be a \convex rectilinear representation of a biconnected plane graph. Then, $H$ is universal greedy if and only if it is conflict-free.
\end{theorem}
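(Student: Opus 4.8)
The plan is to derive both implications from Lemma~\ref{lem:conflict-sat-greedy}, which characterizes the greedy rectilinear drawings of a \convex representation of a biconnected graph as exactly those in which every conflict is \sat. The ``if'' direction is then immediate: if $H$ is conflict-free, then in every rectilinear drawing $\Gamma$ of $H$ every conflict is vacuously \sat, so $\Gamma$ is greedy by Lemma~\ref{lem:conflict-sat-greedy}, and hence $H$ is universal greedy. All the work is in the ``only if'' direction, which I would prove contrapositively: assuming $H$ has a conflict, I will exhibit a single rectilinear drawing $\Gamma$ of $H$ that is not greedy.

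So fix a conflict $\{u,v\}$. By the symmetry between $D_x$ and $D_y$ and between the two vertices, I may assume it is an $x$-conflict with $u\prec_y v$, so its responsible vertices are the topmost vertex $u'$ of the vertical path $c_x(u)$ and the bottommost vertex $v'$ of the vertical path $c_x(v)$. By Lemma~\ref{le:responsible-flat} (whose proof also yields the orientations and the relation $u'\prec_y v'$), $u'$ is a north-oriented flat vertex and $v'$ is a south-oriented flat vertex; moreover $u'\not\sim_x v'$, since $c_x(u')=c_x(u)$ and $c_x(v')=c_x(v)$. The target is a rectilinear drawing $\Gamma$ of $H$ with $x(u')=x(v')$. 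Granting this, the rest is short. In $\Gamma$, the relation $u'\prec_y v'$ forces $y(u')<y(v')$, so $v'$ lies strictly above $u'$ on the same vertical line; since $u'$ has a north-oriented flat angle, its left neighbour $a$ and right neighbour $b$ satisfy $x(a)<x(u')<x(b)$, and $v'$ is not a neighbour of $u'$ (because $c_x(u')\neq c_x(v')$ and $c_y(u')\neq c_y(v')$). Comparing squared distances, every point $(x(u'),y)$ with $y>y(u')$ is strictly closer to $u'$ than to $a$, to $b$, and to the possible third (downward) neighbour of $u'$, hence lies in $\cell(u')$; in particular $v'\in\cell(u')$. Therefore the conflict $\{u,v\}$ is not \sat~in $\Gamma$, so $\Gamma$ is not greedy by Lemma~\ref{lem:conflict-sat-greedy}, and $H$ is not universal greedy.

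The remaining task is to realize a rectilinear drawing of $H$ with $x(u')=x(v')$, and this is the step I expect to require the most care. I would invoke the standard coordinate-assignment theory for bend-free orthogonal representations~\cite{DBLP:journals/siamcomp/Tamassia87,DBLP:journals/comgeo/BridgemanBDLTV00}: a rectilinear drawing of $H$ is determined by $x$-coordinates for the nodes of $D_x$ (the columns) that strictly increase along every edge of $D_x$, together with $y$-coordinates for the nodes of $D_y$ (the rows) that strictly increase along every edge of $D_y$. Since $u'\not\sim_x v'$, the nodes $c_x(u')$ and $c_x(v')$ are incomparable in $D_x$, so identifying them into a single node keeps $D_x$ acyclic (a directed cycle through the new node would give a directed path between $c_x(u')$ and $c_x(v')$ in $D_x$) and still an $st$-digraph (neither of those nodes can be the source or the sink of $D_x$, which are comparable to all nodes); an $st$-ordering of the contracted DAG, pulled back to $D_x$, yields an $x$-assignment that strictly increases along $D_x$, is injective on the remaining columns, and satisfies $x(c_x(u'))=x(c_x(v'))$. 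Combining it with any $st$-ordering of $D_y$ gives $\Gamma$. The one coincidence to justify is that placing $c_x(u')$ and $c_x(v')$ on a common vertical line creates no overlap or crossing: but $u'\prec_y v'$ forces $y(u')<y(v')$ in any valid $y$-assignment, so the vertical path $c_x(u')$ (lying weakly below its top vertex $u'$) and the vertical path $c_x(v')$ (lying weakly above its bottom vertex $v'$) span disjoint $y$-ranges — consistently with Lemma~\ref{le:same-coordinate} — while no other pair of columns or rows shares a coordinate; hence $\Gamma$ is a genuine rectilinear drawing of $H$, which completes the construction and the proof.
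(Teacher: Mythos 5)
Your proof is correct and reaches the same end state as the paper's --- a drawing in which the two conflicting columns share an $x$-coordinate, so that one responsible vertex lies in the other's cell --- but it gets there by a genuinely different and heavier route. The paper starts from an \emph{arbitrary existing} drawing $\Gamma$ of $H$, first re-chooses the conflicting pair so that the two columns are \emph{consecutive along the $x$-axis in $\Gamma$} (if some column lies strictly between them, transitivity of $\prec_x$ shows one of the two sub-pairs is again an $x$-conflict), and then simply translates one column horizontally until the two coincide; consecutiveness plus convexity make the planarity of the perturbed drawing immediate, so no global coordinate-assignment machinery is needed. You instead build a drawing from scratch by contracting $c_x(u')$ and $c_x(v')$ in $D_x$ and taking topological orders, which is clean but shifts all the weight onto the claim that \emph{every} pair of assignments increasing along the edges of $D_x$ and $D_y$ yields a planar drawing of $H$. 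That claim is essentially the turn-regularity compaction theorem of~\cite{DBLP:journals/comgeo/BridgemanBDLTV00}, and it does hold here (convexity together with Lemma~\ref{le:comparable-x-y} rules out kitty corners), but the paper invokes turn-regularity only for \emph{conflict-free} representations, whereas in the only-if direction $H$ has a conflict by hypothesis; if you keep your route, you should explicitly argue or cite that every \convex representation is turn-regular rather than treating the coordinate-assignment bijection as standard for orthoconvex outer faces. The remaining ingredients check out: the cell computation showing that the vertical ray above the north-oriented flat vertex $u'$ lies in $\cell(u')$, the acyclicity and single-source/sink property of the contracted DAG, and the disjointness of the $y$-ranges of the two merged columns (via $u'\prec_y v'$ and Lemma~\ref{le:same-coordinate}) are all sound, and your ``if'' direction coincides with the paper's appeal to Lemma~\ref{lem:conflict-sat-greedy}.
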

\begin{proof}
By Lemma~\ref{lem:conflict-sat-greedy}, if $H$ is conflict-free, every rectilinear drawing of $H$ is greedy (note that a rectilinear representation may be conflict-free without being \convex, which would imply that it is not universal greedy; see Fig.~\ref{fi:rigid-non-rectangular}).
	
We now prove the other direction. Suppose, for a contradiction, that $H$ is universal greedy but not conflict-free. Let $\Gamma$ be 
any rectilinear drawing of $H$. Consider two vertices $u$ and $v$ that are 
responsible for a conflict in $H$; assume without loss of generality that $\{u,v\}$
is an $x$-conflict, that is, $u \not\sim_x v$. We can further assume that $u$ and~$v$ are consecutive along the $x$-axis
in $\Gamma$, that is, there is no vertex $w$ such that $x(u) < x(w) < x(v)$. Indeed, if such a vertex $w$ 
exists (which implies $w \nprec_x u$ and $v \nprec_x w$), at least one of $u \nprec_x w$ and $w \nprec_x v$ 
holds, as otherwise $u \prec_x v$. Hence, we could have selected either $u$ 
and $w$ or $w$ and~$v$ instead of $u$ and $v$.

First observe that, if $x(u) = x(v)$, then $\Gamma$ is not greedy, since $u \in \cell(v)$ 
and $v \in \cell(u)$. On the other hand, if $x(u) < x(v)$, then we can transform $\Gamma$ into a 
drawing $\Gamma'$ of $H$ by moving~$u$ and all the vertices in its vertical path to the right
of a quantity $x(v)-x(u)$, so that $x(u)=x(v)$. Since~$u$ and~$v$ are consecutive along the $x$-axis 
in $\Gamma$ and since $H$ is convex, $\Gamma'$ is still planar but not 
greedy, which contradicts the fact that $H$ is universal greedy.
\end{proof}

Before giving our testing algorithm, we observe that it is possible to state an alternative characterization of universal greedy representations as a corollary of Theorem~\ref{th:universal-char}. Namely, suppose that $H$ is a convex rectilinear representation of a biconnected plane graph and suppose that there exists a staircase path from any two vertices $u$ and $v$. This immediately implies that either $u$ and $v$ belong to the same horizontal or vertical path in $H$ (i.e., $c_x(u) = c_x(v) \lor c_y(u) = c_y(v)$), or there exist two directed paths connecting $c_x(u)$ and $c_x(v)$ in $D_x$ and $c_y(u)$ and $c_y(v)$ in $D_y$. Thus, in this case, $H$ is conflict-free. It is not difficult to prove that the reverse is also true, which thanks to Theorem~\ref{th:universal-char} implies the following:   

\begin{corollary}\label{co:staircase-char}
	$H$ is universal greedy if and only if there exists a staircase path between any two vertices of~$H$.
\end{corollary}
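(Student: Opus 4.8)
The plan is to prove Corollary~\ref{co:staircase-char} by leveraging Theorem~\ref{th:universal-char}, which already reduces being universal greedy to being conflict-free (for a \convex representation $H$ of a biconnected plane graph). So the whole task becomes: \emph{$H$ is conflict-free if and only if there is a staircase path between any two vertices of $H$}. One direction is essentially sketched in the paragraph preceding the corollary; I would make it precise. Suppose there is a staircase path $\pi$ between every pair $u,v$. If $c_x(u)=c_x(v)$ or $c_y(u)=c_y(v)$ there is nothing to check (by definition $u\sim_y v$ or $u\sim_x v$, respectively, so $\{u,v\}$ is not a conflict). Otherwise, orient $\pi$ from $u$ to $v$; since $\pi$ is $x,y$-monotone, contracting its maximal vertical subpaths exhibits a directed path in $D_x$ from $c_x(u)$ to $c_x(v)$ (or from $c_x(v)$ to $c_x(u)$, depending on the horizontal direction of monotonicity), hence $u\sim_x v$; symmetrically, contracting its maximal horizontal subpaths gives $u\sim_y v$. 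Since \emph{both} $u\sim_x v$ and $u\sim_y v$ hold, $\{u,v\}$ is not a conflict. As this holds for every pair, $H$ is conflict-free.

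For the converse, assume $H$ is conflict-free and fix two vertices $u,v$; I want to build a staircase path between them. If $c_x(u)=c_x(v)$, then $u$ and $v$ lie on a common maximal vertical path of $H$, which is itself a ($y$-monotone) staircase path — done; the case $c_y(u)=c_y(v)$ is symmetric. So assume $c_x(u)\neq c_x(v)$ and $c_y(u)\neq c_y(v)$. Since $H$ is conflict-free, neither $\{u,v\}$ is an $x$-conflict nor a $y$-conflict, i.e.\ $u\sim_x v$ \emph{and} $u\sim_y v$; fix a drawing $\Gamma$ and WLOG assume $u$ is below-left of $v$, so $u\prec_x v$ and $u\prec_y v$. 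I would construct the staircase greedily in $\Gamma$: starting at $u$, repeatedly take an incident edge that goes right or up without ever overshooting $v$ in the $x$- or $y$-coordinate. The key local claim is that as long as the current vertex $w\neq v$ satisfies $w\preceq_x v$ and $w\preceq_y v$ with at least one strict, such an edge exists. This is where face convexity does the work: if $w$ had no rightward and no upward incident edge, then both edges at $w$ (if $\deg(w)=2$) go left and/or down, forcing a $270^\circ$ angle on some internal face or a reflex corner of the outer polygon on the "wrong" side — contradicting that internal faces are rectangles and the outer face is orthoconvex (this is exactly the type of argument used in Case~\ref{sc:dilation-v<t} of the proof of Theorem~\ref{th:bounded-dilation}, "we cannot get stuck, since otherwise we have a vertex with no edge to the right and no edge upwards, a contradiction to face convexity"). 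I also need that one can always choose such an edge that does not pass $v$'s coordinates: if the only rightward edge at $w$ would reach a vertex $w'$ with $w'\not\preceq_x v$, then there would be a vertical line separating $w'$ from $v$ through the interior, and combined with $u\prec_x v$ this would, via Lemma~\ref{le:same-coordinate} and orthoconvexity, produce a vertex on the top/left boundary that contradicts conflict-freeness or comparability — here I must be careful and may instead argue directly that the face of $H$ incident to $w$ "between" the rightward edge and $v$ is a rectangle whose opposite side still satisfies the invariant, keeping the walk bounded inside $R(u,v)$ extended appropriately.

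To make the termination and boundedness clean, rather than a purely local walk I would phrase it as an induction on $d_x(u,v)+d_y(u,v)$, the combined length (number of $D_x$-edges plus $D_y$-edges) of shortest directed paths witnessing $u\prec_x v$ and $u\prec_y v$, mirroring the inductive structure of Theorem~\ref{th:bounded-dilation}. In the base case $u,v$ are adjacent via a single horizontal or vertical edge and we are done. In the inductive step, pick the first edge of a shortest directed $D_x$-path (or $D_y$-path) witnessing comparability; this corresponds, back in $H$, to an edge or short staircase from $u$ to an intermediate vertex $w$ with $u\prec_x w\preceq_x v$ and $u\preceq_y w\preceq_y v$, and with $d_x(w,v)+d_y(w,v)$ strictly smaller; moreover $w$ is still comparable to $v$ in both DAGs because $\prec_x,\prec_y$ are transitive and conflict-freeness was global. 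By induction there is a staircase from $w$ to $v$; prepending the $u$–$w$ step (which is monotone in the same directions) yields a staircase from $u$ to $v$. Finally I would conclude with Theorem~\ref{th:universal-char}: conflict-freeness is equivalent to being universal greedy, so the staircase-path condition is equivalent to $H$ being universal greedy, which is the statement.

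The main obstacle I anticipate is the careful handling of the "don't overshoot" requirement in the inductive step — i.e.\ guaranteeing that the intermediate vertex $w$ produced by following one DAG edge is simultaneously comparable to $v$ in \emph{both} $D_x$ and $D_y$ and does not jump strictly past $v$ in either coordinate. Pure transitivity gives comparability of $w$ with $v$ only in the DAG whose edge we followed; for the other DAG I need to use convexity (rectangular internal faces / orthoconvex outer face) together with Lemma~\ref{le:same-coordinate} to argue that the relevant horizontal (resp.\ vertical) slab between $w$ and $v$ is traversable, which is precisely where the non-trivial geometric reasoning lives; everything else is bookkeeping on top of results already established in the excerpt.
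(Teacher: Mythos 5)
Your route is the paper's: via Theorem~\ref{th:universal-char}, reduce the corollary to the equivalence ``conflict-free $\Leftrightarrow$ staircase path between every pair,'' prove the staircase-to-conflict-free direction by reading an $x,y$-monotone path as directed paths in both $D_x$ and $D_y$, and treat the converse as the substantive part. Your forward direction is exactly the paper's sketch, made precise and correct. For the converse the paper offers only the sentence ``it is not difficult to prove that the reverse is also true,'' so you are already more explicit than the source; nevertheless, the obstacle you flag is a genuine gap in your write-up, and it is worth recording how to close it.

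In your inductive step the only problematic situation is a \emph{single} overshoot: the current vertex $w$ (with $w\prec_x v$ and $w\prec_y v$) has, say, a rightward edge to a vertex $w'$ with $v\prec_x w'$, and no upward edge. (The double-overshoot case dies on Property~\ref{pr:right-angles} exactly as you say, since $v$ would lie properly inside the bounding box of the two neighbors; and if $w$ has neither an upward nor a rightward edge it is a degree-two corner with the graph region to its south-west, so orthoconvexity of the outer polygon forbids any vertex strictly north-east of $w$, hence $v$ cannot exist there.) In the single-overshoot case, orthoconvexity first forces the face $f$ lying above the edge $(w,w')$ to be internal, hence a rectangle, with $w$ a north-oriented flat vertex on its bottom side and with $x(w)<x(v)<x(w')$ inside the horizontal extent of $f$. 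Since $v$ cannot lie in the open interior of $f$, the vertical path $c_x(v)$ lies entirely at or above the top side of $f$, while $c_x(w)$ lies entirely at or below its bottom side. Any directed path in $D_x$ from $c_x(w)$ to $c_x(v)$ would realize a connected union of vertical paths and horizontal edges confined to the vertical slab spanned by $[x(w),x(v)]$, yet joining a point at the height of the bottom of $f$ to a point at or above its top; such a set must meet the open interior of $f$, which is impossible. Hence $w\not\sim_x v$, i.e.\ $\{w,v\}$ is an $x$-conflict, contradicting conflict-freeness. (This is the same unproved fact the paper leans on in Case~2 of the proof of Theorem~\ref{th:generative-scheme}: a rectangular face of a universal greedy representation cannot carry flat vertices on opposite sides.) With this, every step of your walk stays dominated by $v$, strictly increases $x+y$, and therefore terminates at $v$ or on $v$'s horizontal or vertical path, completing the staircase and the corollary.
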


\noindent We now present our efficient testing algorithm.

\begin{theorem}\label{th:universal-test}
Let $H$ be a rectilinear representation of an $n$-vertex biconnected plane 
graph. It can be tested in $O(n)$ time whether $H$ is universal 
greedy.
\end{theorem}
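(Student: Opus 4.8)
The goal is a linear-time test for universal greediness. By Lemma~\ref{le:necessityconvex}, a greedy realizable representation must be \convex, and by Theorem~\ref{th:universal-char} a \convex representation of a biconnected graph is universal greedy iff it is conflict-free. So the algorithm first checks convexity, then checks for the absence of conflicts. Both checks can be done in $O(n)$ time, which I will establish in three stages: (1) build the structures $D_x$ and $D_y$ in linear time; (2) test convexity; (3) test conflict-freeness via a reachability computation on $D_x$ and $D_y$.

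\textbf{Step 1: building $D_x$ and $D_y$.} Given $H$ with its embedding and the fixed edge orientations, I would first compute the maximal vertical paths and the maximal horizontal paths in $O(n)$ time by a simple traversal (following the fixed edge directions through degree-$2$ vertices with the appropriate flat angle). Contracting each maximal vertical path to a node and orienting each horizontal edge left-to-right yields $D_x$; symmetrically for $D_y$. Since $H$ is planar with $O(n)$ vertices and edges, $D_x$ and $D_y$ each have $O(n)$ nodes and edges (counting multiplicities), and the construction is $O(n)$. Along the way I also record, for each vertex $u$ of $H$, the nodes $c_x(u)$ and $c_y(u)$.

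\textbf{Step 2: testing convexity.} Convexity means every internal face is a rectangle and the outer face is orthoconvex. Rectangularity of an internal face is equivalent to that face having exactly four $90$-degree angles and no $270$-degree angle on its boundary; this is checked by one pass over all face descriptions, which the representation supplies, in $O(n)$ time. Orthoconvexity of the outer boundary is equivalent to the outer face having exactly four $270$-degree reflex corners arranged so that the turning sequence makes one monotone "staircase" on each of the four sides; concretely, walking the outer boundary and checking that the sequence of left/right turns never produces a "notch" (an $x$- or $y$-extremal run that is non-monotone) takes $O(n)$ time. If either test fails, output "not universal greedy" (indeed, not even greedy realizable).

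\textbf{Step 3: testing conflict-freeness.} By definition, $H$ is conflict-free iff for every pair of vertices $u,v$ with $c_x(u)\neq c_x(v)$ and $c_y(u)\neq c_y(v)$ we have both $u\sim_x v$ and $u\sim_y v$. The naive check is quadratic, so the key observation is that it suffices to check a linear-sized set of "local" pairs. By Corollary~\ref{co:staircase-char}, conflict-freeness is equivalent to the existence of a staircase path between every two vertices; and the presence of a staircase between $u$ and $v$ is witnessed by directed paths $c_x(u)\rightsquigarrow c_x(v)$ in $D_x$ and $c_y(u)\rightsquigarrow c_y(v)$ in $D_y$ (or the degenerate case where they share a path). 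The clean way to test this for \emph{all} pairs at once is: $H$ is conflict-free iff $D_x$ and $D_y$ are both \emph{Hamiltonian}, i.e., each admits a directed Hamiltonian path. Indeed, in an $st$-digraph a directed Hamiltonian path exists iff the reachability relation is a total order on the nodes, i.e., every two nodes are comparable; combined with Lemma~\ref{le:comparable-x-y} (every pair is comparable in \emph{at least one} of $D_x,D_y$), totality of both orders is exactly the statement that no pair is incomparable in either DAG, which is conflict-freeness. Testing whether an $st$-digraph has a Hamiltonian path is linear: compute a topological order and check that consecutive nodes in that order are joined by an edge (if some consecutive pair is not joined by an edge, no Hamiltonian path exists; if all are, that topological order \emph{is} a Hamiltonian path). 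This runs in $O(n)$ time on each of $D_x$ and $D_y$.

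\textbf{Main obstacle.} The delicate point is Step 3: justifying that the global condition "every pair of vertices is comparable in both DAGs (whenever they do not share a path)" reduces to the \emph{local} linear-time Hamiltonicity check, and in particular reconciling the degenerate cases $c_x(u)=c_x(v)$ or $c_y(u)=c_y(v)$ with the clean "both DAGs Hamiltonian" statement. One must verify that when $c_x(u)=c_x(v)$ the pair automatically satisfies $u\sim_y v$ (they lie on a common vertical path, hence on a common node of $D_x$, and are $\prec_y$-related along that path), so such pairs never form conflicts and do not interfere with the equivalence; this uses convexity, already guaranteed by Step 2, together with Lemma~\ref{le:comparable-x-y}. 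Everything else — the linear-time constructions and the per-DAG Hamiltonicity test — is routine.
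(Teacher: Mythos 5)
Your proposal is correct and follows essentially the same route as the paper: check convexity, then reduce universal greediness (via Theorem~\ref{th:universal-char}) to conflict-freeness, and test that by verifying that each of $D_x$ and $D_y$ has a directed Hamiltonian path in linear time via a topological order. The only slip is your claim that orthoconvexity of the outer boundary is equivalent to having \emph{exactly four} reflex corners on the external face (an orthoconvex boundary may have more, e.g.\ an L-shape), but the concrete linear-time check you then describe --- walking the boundary and rejecting any non-monotone notch --- is what is actually needed, so this does not affect correctness.
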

\begin{proof}
The algorithm first checks in linear time whether $H$ is \convex. If not, the 
instance is rejected. 
Otherwise, it checks whether both $D_x$ and $D_y$ contain a (directed) 
Hamiltonian path, which can be done in linear time in the size of~$D_x$ and~$D_y$, 
which is $O(n)$. Namely, since each of $D_x$ and $D_y$ is an $st$-digraph, computing a longest path from $s$ to $t$ is done in $O(n)$ time from 
a topological sorting.
We claim that $H$ is universal greedy if and only if this test succeeds. By 
Theorem~\ref{th:universal-char}, to prove this claim, it is enough to show 
that a DAG $D$ contains a Hamiltonian path if and only if for any two 
vertices $u$ and $v$ of $D$, there is a directed path either from $u$ to $v$ 
or from $v$ to~$u$. If~$D$ has a Hamiltonian path~$\pi$, a directed path 
between any two vertices of $D$ is a subpath of $\pi$. 
Conversely, suppose that there exists a directed path between any two vertices 
of~$D$. Then, we can construct a topological sorting of $D$, which determines a 
total order of its nodes, and hence a Hamiltonian path.
\end{proof}

Since conflict-free rectilinear representations form a subclass of the \emph{turn-regular} orthogonal representations~\cite{DBLP:journals/comgeo/BridgemanBDLTV00}, for which a minimum-area drawing can be found in linear time, we can also state the following as a corollary of Theorem~\ref{th:universal-test}.

\begin{corollary}\label{co:universal-min-area}
Let $H$ be a universal greedy rectilinear representation. There is a linear-time algorithm to compute a (greedy) drawing of $H$ with minimum area.
\end{corollary}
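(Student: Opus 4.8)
The plan is to reduce this to known results on \emph{turn-regular} orthogonal representations, exactly along the lines suggested by the sentence preceding the statement. First I would invoke Theorem~\ref{th:universal-char}: since $H$ is universal greedy, it is \convex and conflict-free. The key step is to argue that a conflict-free \convex rectilinear representation is turn-regular in the sense of Bridgeman et al.~\cite{DBLP:journals/comgeo/BridgemanBDLTV00}. Recall that turn-irregularity of an orthogonal representation is witnessed by a ``kitty-corner'' pair of reflex corners inside a common face, i.e., two reflex angles that face each other across the face along a monotone staircase. In a \convex representation all internal faces are rectangles and hence contain no reflex angle, so any such pair must lie on the external face; but a pair of reflex corners on the orthoconvex outer boundary that see each other in this sense is precisely a pair of vertices whose relative position is not fixed by $D_x$ (or by $D_y$), which is exactly an $x$- (or $y$-) conflict in the sense of Section~\ref{se:preliminaries}, with the two vertices being the responsible vertices identified in Lemmas~\ref{le:comparable-x-y} and~\ref{le:responsible-flat}. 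Hence conflict-freeness is equivalent to turn-regularity here.

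Next I would apply the compaction result for turn-regular representations of~\cite{DBLP:journals/comgeo/BridgemanBDLTV00}: given a turn-regular orthogonal representation, a drawing of minimum area (over all integer-grid drawings preserving the representation) can be computed in linear time. Since $H$ is bendless, the algorithm outputs a rectilinear drawing $\Gamma$ that realizes $H$, and by construction $\Gamma$ has minimum area among all rectilinear drawings of $H$. Finally, greediness comes for free: because $H$ is universal greedy, \emph{every} rectilinear drawing of $H$ is greedy, so in particular $\Gamma$ is greedy. Thus $\Gamma$ is a greedy rectilinear drawing of $H$ of minimum area, produced in $O(n)$ time, which proves the corollary.

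The only real obstacle is the first step, namely making the correspondence between our notion of conflict and the kitty-corner configurations that define turn-irregularity fully precise. I expect this to be routine rather than hard: since internal faces are rectangles and the external face is orthoconvex, the reflex corners that could participate in a kitty-corner pair are exactly the flat-vertex situations analyzed in Lemma~\ref{le:responsible-flat}, and the ``facing along a staircase'' condition translates directly into non-comparability in one of $D_x$, $D_y$ together with comparability in the other (Lemma~\ref{le:comparable-x-y}). One should also remark, as the paper already notes, that membership in the class can be recognized in linear time by Theorem~\ref{th:universal-test}, so the whole pipeline (test, then draw) runs in $O(n)$ time.
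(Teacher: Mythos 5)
Your proposal is correct and follows essentially the same route as the paper, which likewise derives the corollary in one step from the observation that conflict-free (hence universal greedy) representations form a subclass of the turn-regular representations of Bridgeman et al., whose linear-time minimum-area compaction then yields a drawing that is automatically greedy by universality. The only caveat is that your sketch of the conflict/kitty-corner correspondence speaks of reflex corners where the responsible vertices are in fact flat vertices (Lemma~\ref{le:responsible-flat}); the paper leaves this correspondence equally implicit, so this does not set your argument apart from its treatment.
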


\subsection{Generative scheme}\label{sse:generative-universal}
\medskip
We now describe a generative scheme to obtain any possible universal greedy rectilinear representation, starting from a rectangle, by applying a suitable sequence of primitive operations, which incrementally add simple paths on the external face, possibly subdividing external edges.

\begin{figure}[t]
	\centering
	\subcaptionbox{\label{fi:1-reflex}}{\includegraphics[page=1]{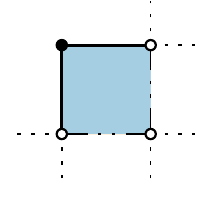}}
	\hfill
	\subcaptionbox{\label{fi:2-reflex-1}}{\includegraphics[page=2]{universal-operations}}
	\hfill
	\subcaptionbox{\label{fi:2-reflex-2}}{\includegraphics[page=3]{universal-operations}}
	\hfill
	\subcaptionbox{\label{fi:3-reflex-1}}{\includegraphics[page=4]{universal-operations}}
	\hfill
	\subcaptionbox{\label{fi:3-reflex-2}}{\includegraphics[page=5]{universal-operations}}
	\hfill
	\subcaptionbox{\label{fi:4-reflex}}{\includegraphics[page=6]{universal-operations}}
	\caption{Schematic illustration of $k$-reflex vertex additions. The new face introduced by the~operation is shaded; the new $k$-reflex vertices are in black. (a) $k=1$; (b)-(c) $k=2$; (d)-(e) $k=3$; (f) $k=4$. }\label{fi:k-reflex-vertex-addition}
\end{figure}

Let $H$ be a biconnected universal greedy rectilinear representation. Each of the following operations on $H$ produces a new biconnected universal greedy rectilinear representation, as Lemma~\ref{le:primitives} proves.

\begin{itemize}
\item[$-$] \emph{$k$-reflex vertex addition.} Attach to the 
external face of $H$ a path of $ 1 \leq k \leq 4$ reflex vertices (corners) 
that forms a new rectangular internal face, provided that the resulting 
representation is \convex (see Fig.~\ref{fi:k-reflex-vertex-addition}).
\item[$-$] \emph{flat vertex addition.} Subdivide an external edge $(u,v)$ of $H$ with a flat vertex of degree two, provided that the open strip of the plane between the two lines orthogonal to $(u,v)$ and passing through $u$ and $v$, respectively, has no vertices in its interior. 
\end{itemize}

Figure~\ref{fi:universal-construction} shows an example of universal greedy representation generated through a sequence of $k$-reflex and flat-vertex additions.

\begin{figure}[tb]
	\centering
	\subcaptionbox{\label{fi:H0}}{\includegraphics[page=1, trim = .8cm .5cm 1.7cm 0, clip]{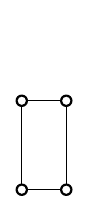}}
	\hfil
	\subcaptionbox{\label{fi:H1}}{\includegraphics[page=2, trim = .8cm .5cm 1.7cm 0, clip]{universal-construction}}
	\hfil
	\subcaptionbox{\label{fi:H2}}{\includegraphics[page=3, trim = .5cm .5cm 1.8cm 0, clip]{universal-construction}}
	\hfil
	\subcaptionbox{\label{fi:H3}}{\includegraphics[page=4, trim = .5cm .5cm 1.8cm 0, clip]{universal-construction}}
  
  \medskip
	\subcaptionbox{\label{fi:H4}}{\includegraphics[page=5, trim = .5cm 0 1cm 0, clip]{universal-construction}}
	\hfil
	\subcaptionbox{\label{fi:H5}}{\includegraphics[page=6, trim = .5cm 0 .5cm 0, clip]{universal-construction}}
	\hfil
	\subcaptionbox{\label{fi:H6}}{\includegraphics[page=7, trim = .5cm 0 .5cm 0, clip]{universal-construction}}
	\hfil
	\subcaptionbox{\label{fi:H7}}{\includegraphics[page=8]{universal-construction}}
	\caption{A sequence of primitive operations that generates a universal greedy rectilinear representation. 
		(\subref{fi:H0})~A single rectangular face; 
		(\subref{fi:H1})~flat vertex addition; 
		(\subref{fi:H2})~$2$-reflex vertex addition; 
		(\subref{fi:H3})~$1$-reflex vertex addition; 
		(\subref{fi:H4})-(\subref{fi:H5})~$3$-reflex vertex additions; 
		(\subref{fi:H6})~$2$-reflex vertex addition; 
		(\subref{fi:H7})~$4$-reflex vertex addition.}
	\label{fi:universal-construction}
\end{figure}

\begin{lemma}\label{le:primitives}
	Let $H$ be a universal greedy rectilinear representation of a biconnected 
	plane graph. Let $H'$ be the rectilinear representation obtained from $H$ by 
	applying either a $k$-reflex vertex addition or a flat vertex addition. Then, $H'$ 
	is biconnected and it is a universal greedy rectilinear representation. 
\end{lemma}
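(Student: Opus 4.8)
I will show that $H'$ is biconnected, \convex, and conflict-free; by Theorem~\ref{th:universal-char} these three properties together imply that $H'$ is universal greedy. Throughout I use that $H$ itself is biconnected, \convex, and conflict-free.

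\emph{Biconnectivity and convexity.} A flat vertex addition subdivides an external edge $(u,v)$ by a degree-$2$ vertex $w$ carrying a $180^\circ$ angle on each side; subdivision preserves biconnectivity, the internal face that contained $(u,v)$ stays a rectangle (it only gains a flat vertex on one side, no $270^\circ$ angle), and the external boundary polygon is geometrically unchanged, hence still orthoconvex, so $H'$ is biconnected and \convex. A $k$-reflex vertex addition attaches along the external face a path $a,w_1,\dots,w_k,b$ whose endpoints $a\ne b$ are vertices of $H$ and whose $k\ge 1$ internal vertices are new; this is an ear addition, so $H'$ is biconnected, and the precondition of the operation guarantees that $H'$ is \convex, the unique new internal face being a rectangle by construction. (That $a\ne b$ must hold is immediate: otherwise the attached path together with an arc of the \emph{simple} external boundary could not bound a rectangular face without turning $a$ into a cutvertex.)

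\emph{Conflict-freeness for pairs of old vertices.} The key observation is that both operations are \emph{monotone} on the two auxiliary DAGs: the map sending each maximal vertical (resp.\ horizontal) path of $H$ into the maximal vertical (resp.\ horizontal) path of $H'$ containing it preserves reachability in $D_x$ (resp.\ $D_y$). Indeed, neither operation deletes a vertex, an edge, or an angle, except that $u,v$ (flat case) or $a,b$ ($k$-reflex case) have exactly one of their angles split by a single new incident edge, which can only merge or enlarge maximal paths, never break them; hence every directed path of $D_x$ or of $D_y$ persists in $H'$. Consequently, for two old vertices $p,q$ with $c_x^{H'}(p)\ne c_x^{H'}(q)$ we have $c_x^{H}(p)\ne c_x^{H}(q)$, and $p\sim_x q$ in $H$ implies $p\sim_x q$ in $H'$, and symmetrically for $\sim_y$; so a pair of old vertices that forms a conflict in $H'$ would already form one in $H$, which is impossible since $H$ is conflict-free.

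\emph{Conflict-freeness for the new vertices, and the main obstacle.} In the $k$-reflex case the new vertices $w_1,\dots,w_k$ are reflex, hence not flat, and every old vertex keeps its flat/non-flat status ($a$ and $b$ may change degree but remain old), so every flat vertex of $H'$ is old. If $H'$ had a conflict, Lemma~\ref{le:responsible-flat} would supply two \emph{flat}, hence old, responsible vertices $p',q'$, and $\{p',q'\}$ would itself be an old--old conflict in $H'$, contradicting the previous paragraph. In the flat case the single new vertex $w$ may be flat, so I treat it directly: $c_x(w)=\{w\}$ subdivides the $D_x$-edge $c_x(u)\to c_x(v)$, while $c_y(w)=c_y(u)=c_y(v)$, and the ``empty strip'' precondition says precisely that in $H$ every vertex $z$ with $c_x(z)\notin\{c_x(u),c_x(v)\}$ satisfies $z\prec_x u$ or $v\prec_x z$; combined with conflict-freeness of $H$ this gives $z\sim_x w$ whenever $c_x(z)\ne c_x(w)$ and $z\sim_y w$ whenever $c_y(z)\ne c_y(w)$, so $w$ lies in no conflict. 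Hence $H'$ is conflict-free, and universal greedy by Theorem~\ref{th:universal-char}. The point demanding the most care is the monotonicity claim of the previous paragraph: one must check, case by case on how the degrees of $a$ and $b$ change under a $k$-reflex addition, that attaching the new rectangle never splits a maximal vertical or horizontal path of $H$, so that no directed path of $D_x$ or $D_y$ is lost; once that is in place, the rest follows cleanly from Lemma~\ref{le:responsible-flat} and from the strip precondition.
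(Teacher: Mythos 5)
Your proof is correct, and for the $k$-reflex case it takes a genuinely different route from the paper's. The paper also reduces everything to conflict-freeness via Theorem~\ref{th:universal-char} and uses Lemma~\ref{le:responsible-flat} to exclude the new reflex vertices as responsible vertices, but for the delicate case in which an attachment vertex $u$ acquires a \emph{new} flat angle delimited by the new edge $(u,w_1)$ it argues geometrically: the portion of $\cell(u)$ determined by that angle in any drawing of $H'$ is contained in $\cell(u)$ in the induced drawing of $H$, which is empty because $H$ is universal. You instead argue purely combinatorially, via the observation that both operations preserve the node partition and the reachability relations of $D_x$ and $D_y$ restricted to old vertices, so that any conflict between two old vertices of $H'$ would already be a conflict of $H$; this subsumes the paper's geometric case and makes explicit a monotonicity step that the paper leaves implicit in its closing sentence (``the conflict is determined by two flat angles that also exist in $H$''). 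One small inaccuracy: your claim that ``every old vertex keeps its flat/non-flat status'' is false --- a reflex corner $u$ of $H$ on the external face becomes a flat vertex of $H'$ when the new edge $(u,w_1)$ splits its $270$-degree external angle into $90+180$ degrees (this is precisely the situation the paper's cell argument is designed to handle). The error is harmless, though, because the only consequence you draw from it --- that every flat vertex of $H'$ is an old vertex --- already follows from the fact that the new vertices $w_1,\dots,w_k$ are reflex, and your reachability argument does not care whether the flat angle witnessing a conflict existed in $H$ or was newly created.
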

\begin{proof} %
	To prove that $H'$ is biconnected, note that subdividing an edge or attaching a simple path 
	between two vertices of a biconnected graph cannot create cutvertices. We claim that $H'$ is also \convex. In fact, a flat vertex addition does not change the shape of any face of~$H$, while a $k$-reflex vertex addition creates a new rectangular face and maintains the property that the the outer face is delimited by an orthoconvex polygon.
	
	We now show that $H'$ is universal greedy. By Theorem~\ref{th:universal-char}, this is equivalent to proving that $H'$ is conflict-free. For this, we show that none of the two operations introduces conflicts to $H$, which is universal greedy and thus conflict-free.
	
	Consider first a flat vertex addition that subdivides an edge $(u,v)$ into two edges $(u,w)$ 
	and $(w,v)$. Without loss of generality, assume that $(u,v)$ is horizontal, with 
	$u$ to the left of $v$. Since $c_y(w) = c_y(u) = c_y(v)$, vertex $w$ cannot form a $y$-conflict with any other vertex in $H'$. Also, by hypothesis, the open strip between the two lines orthogonal to $(u,v)$ and passing through $u$ and $v$, respectively, does not contain vertices in its interior. Hence, there cannot be vertices in $H'$ that form an $x$-conflict with $w$. It follows that $H'$ is still conflict-free.   
	
	Assume now that~$H'$ is obtained by applying a $k$-reflex vertex addition to $H$. Let $u$ and $v$ be the two vertices of $H$ that are joined by a path $u,w_1,\dots,w_k,v$, with $1 \leq k \leq 4$, in order to obtain $H'$. Suppose, for a contradiction that there exists a conflict in $H'$, and let $a$ and $b$ be the vertices responsible for it. First observe that $a,b \notin \{w_1,\dots,w_k\}$, since $w_i$ is a reflex vertex, for each $i=1,\dots,k$, while $a$ and $b$ are flat vertices by Lemma~\ref{le:responsible-flat}. 
	
	We now consider the case in which one of $a$ and $b$ coincides with one of $u$ and $v$, say $a = u$, and the flat angle at $u$ involved in the conflict $\{a,b\}$, call it~$\phi$, is delimited by edge $(u,w_1)$; see Figs.~\ref{fi:2-reflex-2}--\ref{fi:4-reflex}. 
	We claim that, in any drawing $\Gamma'$ of $H'$, the part of the cell of $u$ that is determined by $\phi$ does not contain $b$, which implies that $u$ is not in conflict with $b$, a contradiction. The claim follows from the fact that this part of the cell of $u$ is a 
    subset of the cell of $u$ in the drawing of $H$ obtained by removing $w_1,\dots,w_k$ from $\Gamma'$, which is empty as $H$ is universal, and from the fact that $b \in H$. Symmetrically, we can prove that $v$ is not responsible for any conflict due to a flat angle delimited by edge $(v,w_k)$. Hence, the conflict $\{a,b\}$ is determined by two flat angles that also exist in $H$, contradicting the fact that $H$ is conflict-free. Therefore, we have that $H'$ is conflict-free, and thus universal by Theorem~\ref{th:universal-char}.  
\end{proof}

\noindent The next lemma is used to prove Theorem~\ref{th:generative-scheme}.

\begin{lemma}\label{le:biconnected}
	Let $H$ be a rectilinear representation of a biconnected plane graph~$G$, 
	such that all internal faces of $H$ are rectangles.
	If $G$ is not a simple cycle, then there exists an internal face $f$ of $G$ 
	such that: 
	\begin{enumerate}[label=(\roman*)]
		\item \label{prp:1} $f$ is adjacent to the external face of $G$; 
		\item \label{prp:2} $f$ has a degree-$2$ vertex that is a reflex vertex in the 
		external face of~$H$;
		\item \label{prp:3} $G$ remains biconnected if we remove from it all the external degree-$2$ vertices of $f$ and their 
		incident edges. 
	\end{enumerate}
\end{lemma}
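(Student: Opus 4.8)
The plan is to locate the face $f$ by an extremal argument on the external boundary, obtain (i) and (ii) almost for free, and then derive (iii) by showing that deleting the prescribed degree-$2$ vertices amounts to merging $f$ into the external face along a single arc.

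Write $P$ for the rectilinear polygon bounding the external face of $H$, and consider any drawing of $H$. I would take a \emph{topmost} horizontal edge $e$ of $P$ (one of maximum $y$-coordinate) and let $f$ be the internal face incident to $e$; this already gives (i), since $e$, hence $f$, is incident to the external face. Moreover the entire top side of $f$ lies on $P$: otherwise some internal face sits above part of it, and iterating upward — each step strictly increasing the top $y$-coordinate, inside a finite graph — would yield a face whose top side is on $P$ at a larger height, contradicting maximality. To get (ii) I would, if necessary, refine the choice: if the top side of $f$ is a single edge whose two endpoints both have degree $\ge 3$, then $f$ has no external degree-$2$ vertex yet, so I pass to the internal face sharing the left side of $f$ and, by the same upward-iteration trick, check that it again has its full top side on $P$ at the same height. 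This moves strictly leftward among finitely many faces, so it halts at a face that has either a flat (degree-$2$) vertex on its top side or a degree-$2$ top corner; in both cases $f$ has a degree-$2$ vertex whose angle on the external face is $270^\circ$, which is (ii).

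For (iii) I would first identify the set $W$ of external degree-$2$ vertices of $f$: these are exactly the internal vertices of the maximal sub-paths of $\partial f$ that lie on $P$. (Such a vertex has both of its $\partial f$-edges on $P$, so the two angular wedges at it are occupied by $f$ and by the external face, forcing degree $2$; the endpoints of these sub-paths are shared with another internal face and hence have degree $\ge 3$.) The key point is to guarantee that, for the face $f$ chosen above, $\partial f\cap P$ is a \emph{single} arc $A=(p,\dots,q)$ with at least one internal vertex. Granting this, removing $W$ from $G$ is the same as first suppressing $A$ to a single edge $pq$ (which preserves biconnectivity) and then deleting that edge. The deleted edge is incident to exactly the two faces $f$ and the external face; since $\partial f$ and $P$ meet only along $A$, after the deletion these two faces merge into a region bounded by a simple cycle, namely the concatenation of $\partial f\setminus A$ and $P\setminus A$, which share only $p$ and $q$. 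Hence the graph stays biconnected, giving (iii).

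The main obstacle is precisely the single-arc claim. If $f$ met $P$ along two or more disjoint arcs, the deletion step would leave an edge of another arc — or, if all arcs have internal vertices, a ``pinch'' vertex — inside the merged face, creating a cutvertex and breaking (iii); so this case must be excluded by a finer choice of $f$. I expect to exploit that, because all internal faces are rectangles, every convex corner of $P$ is forced to have degree $2$ (a chord there would create a non-$90^\circ$ face angle), and that a face with a single chord on its boundary automatically meets $P$ in one arc; a descent argument should then show that if the face at \emph{every} convex corner of $P$ met $P$ in several arcs, one could exhibit a convex corner of $P$ at a still more extreme position, a contradiction. Carrying out this descent cleanly, and dovetailing it with the leftward-refinement used for (ii), is where most of the effort will lie.
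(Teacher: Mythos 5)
Your overall strategy is genuinely different from the paper's: the paper works combinatorially with the weak dual $G^*$ of $G$, observing that a cutvertex created on the external face would disconnect $G^*$, and then finds $f$ either as the face at any reflex corner (when $G^*$ is biconnected) or inside a leaf block of the block-cutvertex tree of $G^*$ (otherwise, via a count of reflex corners per leaf block). Your direct extremal argument on a drawing is an interesting alternative, and your reduction of (iii) to the ``single-arc claim'' is sound: if $\partial f\cap P$ is one arc with an internal vertex, the deletion merges $f$ with the external face into a region bounded by a simple cycle, and since no other face boundary changes, no cutvertex can arise. However, the proposal has a genuine gap precisely where you flag it: the single-arc claim is never proved. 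You only state that you ``expect'' a descent over convex corners of $P$ to work and that ``most of the effort will lie'' there. This is not a routine verification --- it is the combinatorial heart of the lemma (it is exactly what the paper's block-cutvertex-tree case analysis is doing), and a face incident to a topmost edge can indeed meet $P$ in several disjoint arcs (e.g.\ the middle face of a $1\times 3$ strip of rectangles meets $P$ along its top and bottom sides), so some nontrivial refinement is unavoidable. As written, property (iii) is unproven.

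There is also a concrete error in your argument for (ii). A degree-$2$ vertex that is \emph{flat} on the external boundary (interior angle $180^\circ$ in both $f$ and the external face) is not a \emph{reflex} vertex of the external face; in this paper ``reflex'' means a $270^\circ$ angle in the external face, i.e.\ a convex corner of $P$ (see Case~\ref{c:flat} vs.\ Case~\ref{c:reflex} in the proof of Theorem~\ref{th:generative-scheme}). So halting your leftward walk at a face whose top side merely contains a flat degree-$2$ vertex does not establish (ii): subdividing the top edge of the middle face in the $1\times 3$ example produces exactly this situation. The fix is easy --- continue to the face containing the left endpoint of the \emph{maximal} topmost horizontal segment of $P$, which is forced to be a degree-$2$ convex corner because its $90^\circ$ interior angle cannot be subdivided --- but the fix must then be reconciled with whatever finer choice of $f$ you end up needing for the single-arc claim, which is the part still missing.
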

\begin{proof}
	Let $G^*$ be the \emph{weak dual} of $G$, i.e., the node set of $G^*$ is the 
	set of the internal faces of $G$, and for each edge $e$ of $G$ shared by two 
	internal faces $f$ and~$g$, there is a dual edge of $e$ in $G^*$ that 
	connects the two nodes corresponding to~$f$ and $g$. Note that $G^*$ can be obtained from the (non-weak) dual of $G$ by deleting the node corresponding to the external face of $G$; hence, $G^*$ is (at least) connected, since the dual of $G$ is biconnected (see, e.g.,~\cite{tutte-66}). 
	Also, as soon as $G$ becomes non-biconnected due to the removal of some edges of the 
	external face, then $G^*$ becomes disconnected. Indeed, in this case, $G$ 
	would have a cutvertex $c$ on the external face, which means that in $G^*$ 
	there would be no path between any two nodes corresponding to faces that belong 
	to different biconnected components of $G$ with respect to $c$. Therefore, it 
	is sufficient to prove that there exists a face $f$ in $G$ that verifies 
	properties $(i)$ and $(ii)$, and such that $G^*$ remains connected after the 
	removal from $G$ of all the external degree-$2$ vertices of $f$. To this aim, 
	we distinguish between two cases, based on whether $G^*$ is biconnected or 
	simply connected.
	
	\newcase
	\ccase{c:biconnected} $G^*$ is biconnected. Since the external boundary of $H$ is a rectilinear polygon, the external face of $H$ has at least four reflex vertices. Let $v$ be one of them and~$f$ be the internal face containing $v$.
	Removing from $G$ all the external degree-$2$ vertices of $f$ (included $v$) causes the removal of the node corresponding to $f$ in $G^*$. Since $G^*$ was biconnected, it remains connected after such a removal.
	
	\ccase{c:connected} $G^*$ is connected but not biconnected. Let $\mathcal{T}$ 
  be the block-cutvertex tree of $G^*$, and let~$B^*$ be a block of $G^*$ that 
  is a leaf B-node of $\mathcal{T}$. Hence,~$B^*$ contains only one cutvertex of $G^*$, 
  which corresponds to an internal face $f_c$ of $G$. Denote by $F_c$ the set 
  of internal faces of $G$ distinct from $f_c$ and whose corresponding nodes of 
  $G^*$ are in~$B^*$. It can be seen that there is a face $f \in F_c$ that 
  contains a reflex vertex in the external face of $H$. More precisely, let $s$ 
  be the number of sides of $f_c$ in $H$ that are incident to some face of $F_c$,
  and let $r$ be the number of reflex vertices in the external face of $H$ 
  that belong to some faces of $F_c$. Since the boundary of the rectilinear 
  representation $H$ restricted to $F_c$ is a rectilinear polygon, we have that: 
	\begin{enumerate*}[label=(\alph*)]
		\item if $s = 1$, then $r \geq 2$; 
		\item if $s = 2$, then $r \geq 3$; 
		\item if $s \in \{3,4\}$, then $r \geq 4$.
	\end{enumerate*} 
	Hence, removing from $G$ all the external degree-$2$ vertices of $f$ causes 
  the removal of the node corresponding to $f$ in $B^*$. Since $B^*$ was 
  biconnected, it remains connected after such a removal, and $G^*$ remains 
  connected as well.
\end{proof}

\noindent We are now ready to prove Theorem~\ref{th:generative-scheme}.

\begin{theorem}\label{th:generative-scheme}
Let $H$ be a universal greedy rectilinear representation of a biconnected planar graph. Then, $H$ can be obtained by a suitable sequence of $k$-reflex vertex and flat vertex additions, starting from a rectangle. 
\end{theorem}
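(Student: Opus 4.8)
The plan is to argue by induction on the number of vertices of $G$ (equivalently, on the number of internal faces of $H$), undoing one primitive operation at each step. Since $H$ is convex by Lemma~\ref{le:necessityconvex} and conflict-free by Theorem~\ref{th:universal-char}, these are the two properties I will re-establish for the smaller representation produced at each reduction, so that the induction hypothesis applies; Lemma~\ref{le:primitives} then guarantees that re-performing the undone primitives on the (inductively obtained) construction of the smaller representation reconstructs $H$.

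For the base case, suppose $G$ is a simple cycle. Convexity forces its unique internal face to be a rectangle, so $G$ is a rectilinear quadrilateral whose four sides may be subdivided by flat degree-$2$ vertices. Here $D_x$ (resp.\ $D_y$) consists of just two internally disjoint directed paths sharing their source and sink — the two horizontal (resp.\ vertical) sides — so by the argument in the proof of Theorem~\ref{th:universal-test}, conflict-freeness is equivalent to each of $D_x, D_y$ admitting a Hamiltonian path, which holds precisely when flat vertices appear on at most one of the two horizontal sides and at most one of the two vertical sides. I would then build $H$ from the bare rectangle by inserting these flat vertices one at a time, sweeping the distinguished horizontal side left-to-right and the distinguished vertical side bottom-to-top; at each step the open strip required by a flat vertex addition is empty, since every vertex placed so far lies on one of the two sides opposite the side being subdivided.

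For the inductive step, assume $G$ is not a simple cycle and apply Lemma~\ref{le:biconnected} to obtain an internal face $f$ that is adjacent to the external face, has a degree-$2$ reflex external vertex, and whose external degree-$2$ vertices can be deleted (with their incident edges) leaving a biconnected graph $G'$; let $H'$ be the representation inherited by $G'$. Since $f$ is a rectangle, this deletion merges $f$ into the external face: it removes the flat degree-$2$ vertices subdividing the external sides of $f$ and the at most four reflex corners of $f$ on the external boundary, so combinatorially it is exactly the reverse of one $k$-reflex vertex addition ($1 \le k \le 4$, with those corners as the new reflex vertices) followed by a sequence of flat vertex additions on the newly created external edges. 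It then remains to verify that $H'$ is again a biconnected universal greedy rectilinear representation. Biconnectivity of $G'$ is property (iii) of Lemma~\ref{le:biconnected}. Convexity of $H'$: its internal faces are the rectangles of $H$ other than $f$, and its outer polygon is $P_H \setminus f$; the existence of a degree-$2$ reflex external vertex of $f$ together with orthoconvexity of $P_H$ forces $f$ to meet the boundary of $P_H$ along at least two \emph{adjacent} sides (a staircase-shaped ``bump''), and then an elementary check shows no axis-parallel line meets $P_H \setminus f$ in two segments. Conflict-freeness of $H'$: undoing a flat vertex addition only smooths a node of in-degree and out-degree one out of $D_x$ and $D_y$, which preserves all reachabilities and hence creates no conflict; and undoing the $k$-reflex vertex addition creates no conflict either — when $k \ge 2$ the bump is not a staircase, so by Corollary~\ref{co:staircase-char} no staircase path of $H$ can enter it and every staircase path between two vertices of $H'$ already lies in $H'$, while when $k=1$ one rules out a new responsible pair by repeating, for the flat angle(s) incident to the removed corner, the cell-emptiness argument used in the proof of Lemma~\ref{le:primitives}.

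I expect the main obstacle to be precisely this last verification: turning the purely combinatorial hypotheses of Lemma~\ref{le:biconnected} plus orthoconvexity of $H$ into the geometric statement that the removed face $f$ is a staircase bump (so that its deletion preserves orthoconvexity), and checking carefully that no staircase path, and no responsible pair of flat vertices, of $H'$ is genuinely new — in particular, handling the $k=1$ case and the case where a responsible vertex of $H'$ coincides with one of the two attachment vertices of $f$, both of which echo subtleties already present in the proof of Lemma~\ref{le:primitives}. The base case and the reduction machinery itself are routine by comparison.
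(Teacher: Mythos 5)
Your overall strategy coincides with the paper's: a reverse induction that uses Lemma~\ref{le:biconnected} to locate a removable internal face $f$, deletes its external degree-$2$ vertices, checks that the smaller representation is still a biconnected, \convex, conflict-free (hence universal greedy) representation, and invokes Lemma~\ref{le:primitives} to re-apply the primitives. The paper organizes the induction slightly differently --- it first peels off \emph{all} external flat degree-$2$ vertices one at a time (so that each reduction step undoes exactly one primitive, and the induction bottoms out at the bare rectangle without a separate cycle base case), and only then removes the all-reflex path $\pi$ of $f$ --- but this is cosmetic. Your $k\ge 2$ observation (the bump is non-monotone and consists of degree-$2$ vertices, so no staircase path of $H$ traverses it, and Corollary~\ref{co:staircase-char} applies directly to $H'$) is correct and is essentially the vacuous instance of the paper's argument.

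The one genuine weak point is exactly where you predicted it: preservation of universality in the $k=1$ case. Your proposed fix --- repeating the cell-emptiness argument of Lemma~\ref{le:primitives} for the flat angles incident to the removed corner --- is the wrong tool here. That argument works in the \emph{forward} direction, where adding a path can only shrink cells; in the \emph{backward} direction the danger is combinatorial: deleting the corner $w_1$ deletes an edge of $D_x$ and an edge of $D_y$, which can only destroy reachability and hence can only \emph{create} incomparable pairs, i.e.\ new conflicts (and it may also create new flat angles at the attachment vertices $u,v$). What is actually needed, and what the paper proves, is a rerouting statement: the opposite path $\pi'$ of $f$ (from $u$ to $v$ along the internal side) is itself a staircase --- straight or $L$-shaped --- because if $\pi'$ had two internal $90$-degree corners inside $f$, then $u$ and $v$ would be flat vertices facing each other across $f$, contradicting the conflict-freeness of $H$. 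Granting this, every staircase path of $H$ through $u,w_1,v$ can be replaced by one through $\pi'$ (both are monotone in the same directions), so Corollary~\ref{co:staircase-char} still holds for $H'$. Without this observation about $\pi'$, which your proposal never isolates, the $k=1$ case (and, if you insist on removing the flat subdivision vertices of $f$ in the same step, the preservation of reachability in $D_x$ and $D_y$) is not established.
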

\begin{proof}
	We prove that there exists a sequence $H_0, H_1, \dots, H_r$ $(r \in \mathbb{N})$ of universal greedy rectilinear representations such that $H_0$ is a rectangle, $H_r = H$, and $H_{i+1}$ is obtained by applying either a $k$-reflex vertex addition or a flat vertex addition  on $H_i$ $(i = 0, \dots, r-1)$. To this aim, it suffices to show that from each $H_{i+1}$ we can derive $H_i$ by applying a reverse operation of either a $k$-reflex vertex addition or a flat vertex addition. We distinguish between two cases:
	
	\newcase
	\ccase{c:flat} $H_{i+1}$ has a flat degree-$2$ vertex $w$ on the external 
  face. Let $u$ and $v$ be the neighbors of $w$. Let $H_i$ be the rectilinear 
  representation obtained from~$H_{i+1}$ by deleting the edges $(u,w)$ and $(w,v)$, 
  and by adding the edge $(u,v)$ (as a single segment). Clearly, $H_i$ 
  remains biconnected, \convex, and greedy universal.  
	Also, $H_{i+1}$ is obtained from $H_i$ by applying a flat vertex addition 
  that subdivides~$(u,v)$. 
	
	\ccase{c:reflex} Every degree-$2$ vertex on the external face of $H_{i+1}$ is 
	a reflex vertex.
	Note that the external face contains at least four reflex vertices. Let $f$ 
	be an internal face having the Properties \ref{prp:1}--\ref{prp:3} in the statement of 
	Lemma~\ref{le:biconnected} (this lemma guarantees that such a face exists). 
	By the proof of Lemma~\ref{le:biconnected}, the external degree-$2$ vertices 
	of $f$ form a path~$\pi$, and their removal preserves biconnectivity. Since 
	by hypothesis there is no external flat vertex of degree two in $H_{i+1}$, 
	all vertices of $\pi$ are reflex vertices in the external face of~$H_{i+1}$. 
	Also, since~$f$ is rectangular, $\pi$ is formed by at most $k$ vertices, with 
	$k \in \{1,2,3,4\}$. Now, let~$u$ and $v$ be the two vertices of $f$ to which 
	$\pi$ is attached, and let $\pi'$ be the path from~$u$ to $v$ containing all 
	the internal edges of~$f$ (the boundary of~$f$ is the union of $\pi$ and $\pi'$). 
	Since~$H_{i+1}$ is universal greedy, $\pi'$ cannot contain two vertices 
	with an angle of $90$ degrees inside $f$ (i.e.,~$\pi'$ is either a straight-
	line path or it is an $L$-shaped path). Indeed, in such a case, $u$ and~$v$ 
	would be two flat vertices on opposite sides of $f$, which, as already 
	observed, contradicts the fact that~$H_{i+1}$ is universal greedy.
	Let $H_i$ be the rectilinear representation derived from~$H_{i+1}$ by 
	removing~$\pi$. For the above properties, $H_i$ remains \convex. Also,~$H_{i+1}$ 
	is universal greedy, because $\pi'$ is a staircase path from $u$ to $v$ 
	and thus every staircase path of~$H_{i+1}$ that contains $\pi$ can be 
	replaced with a staircase path in which~$\pi$ is substituted with~$\pi'$. 
	This proves that~$H_{i+1}$ is obtained from~$H_i$ by applying a $k$-reflex 
	vertex addition.
\end{proof}

\section{General Greedy Rectilinear Representations}\label{se:greedy-rectilinear}

In this section, we consider \convex rectilinear representations of 
biconnected plane graphs that may contain conflicts. In particular, we investigate conditions under which a biconnected plane graph $H$ is greedy realizable. We present
a characterization (refer to Theorem~\ref{thm:characterization}), which yields a 
polynomial-time testing algorithm for a meaningful subclass of instances, namely 
when $D_x$ and $D_y$ are series-parallel (refer to Theorem~\ref{thm:seriesparallel}). 

Let $D$ be one of the two DAGs $D_x$ and $D_y$ associated with $H$. Since $D$ is an $st$-digraph, it has an $st$-ordering $\mathcal{S} = v_1, \dots, v_m$. For two indices $i$ and $j$, with $1 \leq i < j \leq m$, $D\langle i,j\rangle$ denotes the subgraph of $D$ induced by $v_i, \dots, v_j$. We say that $\mathcal{S}$ is \emph{\good} if: 

\begin{enumerate}[label=S.\arabic*]
  \item\label{c:strips-2comp} For any two indices $i$ and $j$, with $1 \leq i < j \leq m$, $D\langle i,j\rangle$ has at most two connected components, and
  \item\label{c:strips-precede} if $D\langle i,j\rangle$ has exactly two components, then all nodes of one component precede those of the other in~$\mathcal{S}$.
\end{enumerate}

Further, we say that a drawing of $H$ \emph{respects} an $st$-ordering $\mathcal{S}_x$ of $D_x$ ($\mathcal{S}_y$ of~$D_y$) if for any two vertices $u$ and $w$ of $H$, we have that $u$ lies to the left of~$w$ (below~$w$) in the drawing if and only if $c_x(u)$ precedes $c_x(w)$ in $\mathcal{S}_x$ ($c_y(u)$ precedes~$c_y(w)$ in $\mathcal{S}_y$).
Finally, when we refer to the $x$-coordinate ($y$-coordinate) of a node $v_i$ of $D_x$ (of $D_y$), we mean the one of all the vertices $w \in H$ with $c_x(w) = v_i$ (with $c_y(w) = v_i$), as these vertices belong to the same vertical (horizontal) path.
We prove the following characterization. 

\begin{theorem}\label{thm:characterization}
A \convex rectilinear representation $H$ of a biconnected plane graph is greedy realizable if and only if both DAGs $D_x$ and $D_y$ admit \good $st$-orderings.
\end{theorem}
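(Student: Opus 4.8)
The plan is to prove both directions of the characterization by connecting $st$-orderings of $D_x$ and $D_y$ with rectilinear drawings of $H$, and then using Lemma~\ref{lem:conflict-sat-greedy} to reduce greediness to the resolution of conflicts.

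\textbf{From a greedy drawing to good $st$-orderings.} Suppose $\Gamma$ is a greedy rectilinear drawing of $H$. We may perturb $\Gamma$ (compacting/stretching along coordinate axes as in the topology-shape-metrics framework) so that all vertices have distinct $x$-coordinates except those forced to be equal by a common vertical path, and symmetrically for $y$. Reading off the left-to-right order of the vertical paths yields a linear order $\mathcal{S}_x$ of the nodes of $D_x$; since every horizontal edge of $H$ points rightward, $\mathcal{S}_x$ is an $st$-ordering, and by construction $\Gamma$ respects it. Define $\mathcal{S}_y$ symmetrically. It remains to verify that $\mathcal{S}_x$ is good. For indices $i<j$, the nodes $v_i,\dots,v_j$ correspond to the vertical paths lying (strictly) inside the vertical slab $x(v_i) \le x \le x(v_j)$. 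I would argue that this slab, intersected with the (orthoconvex) region occupied by $\Gamma$, falls into at most two connected pieces: the orthoconvexity of the outer face and the rectangularity of the inner faces force the slab to be ``cut'' at most once vertically (above and below), yielding an upper and a lower component, whose node sets appear consecutively and in order in $\mathcal{S}_y$ — hence \ref{c:strips-2comp} and \ref{c:strips-precede} for $D_x$ follow by a careful but routine case analysis on which boundary pieces of the outer polygon bound the slab. The symmetric argument handles $D_y$.

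\textbf{From good $st$-orderings to a greedy drawing.} Conversely, assume $\mathcal{S}_x$ and $\mathcal{S}_y$ are good $st$-orderings of $D_x,D_y$. First I would build a planar rectilinear drawing $\Gamma$ of $H$ that respects both: assign to each vertical path the index of its node in $\mathcal{S}_x$ as $x$-coordinate and to each horizontal path its index in $\mathcal{S}_y$ as $y$-coordinate; the $st$-ordering property guarantees edge orientations are honored, and standard compaction arguments (the DAGs $D_x,D_y$ come from a valid rectilinear representation) give planarity. Then I invoke Lemma~\ref{lem:conflict-sat-greedy}: it suffices to show every conflict is resolved in $\Gamma$. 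Take a conflict $\{u,v\}$ with responsible flat vertices $u',v'$ (flat by Lemma~\ref{le:responsible-flat}); say it is an $x$-conflict with $u \prec_y v$, so $u',v'$ are the top of $c_x(u)$'s path and the bottom of $c_x(v)$'s path, north- and south-oriented respectively. I must show $u' \notin \cell(v')$ and $v' \notin \cell(u')$. The cell of the south-oriented flat vertex $v'$ is a downward slab bounded left and right by $v'$'s two horizontal neighbors; a vertex falls in it only if it lies below $v'$ and within that horizontal extent. Here the goodness of $\mathcal{S}_x$ enters crucially: since $u' \not\sim_x v'$, consider the strip $D_x\langle i,j\rangle$ between $c_x(u')$ and $c_x(v')$ in $\mathcal{S}_x$; its having two components forced into a precedence order means we can choose the assignment of $x$-coordinates so that $u'$'s path and $v'$'s path are separated by this gap, placing $u'$ strictly outside the horizontal span of $v'$'s cell (and symmetrically for $u'$). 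Iterating the argument over all conflicts — and noting that the two conditions \ref{c:strips-2comp},\ref{c:strips-precede} are exactly what lets a single coordinate assignment simultaneously separate all conflicting pairs on each axis — shows every conflict is resolved, so $\Gamma$ is greedy.

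\textbf{Main obstacle.} The delicate point is the second direction: a single $st$-ordering must resolve all conflicts at once, and one has to check that goodness is not just necessary for each conflict in isolation but sufficient globally. Concretely, the hard part is showing that in the drawing induced by the two orderings, for an $x$-conflict $\{u,v\}$ the non-$x$-comparability together with \ref{c:strips-2comp}--\ref{c:strips-precede} actually forces $u'$ out of the horizontal extent of $\cell(v')$ — this requires relating the ``two components precede each other'' condition to the geometric statement that $v'$'s horizontal neighbors lie entirely on one side of $u'$. I expect this to hinge on the observation that the left and right neighbors of the flat vertex $v'$ are $x$-comparable to $v'$ (they lie in adjacent vertical paths), so their positions in $\mathcal{S}_x$ are adjacent to $c_x(v')$, and the two-component structure of the strip then sandwiches $c_x(u')$ strictly outside $[c_x(\text{left nbr}), c_x(\text{right nbr})]$. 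Making this sandwiching precise, for every conflict simultaneously and in both the $x$- and $y$-directions, is the crux of the proof.
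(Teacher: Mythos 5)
Your proposal has the right high-level frame (translate between drawings and $st$-orderings, then reduce greediness to conflict resolution via Lemma~\ref{lem:conflict-sat-greedy}), but both directions contain genuine gaps. For necessity, you claim that conditions \ref{c:strips-2comp} and \ref{c:strips-precede} follow from orthoconvexity of the outer face and rectangularity of the inner faces by a ``routine case analysis''. This is false: a \convex representation can perfectly well have a strip $D_x\langle i,j\rangle$ with three or more components, or two interleaved ones --- that is precisely what happens in non-realizable instances. The two-component and precedence properties are consequences of \emph{greediness}, not convexity: the paper derives them from the fact that the cells of the south-oriented (resp.\ north-oriented) flat vertices at the bottoms (resp.\ tops) of the vertical paths of each component must be empty, which forces distinct components to be separated both by a horizontal segment and left-to-right, and then rules out a third component by a midline argument. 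Without invoking the emptiness of these cells your case analysis cannot go through.

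For sufficiency, two problems. First, assigning each vertical path its index in $\mathcal S_x$ as its $x$-coordinate cannot work: Theorem~\ref{th:exponential-area} shows some realizable instances force exponential area, so no unit-spacing drawing resolves all conflicts. Second, your proposed mechanism for resolving a conflict --- ``sandwiching'' $c_x(u')$ strictly outside the interval $[c_x(\text{left nbr of } v'), c_x(v')]$ --- is not available: for an $x$-conflict the left neighbor $w'$ of $v'$ satisfies $w' \prec_x v'$ and typically precedes $c_x(u')$ in $\mathcal S_x$, so $c_x(u')$ necessarily lies inside that interval. The conflict is resolved not by position but by \emph{magnitude}: the gap between $c_x(u')$ and $c_x(v')$ must exceed half the distance from $w'$ to $v'$ (the ``left inequality''), and symmetrically for the right. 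Showing that all such inequalities are simultaneously satisfiable is exactly the part you defer as ``the crux,'' and it is the bulk of the paper's argument: reduce to \emph{minimal} conflicts (Lemma~\ref{lem:dominations}), show their endpoints are consecutive in $\mathcal S_x$ (Lemma~\ref{lem:consecutive}), encode the constraints as $Ax>0$, $Bx>0$, and prove the combined system is solvable because the associated relation graph is acyclic (Lemma~\ref{lem:acyclic}) --- the acyclicity proof being the place where goodness of the ordering is actually used globally. As written, your proof establishes neither direction.
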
 

The following three subsections are devoted to the proof of Theorem~\ref{thm:characterization}. In particular, we prove the necessity of the existence of \good $st$-orderings in Section~\ref{sse:necessity}; then, for a proof of the sufficiency, we first discuss in Section~\ref{sse:conflicts} some properties of greedy rectilinear representations with respect to their conflicts, and then we use these properties in Section~\ref{sse:algorithm} to derive a drawing algorithm, assuming a \good $st$-ordering.

\subsection{Necessity of the condition in Theorem~\ref{thm:characterization}} \label{sse:necessity}
\medskip
In this section, we prove that the existence of \good $st$-orderings for both DAGs is a necessary condition for $H$ to be greedy realizable.

\begin{lemma}\label{le:necessity}
If $D_x$ or $D_y$ admits no \good $st$-ordering, $H$ is not greedy realizable.
\end{lemma}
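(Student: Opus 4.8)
The plan is to prove the contrapositive: I would show that if $H$ is greedy realizable via some rectilinear drawing $\Gamma$, then the $st$-orderings that $\Gamma$ induces on $D_x$ and $D_y$ are both \good. Given $\Gamma$, the relative left-to-right order of the vertical paths of $H$ induces a total order on the nodes of $D_x$, which is an $st$-ordering $\mathcal{S}_x$ since horizontal edges go left to right; symmetrically for $\mathcal{S}_y$. It then suffices to argue that $\mathcal{S}_x$ (and by symmetry $\mathcal{S}_y$) satisfies \ref{c:strips-2comp} and \ref{c:strips-precede}.

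The key geometric observation is that an $st$-ordering induced by $\Gamma$ corresponds to sweeping a vertical line across $\Gamma$: for indices $i \le j$, the induced subgraph $D_x\langle i,j\rangle$ captures exactly the portion of $H$ whose vertical paths fall in the $x$-range of the ``strip'' between consecutive $x$-coordinates $x(v_i)$ and $x(v_j)$. First I would make this precise: if $\ell_i$ and $\ell_j$ are vertical lines placed just to the right of $x(v_i)$ and just to the right of $x(v_j)$ respectively (or through appropriate gaps), then the subgraph of $H$ consisting of the vertical paths $v_i,\dots,v_j$ together with the horizontal edges among them is (a subdivision of) $D_x\langle i,j\rangle$, and its drawing lies in the open vertical strip bounded by $\ell_{i-1}$ and $\ell_j$. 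The connected components of $D_x\langle i,j\rangle$ thus correspond to the connected pieces of $H$ lying inside this strip.

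Next I would use convexity of $H$ — in particular the orthoconvexity of the outer boundary and the fact that all inner faces are rectangles (Lemma~\ref{le:necessityconvex}) — to bound the number of such pieces. Cutting an orthoconvex polygon by two parallel vertical lines produces a region with at most two connected components (one ``above'' and one ``below'' any horizontal gap), and since all internal faces are rectangles the part of $H$ inside the strip inherits this: it has at most two connected components, one entirely above the other. This is essentially the same case analysis already used in the proof of Lemma~\ref{le:necessityconvex} for why the outer face must be orthoconvex, and it yields \ref{c:strips-2comp}. For \ref{c:strips-precede}: if $D_x\langle i,j\rangle$ does have two components, call them $C_1$ (the upper one) and $C_2$ (the lower one) in $\Gamma$. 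I claim every node of $C_1$ precedes every node of $C_2$ in $\mathcal{S}_x$, or vice versa. Suppose not; then there are vertical paths $p \in C_1$, $q \in C_2$ with $x(p) < x(q)$ and also $p' \in C_1$, $q' \in C_2$ with $x(q') < x(p')$. Then in $\Gamma$ the $x$-interval spanned by $C_1$ and that spanned by $C_2$ properly overlap, both contained in the strip; I would then produce a vertex that lies in the cell of another vertex (using Theorem~\ref{th:papa-charact} together with Property~\ref{pr:right-angles}), exploiting that a vertical line through the overlap region meets both components and there is a flat or reflex vertex on one component ``exposed'' toward the other — contradicting that $\Gamma$ is greedy. (Here Lemma~\ref{le:same-coordinate} is the right tool: a horizontal line meeting both components forces an $x$-comparability, contradicting that $C_1, C_2$ are separate components of $D_x\langle i,j\rangle$.)

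The main obstacle I expect is the last step: cleanly deriving a contradiction from the overlap of the two components' $x$-ranges. The subtlety is that two separate components of $D_x\langle i,j\rangle$ can still interleave in $x$-coordinate within $\Gamma$ as long as no horizontal edge of $H$ — and, crucially, no horizontal \emph{line segment lying in a face} — connects them within the strip; so I must show that greediness forbids exactly this interleaving. The argument will hinge on finding a horizontal segment inside a face of $H$ that, because the internal faces are rectangles and the outer boundary is orthoconvex, must cross from one component to the other whenever their $x$-ranges overlap, and then invoking Lemma~\ref{le:same-coordinate} to get $x$-comparability of a node of $C_1$ with a node of $C_2$ inside $D_x\langle i,j\rangle$ — contradicting that they lie in different components. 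Getting the boundary bookkeeping right (which segment, which face, why it stays in the strip) is the delicate part; everything else is routine.
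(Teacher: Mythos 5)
Your overall framing---pass to the contrapositive and show that the $st$-orderings induced by a greedy drawing $\Gamma$ are \good---matches the paper's, but your argument for Condition~\ref{c:strips-2comp} does not work, and the failure is not cosmetic. You claim that ``at most two connected components'' follows from convexity alone, by intersecting the orthoconvex region occupied by $H$ with a vertical strip. This conflates the drawing as a \emph{region} with the induced \emph{subgraph} $D_x\langle i,j\rangle$: two nodes of the window lie in the same component only if they are joined by horizontal \emph{edges} all of whose endpoints fall inside the window, and a face of $H$ can keep the region connected while no such edges exist. Concretely, take an outer rectangle, subdivide its bottom side with a flat vertex $a$ and its top side with a flat vertex $b$, and add a horizontal chord between the left and right sides subdivided by a flat vertex $c$. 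This representation is \convex and biconnected, yet $c_x(a),c_x(b),c_x(c)$ are pairwise $\prec_x$-incomparable and occupy three consecutive positions in every $st$-ordering of $D_x$, so some window has \emph{three} components. Hence Condition~\ref{c:strips-2comp} genuinely requires greediness, and your plan contains no argument for the case of three or more components. The paper handles it by first forcing any two components to occupy disjoint $x$-ranges (via the cell argument below) and then observing that, with three vertically stacked components, the cell of the bottommost flat vertex of the rightmost node of the top component and the cell of the topmost flat vertex of the leftmost node of the bottom component squeeze the middle component simultaneously to the right and to the left of the strip's vertical midline---a contradiction.

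The second problem is in your plan for Condition~\ref{c:strips-precede}. You name the correct mechanism in passing (a flat vertex of one component ``exposed'' toward the other), but your closing paragraph commits to deriving the contradiction from Lemma~\ref{le:same-coordinate}, via a horizontal segment inside a face that ``must cross from one component to the other whenever their $x$-ranges overlap.'' That cannot work: what Lemma~\ref{le:same-coordinate} actually yields is that the two components are \emph{vertically} separated (one entirely above the other), and once that holds no horizontal line meets vertical paths of both components, however much their $x$-ranges overlap; so no $x$-comparability, and no contradiction, can be extracted that way. The overlap of $x$-ranges must instead be excluded by greediness via Theorem~\ref{th:papa-charact}: each node of the upper component ends at the bottom in a south-oriented flat vertex (degree-two or degree-three), whose cell is an unbounded vertical slab extending downward, and the union of these slabs covers the entire $x$-range of the upper component below the separating line; hence no vertex of the lower component may take an $x$-coordinate in that range, which is exactly Condition~\ref{c:strips-precede} for the induced ordering. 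As it stands, your proposal proves neither condition.
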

\begin{proof}
	Let $\mathcal{S}_x$ be any $st$-ordering of $D_x$ that is not \good. We prove 
	that $H$ does not admit any greedy drawing respecting $\mathcal{S}_x$. 
	Suppose, for a contradiction, that there exists such a greedy drawing $\Gamma$
	of $H$. Since~$\mathcal{S}_x$ is not \good, there exist two indices $i$ 
	and $j$, with $1 \leq i < j \leq m$, such that $D_x\langle i,j\rangle$ consists 
	of at least two connected components.
	
	Let $\ell_A$ be a vertical line with $x$-coordinate between~$x(v_{i-1})$ and~$x(v_{i})$,
	and let~$\ell_B$ be a vertical line with $x$-coordinate between~$x(v_{j})$
	and~$x(v_{j+1})$ in $\Gamma$.
	Observe that, for a connected component $C$ of $D_x\langle i,j\rangle$, the 
	following property holds. Consider the smallest rectangle $R(C)$ having its 
	vertical sides along $\ell_A$ and~$\ell_B$ and containing all the vertices 
	of $H$ corresponding to nodes of~$C$ in its interior; then, every horizontal 
	segment connecting two points on the two vertical sides of $R(C)$ intersects 
	at least a vertical edge between two vertices $u$ and $w$ of $H$ such 
	that $c_x(u) = c_x(w) \in C$. In fact, if this was not the case, then there would be two vertices of $C$ that are not joined by any path in $C$, contradicting the fact that $C$ is a connected component.
	
	Let $C_1$ and $C_2$ be two components of $D_x\langle i,j\rangle$, and consider 
  two vertices~$u_1$ and $u_2$ of~$H$ such that $c_x(u_1) \in C_1$  
  and $c_x(u_2) \in C_2$; see Fig.~\ref{fig:notrealizable-2comp-app}. This 
  implies that $u_1 \nsim_x u_2$. Thus, by Lemma~\ref{le:comparable-x-y}, 
	either $u_1 \prec_y u_2$ or $u_2 \prec_y u_1$ holds; assume the latter. Consider 
	another pair of vertices $u_1'$ and $u_2'$ of $H$ such that $c_x(u_1') \in C_1$ 
	and $c_x(u_2')\in C_2$. By the same argument, either $u_1' \prec_y u_2'$ 
  or $u_2' \prec_y u_1'$ holds; we claim that $u_2' \prec_y u_1'$. Suppose for a 
	contradiction that $u_1' \prec_y u_2'$. 
	Consider the two rectangles $R(C_1)$ and $R(C_2)$ as defined above. Note 
	that, since $u_2 \prec_y u_1$ and $u_1' \prec_y u_2'$, we 
	have $R(C_1) \cap R(C_2) \neq \emptyset$, as the rectangle $R(C_1)$ 
	contains $u_1$ and~$u_1'$, and thus it contains also $u_2$ and~$u_2'$. 
	Therefore, there exist a vertical path corresponding to a node of~$C_1$ and a 
	vertical path corresponding to a node of $C_2$ that are crossed by the same 
	horizontal line. By Lemma~\ref{le:same-coordinate}, there exists a directed path in $D_x$ between 
	the two nodes of $C_1$ and~$C_2$, contradicting the fact that~$C_1$ and~$C_2$ 
	are different connected components.
	Repeating this argument for any pair of vertices, we conclude that there 
	exists a horizontal line-segment~$h$ from $\ell_A$ to $\ell_B$ such that all 
	the vertices of $H$ corresponding to nodes of $C_1$ lie above $h$ and all 
  those corresponding to nodes of $C_2$ lie below $h$ in $\Gamma$.
	
	Further, since for each node of $C_1$ there is a flat vertex of $H$ that is south-oriented (the bottommost vertex of the vertical path corresponding to the node of $C_1$), we have that the union of the cells of these flat vertices, restricted to the region below $h$, consists of a rectangle of infinite height spanning at least all the $x$-coordinates between those of the leftmost and of the rightmost node of $C_1$ (see the tiled region in Fig.~\ref{fig:notrealizable-2comp-app}).
	Since the same holds for the cells of the north-oriented flat vertices that are the topmost points of the vertical paths corresponding to nodes of $C_2$, we have that all the nodes of $C_1$ are to the left of all the nodes of $C_2$ in $\Gamma$, or vice versa.
	Therefore, $D_x\langle i,j\rangle$ contains at least another connected component~$C_3$, as 
	otherwise the $st$-ordering $\mathcal{S}_x$ would be \good. By the 
	same argument as before, we can claim that $C_1$, $C_2$, and $C_3$ are 
	separated by horizontal line-segments; we further assume that $C_1$, $C_2$, and $C_3$ appear 
	in this order from top to bottom in $\Gamma$. Also, either all the nodes
	of $C_3$ lie to the left of all the nodes of $C_2$ in $\Gamma$, or vice 
	versa, and the same holds for the nodes of $C_3$ and of~$C_1$; 
	see Fig.~\ref{fig:notrealizable-3comp-app}.
	
	\begin{figure}[t]
		\centering
		\subcaptionbox{\label{fig:notrealizable-2comp-app}}
		{\includegraphics[page=1]{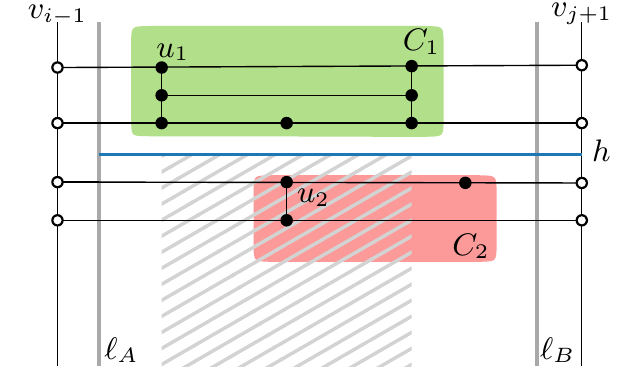}}
		\hfil
		\subcaptionbox{\label{fig:notrealizable-3comp-app}}
		{\includegraphics[page=2]{notrealizable}}
		\caption{Illustration for the proof of Lemma~\ref{le:necessity}.
			(a)~There are two components in~$D_x\langle i,j\rangle$. $C_1$ and~$C_2$ are separated by~$h$. The cells of the south-oriented flat vertices of~$C_1$ span at least the tiled area. (b)~There are at least three components. There are vertices of~$C_2$ to the left of~$\ell_C$. The cell of~$v_r$ spans at least the tiled area.}
		\label{fig:notrealizable-app}
	\end{figure}
	
	Assume that all the nodes of $C_1$ are to the left of all those of $C_2$, 
	which are to the left of those of $C_3$; the other cases are analogous. 
	Let $\ell_C$ be the vertical line that is equidistant from~$\ell_A$ and $\ell_B$. 
	We claim that all the nodes of $C_2$ are required to lie to the right 
	of $\ell_C$. Namely, if there is at least a node of $C_1$ to the right 
	of $\ell_C$, this is trivially true since the nodes of $C_2$ are to the right 
	of those of $C_1$, by assumption. Further, if all the nodes of $C_1$ lie to the 
	left of $\ell_C$, let $v_r$ be the bottommost vertex of the vertical path 
	corresponding to the rightmost node of $C_1$. Let $x_r$, $x_A$, $x_B$, and $x_C$ 
	be the $x$-coordinates of~$v_r$, $\ell_A$, $\ell_B$, and $\ell_C$, respectively. 
	Assuming all positive $x$-coordinates, we have that 
	$x_C - x_r < x_C - x_A = x_B - x_C$; thus, $x_r + (x_B - x_r)/2 > x_C$. This implies 
	that the right boundary of $\cell(v_r)$ lies to the right of $\ell_C$, since 
	the neighbor of $v_r$ in $H$ with its same $y$-coordinate and with larger $x$-coordinate lies to the right of $\ell_B$, as otherwise $c_x(v_r)$ would not be the rightmost node of $C_1$. Hence, the claim follows, since the nodes of $C_2$ must lie to the right of $\cell(v_r)$.
	With analogous arguments we can prove that the topmost vertex of the vertical path corresponding to the leftmost node of $C_3$ enforces all the nodes of $C_2$ to lie to the left of $\ell_C$. 
	This results in a contradiction and concludes the proof.
\end{proof}

We now prove that the necessary condition of Theorem~\ref{thm:characterization} 
is also sufficient for the existence of a greedy rectilinear drawing. Our 
proof is constructive, as we provide a polynomial-time algorithm that, given 
a \good $st$-ordering for each of the two DAGs of a rectilinear 
representation $H$, constructs a greedy drawing~$\Gamma$ of~$H$. Our 
algorithm is based on some properties concerning the conflicts of~$H$, which 
we discuss in the following subsection.

\subsection{Properties of conflicts in greedy rectilinear representations} \label{sse:conflicts}
\medskip
We start by proving a lemma that allows us to assign the $x$- and $y$-coordinates of the vertices in $\Gamma$ in two independent steps.

\begin{lemma}\label{lem:independent-x-y}
Let $H$ be an \convex rectilinear representation of a biconnected embedded planar graph $G$. 
Let $\Gamma_1$ and $\Gamma_2$ be two drawings of $H$
such that all $x$-conflicts are resolved in~$\Gamma_1$ and
all $y$-conflicts are resolved in~$\Gamma_2$. 
Then, the drawing $\Gamma_3$ of $H$ in which the $x$-coordinate of each vertex 
is the same as in $\Gamma_1$ and the $y$-coordinate of each vertex is the same 
as in $\Gamma_2$ is greedy.
\end{lemma}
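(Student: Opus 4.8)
The plan is to use Lemma~\ref{lem:conflict-sat-greedy}, which reduces greediness of a rectilinear drawing of a \convex biconnected representation to the condition that every conflict is \sat. So it suffices to show that every conflict is \sat~in $\Gamma_3$. Each conflict is either an $x$-conflict or a $y$-conflict, and by symmetry it is enough to treat $x$-conflicts; the $y$-conflict case will follow verbatim by exchanging the roles of $x$ and $y$ (and of $\Gamma_1$ and $\Gamma_2$). So fix an $x$-conflict $\{a,b\}$, where $a$ and $b$ are the responsible vertices; by Lemma~\ref{le:responsible-flat}, both $a$ and $b$ are flat vertices, and since it is an $x$-conflict, we may assume their flat angles are vertically oriented (one east-, one west-oriented) — wait, more precisely, for an $x$-conflict $a \nprec_x b$ so $a \prec_y b$ (say), and the responsible flat vertices have horizontally-oriented flat angles: $a$ is north-oriented flat and $b$ is south-oriented flat, so that $\cell(a)$ is a region above $a$ and $\cell(b)$ is a region below $b$. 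We must show that in $\Gamma_3$ neither $b \in \cell(a)$ nor $a \in \cell(b)$.

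The key observation is that the cell of a north-oriented flat vertex $a$ in any rectilinear drawing of $H$ is determined entirely by the \emph{relative $x$-coordinates} of $a$ and its two (horizontal) neighbors, together with the fact that it extends upward; its shape as a region does not depend on $y$-coordinates at all beyond ``everything above the line through $a$''. Concretely, $\cell(a)$ is the set of points that are above $y(a)$ and have $x$-coordinate strictly between the midpoints $\tfrac12(x(a)+x(a^-))$ and $\tfrac12(x(a)+x(a^+))$, where $a^-,a^+$ are the left and right neighbors of $a$ along its horizontal path — so the horizontal extent of $\cell(a)$ in $\Gamma_3$ is the same as in $\Gamma_1$, since $\Gamma_1$ and $\Gamma_3$ agree on all $x$-coordinates. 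Likewise the vertical extent ``above $y(a)$'' agrees with that in $\Gamma_2$. Now, since $\{a,b\}$ is an $x$-conflict, it is resolved in $\Gamma_1$ by hypothesis; in $\Gamma_1$ the vertex $b$ is therefore outside $\cell_{\Gamma_1}(a)$. If $b$ fails the horizontal condition in $\Gamma_1$ — i.e. $x(b) \notin \bigl(\tfrac12(x(a)+x(a^-)), \tfrac12(x(a)+x(a^+))\bigr)$ — then, since $x$-coordinates in $\Gamma_3$ equal those in $\Gamma_1$, $b$ fails the same horizontal condition in $\Gamma_3$, hence $b \notin \cell_{\Gamma_3}(a)$, and we are done. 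The remaining possibility is that in $\Gamma_1$ the vertex $b$ satisfies the horizontal condition but fails the vertical one, i.e. $y_{\Gamma_1}(b) \le y_{\Gamma_1}(a)$. Here I invoke that $\{a,b\}$ is an $x$-conflict, so $a \prec_y b$ holds in $H$; but $a \prec_y b$ is a constraint that \emph{every} drawing of $H$ must satisfy (it reflects the existence of a directed path in $D_y$, which forces $y(a) < y(b)$ by the orientation of the horizontal edges), so in fact $y_{\Gamma_1}(b) > y_{\Gamma_1}(a)$ always — contradiction; this subcase cannot arise. Symmetrically $a \notin \cell_{\Gamma_3}(b)$: the cell of the south-oriented flat vertex $b$ extends \emph{below} $y(b)$, and since $a \prec_y b$ forces $y(a) < y(b)$ we still need the horizontal argument — but again the horizontal extent of $\cell_{\Gamma_3}(b)$ equals that in $\Gamma_1$, and $a \notin \cell_{\Gamma_1}(b)$ means $a$ fails either the horizontal condition (which then transfers to $\Gamma_3$) or the vertical one ($y_{\Gamma_1}(a) \ge y_{\Gamma_1}(b)$, impossible by $a\prec_y b$).

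So the whole argument reduces to two facts: (1) the $x$-extent of the cell of a horizontally-oriented flat vertex depends only on $x$-coordinates (and symmetrically for $y$), which is just a direct inspection of the cell shapes in Fig.~\ref{fi:ortho-cells}(c)–(d); and (2) the $x$-conflict relation entails a $\prec_y$ comparability that is rigid across all drawings of $H$. The main obstacle — really the only subtle point — is making (2) precise and making sure the ``vertical failure'' subcase is genuinely vacuous: one has to argue carefully that when $\{a,b\}$ is an $x$-conflict with $a \prec_y b$, the only way $b$ could sit in $\cell(a)$ is via the $x$-interval condition (since the $y$-direction is already pinned down by $a\prec_y b$ and the upward orientation of $\cell(a)$), and hence resolvedness of the conflict in $\Gamma_1$ is \emph{exactly} an $x$-coordinate condition, which $\Gamma_3$ inherits. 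Once this is in hand, applying Lemma~\ref{lem:conflict-sat-greedy} to $\Gamma_3$ finishes the proof.
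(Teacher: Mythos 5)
Your proof is correct and follows essentially the same route as the paper's: both rest on the observation that only flat-vertex cells can be non-empty and that, for the pairs that matter, membership in a north-/south-oriented flat cell is purely an $x$-coordinate condition (and symmetrically for east-/west-oriented cells and $y$-coordinates), so resolvedness transfers coordinate-wise from $\Gamma_1$ and $\Gamma_2$ to $\Gamma_3$. Your detour through Lemma~\ref{lem:conflict-sat-greedy} and the explicit use of Lemma~\ref{le:comparable-x-y} to pin down the $y$-order of the responsible pair of an $x$-conflict (making the ``vertical failure'' subcase vacuous) merely spells out a step that the paper's proof states more tersely.
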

\begin{proof}
	By Theorem~\ref{th:papa-charact}, in order to prove that $\Gamma_3$ is 
	greedy, it is enough to prove that for any vertex $v$ of $H$, we have 
	that $\cell(v)$ contains no vertex distinct from $v$. Since $H$ is \convex, there 
	is no internal angle of $270$ degrees. Also, by Property~\ref{pr:right-angles}, 
	if two edges incident to a vertex $v$ create an angle of $90$ degrees, then 
	the portion of $\cell(v)$ delimited by these two edges is always empty. Thus, 
	if, for a vertex $v$, the cell $\cell(v)$ is non-empty in~$\Gamma_3$, then
	$v$ forms one or two flat angles. However, the fact that a 
	vertex lies inside a cell determined by a north-oriented or by a south-oriented 
	flat angle only depends on the $x$-coordinates of the vertices in 
	the drawing; thus, all these cells are empty in $\Gamma_3$, since they are 
	empty in $\Gamma_1$. Analogously, all the cells determined by east-oriented 
	or by west-oriented flat angles are empty in $\Gamma_3$, since they are empty 
	in~$\Gamma_2$. This concludes the proof of the lemma.
\end{proof}

In view of Lemma~\ref{lem:independent-x-y}, we only focus on the assignment of the $x$-coordinates based on the \good $st$-ordering $\mathcal S_x=v_1,\ldots,v_m$ of $D_x$, which implies that the only conflicts that we have to consider are the $x$-conflicts. The assignment of the $y$-coordinates based on the \good $st$-ordering of $D_y$ works symmetrically.

We now give an overview of our strategy. We first prove in Lemma~\ref{lem:dominations} that, to guarantee that every $x$-conflict is \sat, it suffices to resolve a specific subset of them, called \emph{minimal}. Namely, we say that an $x$-conflict~$\{u,v\}$ \emph{dominates} an $x$-conflict~$\{w,z\}$, with~$c_x(u)=v_i$, $c_x(v)=v_j$, $c_x(w)=v_k$, and $c_x(z)=v_\ell$, if~$k\le i<j\le\ell$. 
A \emph{minimal} $x$-conflict is not dominated by any $x$-conflict. In Fig.~\ref{fig:dominations:1-app}, the $x$-conflict $\{z,r\}$ is minimal and dominates the $x$-conflict $\{u,w\}$. 

By Lemmas~\ref{lem:conflict-sat-greedy} and~\ref{lem:dominations}, we conclude 
that a greedy rectilinear drawing can be obtained by resolving all the minimal 
conflicts. 
In our algorithm, described in Section~\ref{sse:algorithm}, we encode that a minimal $x$-conflict is \sat with a single 
inequality on the horizontal distances between the vertices in the $x$-conflict. 
Then, in Lemma~\ref{lem:consecutive}, we prove that, for a minimal $x$-conflict $\{u,v\}$,
the nodes~$c_x(u)$ and $c_x(v)$ of $D_x$ are consecutive in~$\mathcal S_x$. 
We use this property to show that the system of inequalities describing 
the conditions for the minimal $x$-conflicts to be \sat always admits a solution. 

\begin{lemma}\label{lem:dominations}
Let $\Gamma$ be a rectilinear drawing of $H$ respecting $\mathcal{S}_x$. If 
every minimal $x$-conflict dominating an $x$-conflict $\{u,w\}$ is \sat in $\Gamma$, 
$\{u,w\}$ is \sat.
\end{lemma}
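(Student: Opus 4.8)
The plan is to show that if $\{u,w\}$ is an $x$-conflict that is \emph{not} \sat in $\Gamma$, then some minimal $x$-conflict dominating $\{u,w\}$ is also not \sat; this contrapositive immediately gives the statement. So suppose $\{u,w\}$ is not \sat, which by definition means that one of the responsible vertices lies in the cell of the other. By Lemma~\ref{le:responsible-flat}, both responsible vertices are flat; without loss of generality the responsible vertices are $u$ (a north-oriented flat vertex, topmost on its vertical path) and $w$ (a south-oriented flat vertex, bottommost on its vertical path) with $u \prec_y w$, and (say) $u \in \cell(w)$ in $\Gamma$, so $u$ lies within the infinite rectangular strip that is the downward part of $\cell(w)$. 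Writing $c_x(u)=v_i$ and $c_x(w)=v_j$ with $i<j$ in $\mathcal S_x$ (we may assume $u$ lies to the left of $w$; the equal or opposite case is handled symmetrically, using that $\mathcal{S}_x$ is respected by $\Gamma$), the horizontal extent of this strip is delimited by the left and right neighbors of $w$ along the boundary of the face with the flat angle.

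The main idea then is a descent: among all $x$-conflicts dominating $\{u,w\}$, pick one, say $\{u^*,w^*\}$ with $c_x(u^*)=v_k$, $c_x(w^*)=v_\ell$ and $k\le i < j\le \ell$, that is minimal (iterating the domination relation; since indices are bounded, this terminates). I would like to conclude that $\{u^*,w^*\}$ is itself not \sat in $\Gamma$. Here the key geometric observation is that the cell-strip of the responsible south-oriented flat vertex $w^*$ for the conflict $\{u^*,w^*\}$ contains the cell-strip of $w$ in the relevant direction — this is because $c_x(w^*)=v_\ell$ lies weakly to the right of $c_x(w)=v_j$ and $w^*$ is south-oriented and bottommost on its vertical path, so its downward cell-strip reaches at least as far left as that of $w$ (one must check the boundary-neighbor $x$-coordinates using convexity and the fact that $v_\ell$ is reachable from nothing between $v_j$ and $v_\ell$ that would truncate the strip). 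Hence the vertex $u$ (or, more carefully, the responsible vertex $u^*$ for the dominating conflict, which sits even further out) still lies in $\cell(w^*)$, so $\{u^*,w^*\}$ is not \sat either. Symmetric reasoning covers the case $w\in\cell(u)$, working with the north-oriented responsible vertex and its upward cell-strip.

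I expect the main obstacle to be the careful bookkeeping of \emph{which} vertex is responsible for the dominating conflict and \emph{why} its cell still captures the offending vertex. Domination is stated purely in terms of the $st$-ordering indices ($k\le i<j\le\ell$), so passing from "$u$ is in the cell of $w$" to "the responsible vertex of $\{u^*,w^*\}$ is in the cell of its partner" requires translating index inequalities into $x$-coordinate inequalities in $\Gamma$ — which is exactly what "$\Gamma$ respects $\mathcal S_x$" provides — and then arguing that the relevant cell-strip only \emph{grows} as we move to the dominating conflict. The convexity of $H$ (all internal faces rectangular, outer face orthoconvex), together with the characterization of flat-vertex cells from Fig.~\ref{fi:ortho-cells} and Property~\ref{pr:right-angles}, should make each of these monotonicity claims routine, but the case analysis on the orientation of the flat angles and on whether $u,w$ are horizontally comparable will need to be handled with some care. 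Once the cell-containment is transferred to a minimal dominating conflict, the contrapositive is complete.
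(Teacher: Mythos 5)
There is a genuine gap, and it begins with a misreading of the domination relation. In the paper, an $x$-conflict $\{u^*,w^*\}$ dominates $\{u,w\}$ when the indices of $\{u^*,w^*\}$ are \emph{sandwiched inside} those of $\{u,w\}$; a minimal $x$-conflict is therefore one of narrowest span, and by Lemma~\ref{lem:consecutive} its two nodes are in fact \emph{consecutive} in $\mathcal{S}_x$. You have set the inequality up the other way around (writing $k\le i<j\le\ell$ with $v_k,v_\ell$ the nodes of the \emph{dominating} conflict), so your ``descent'' moves to ever wider conflicts, and your key monotonicity claim --- that the cell-strip of $w^*$ contains that of $w$ because $c_x(w^*)$ lies weakly to the right of $c_x(w)$ --- is aimed at the wrong object. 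Even taken on its own terms, that claim is unjustified: the downward cell of a south-oriented flat vertex is the vertical strip between the perpendicular bisectors of its two incident horizontal edges, and there is no containment between the strips of two distinct flat vertices merely because their nodes are ordered in $\mathcal{S}_x$; the two strips can be disjoint.

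The second, related, gap is that you never use the hypothesis that $\mathcal{S}_x$ is \good, which is what actually drives the proof. The paper's argument examines $D_x\langle i,j\rangle$ for $v_i=c_x(u)$, $v_j=c_x(w)$: by goodness it has at most two components $C_1\ni v_i$ and $C_2$, with all nodes of one preceding all nodes of the other. If $v_j\in C_1$, the conflict is \sat outright (the right neighbor $u'$ of $u$ already lies to the left of $w$), with no appeal to minimal conflicts. Otherwise the relevant minimal dominating conflict is the specific pair $\{z,r\}$ formed by the last node of $C_1$ and the first node of $C_2$ (which are consecutive), and one must show that the right boundary of $\cell(u)$ lies no further right than that of $\cell(z)$; this requires the directed path from $c_x(u)$ to $c_x(z)$ inside $C_1$, a comparison of the right neighbors $u'$ and $z'$, and a contradiction with Condition~\ref{c:strips-precede} in the configuration where $z'$ would lie to the left of $u'$. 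Only then does ``$\{z,r\}$ \sat'' transfer to ``$\{u,w\}$ \sat''. None of these steps follow from convexity and Property~\ref{pr:right-angles} alone, and your sketch does not supply substitutes for them.
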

\begin{proof}
	We may assume without loss of generality that $u$ and $w$ are responsible for 
	$\{u,w\}$. Let $v_i = c_x(u)$ and $v_j = c_x(w)$, with $i < j$. Consider the 
	graph~$D_x\langle i,j \rangle$. Since $\mathcal S_x$ is \good, this graph has 
	at most two connected components $C_1$ and $C_2$. Assume that $v_i \in C_1$.
	
	Suppose first that also $v_j  \in C_1$. Consider the right neighbor $u'$ of $u$ 
	in~$H$, which exists since $u$ is a flat vertex; see Fig.~\ref{fig:dominations:claim-app}. 
	Note that node~$c_x(u')$ precedes~$c_x(w)$ in $\mathcal{S}_x$, 
	that is, $c_x(u') \in D_x\langle i,j \rangle$; in fact, if 
	this were not the case, then~$v_j$ would not belong to $C_1$. Thus, $u'$ lies 
	to the left of $w$ in any rectilinear drawing of $H$ respecting $\mathcal{S}_x$. 
	Hence, the mid-point of edge $(u,u')$, which defines the right boundary of 
	$\cell(u)$, lies to the left of $w$, which implies that $w \not\in \cell(u)$. 
	Symmetrically, we can show that $u \not\in \cell(w)$.
	
	\begin{figure}[tb]
		\subcaptionbox{\label{fig:dominations:claim-app}}{\includegraphics[page=1]{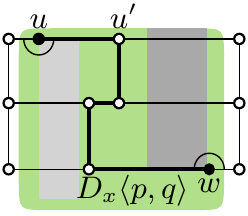}}
		\hfill
		\subcaptionbox{\label{fig:dominations:1-app}}{\includegraphics[page=2]{dominations}}
		\hfill
		\subcaptionbox{\label{fig:dominations:2-app}}{\includegraphics[page=3]{dominations}}
		\hfill
		\subcaptionbox{\label{fig:dominations:3-app}}{\includegraphics[page=4]{dominations}}
		\caption{Illustration for the proof of Lemma~\ref{lem:dominations}.
			(a)~$w\notin\cell(u)$ and $u\notin\cell(w)$;
			(b)~the $x$-conflict~$\{z,r\}$ is \sat;
			(c)~$c_x(u)$ and~$c_x(z)$ are sinks in~$D_x\langle i,j\rangle$; and
			(d)~$c_x(u)$ is not a sink in~$C_1$.}
		\label{fig:dominations-app}
	\end{figure}
	
	Suppose now that $v_j  \in C_2$. As in the proof of Lemma~\ref{le:necessity}, 
	we can assume that all the vertices corresponding to nodes of $C_1$ lie above 
	those corresponding to nodes of $C_2$. Further, we can assume that all the 
	nodes of~$C_2$ follow all those of $C_1$ in the \good $st$-ordering $\mathcal S_x$. 
	The other cases are symmetric.
	Let~$z$ be the bottommost vertex of the vertical path corresponding to the 
	last node~$c_x(z)$ of~$C_1$ in $\mathcal S_x$; see Fig.~\ref{fig:dominations:1-app}.
	Also, let $r$ be the topmost vertex of the vertical path corresponding to the 
	first node $c_x(r)$ of $C_2$ in $\mathcal S_x$. Note that vertices $z$ and $r$ 
	are responsible for a minimal $x$-conflict $\{z,r\}$, which is \sat by assumption. 
	We now show that also $\{u,w\}$ is \sat. In particular, we show 
	that $w \not \in \cell(u)$; the argument for $u \not \in \cell(w)$ is symmetric. 
	
	First observe that, if the right neighbor $u'$ of~$u$ in $H$ belongs to~$C_1$, 
	$\cell(u)$ does not extend beyond~$c_x(z)$; since every node of $C_2$ 
	is completely to the right of $c_x(z)$, we have $w \not\in \cell(u)$; 
	see Fig.~\ref{fig:dominations:1-app}. 
	Thus, we assume that $u$ lies on the right boundary of $C_1$, i.e., its 
	right neighbor $u'$ does not belong to~$C_1$. Note that, if $c_x(u)$ is also 
	a sink of~$C_1$, then~$C_1$ does not contain any other node other than~$c_x(u)$, 
	since $c_x(u)$ is the first node of~$C_1$; see Fig.~\ref{fig:dominations:2-app}. 
	Thus, either~$v_i$ is not a sink of $C_1$, or $c_x(u)=c_x(z)$.
	In the latter case, $r \notin \cell(u)$, since the minimal $x$-conflict $\{z,r\}$ 
	is \sat, which implies $w \notin \cell(u)$. Hence, it remains to consider the 
	case that~$c_x(u)\neq c_x(z)$ and~$c_x(u)$ is not a sink of $C_1$; see 
	Fig.~\ref{fig:dominations:3-app}. This implies that there is a directed path 
	from $c_x(u)$ to $c_x(z)$ in~$C_1$.
	
	Since $u' \not\in C_1$ and since $u$ is a south-oriented flat vertex, $u$ lies below $z$. Consider the right neighbor $z'$ of $z$, which lies to the right of $w$
	because $z$ is the sink of $C_1$. Recall that also $u'$ lies to the right of~$w$.
	Assume first that~$z'$ lies to the left of~$u'$, and let $c_x(z')=v_k$. 
	Consider now the graph~$D_x\langle i,k \rangle$ from $v_i$ to~$v_k$. This 
	graph contains two connected components, one containing $c_x(u)=v_i$ 
	and $c_x(z')=v_k$, and another one containing~$c_x(w) =v_j$, due to the presence of 
	the edge~$(u,u')$, which cannot be crossed. However, this implies a 
	contradiction to Condition~\ref{c:strips-precede} of a \good $st$-ordering, 
	since~$i<j<k$. 
	Thus,~$z'$ must lie to the right of $u'$; since $z$ is to the right of $u$, 
	the right boundary of $\cell(z)$ is to the right of the right boundary of $
	\cell(u)$. Hence, the fact that $r \not \in \cell(z)$ implies 
	that $r \not \in \cell(u)$, and thus $w \not \in \cell(u)$.
\end{proof}

\begin{lemma}\label{lem:consecutive}
	For any two vertices $u$ and $w$ of $H$ such that $\{u,w\}$ is a minimal 
  $x$-conflict, we have that $c_x(u)$ and $c_x(w)$ are consecutive in a \good $st$-ordering~$\mathcal{S}_x$.
\end{lemma}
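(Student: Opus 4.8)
The plan is to argue by contradiction. Suppose that in the \good $st$-ordering $\mathcal S_x=v_1,\ldots,v_m$ we have $c_x(u)=v_i$ and $c_x(w)=v_j$ with $i<j$, but that $v_i$ and $v_j$ are not consecutive, so $j\ge i+2$; fix an index $k$ with $i<k<j$. Since $\{u,w\}$ is an $x$-conflict we have $u\not\sim_x w$, that is, there is no directed path in $D_x$ between $v_i$ and $v_j$. Because $\mathcal S_x$ is an $st$-ordering, every directed edge of $D_x$ runs from a node of smaller index to one of larger index; hence there is no directed path from $v_k$ to $v_i$ and none from $v_j$ to $v_k$. So the only possibilities are $v_i\prec_x v_k$ or $v_i\not\sim_x v_k$, and $v_k\prec_x v_j$ or $v_k\not\sim_x v_j$. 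Were both $v_i\prec_x v_k$ and $v_k\prec_x v_j$ to hold, transitivity of $\prec_x$ would give $v_i\prec_x v_j$, contradicting $u\not\sim_x w$. Therefore $v_i\not\sim_x v_k$ or $v_k\not\sim_x v_j$; by symmetry I will assume $v_i\not\sim_x v_k$.

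The next step is to turn this incomparability of two nodes of $D_x$ into an actual $x$-conflict whose interval lies strictly inside $[i,j]$, thereby contradicting minimality. Let $p$ be any vertex of $H$ on the vertical path corresponding to $v_i$ and $q$ any vertex on the vertical path corresponding to $v_k$, so that $c_x(p)=v_i$, $c_x(q)=v_k$, and $p\not\sim_x q$. The condition that needs care is $c_y(p)\ne c_y(q)$, which the definition of a conflict requires: if $c_y(p)=c_y(q)$, then $p$ and $q$ would lie on a common maximal horizontal path of $H$, and the horizontal edges of that path, all oriented from left to right, would form a directed path of $D_x$ joining $v_i$ and $v_k$, contradicting $v_i\not\sim_x v_k$. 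Hence $\{p,q\}$ is a genuine $x$-conflict, with $c_x(p)=v_i$, $c_x(q)=v_k$, and $i\le i<k\le j$, so $\{p,q\}$ dominates $\{u,w\}$; since $k<j$ its interval $[i,k]$ is a proper subset of $[i,j]$, so $\{u,w\}$ is not minimal, a contradiction. The symmetric case $v_k\not\sim_x v_j$ is identical, producing an $x$-conflict on $[k,j]$ that dominates $\{u,w\}$ because $i\le k<j\le j$.

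I expect the only mildly delicate step, and the one I would write out in full, to be the passage from ``$v_i\not\sim_x v_k$ in $D_x$'' to the existence of a bona fide $x$-conflict with $x$-coordinates $v_i$ and $v_k$ — concretely, the verification that the two chosen vertices of $H$ cannot lie on a common horizontal path; the rest is bookkeeping with the $st$-ordering and transitivity of $\prec_x$. It is worth noting that the argument uses only that $\mathcal S_x$ is an $st$-ordering and not that it is \good, so the statement in fact holds for an arbitrary $st$-ordering of $D_x$.
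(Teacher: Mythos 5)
Your proof is correct, and it reaches the lemma by a somewhat different and more streamlined route than the paper. Both arguments proceed by contradiction and exhibit a conflict that properly dominates $\{u,w\}$, but they locate it differently. The paper fixes an intermediate node and splits on whether $c_x(u)$ and $c_x(w)$ lie in the same weakly connected component of the induced subgraph $D_x\langle i,j\rangle$: in the different-component case the intermediate node itself supplies the dominating conflict, while in the same-component case the paper must invoke a second source of that component from which $c_x(w)$ is reachable and argue that this source is incomparable with $c_x(u)$. Your observation that in an $st$-ordering all reachability goes forward in the index order makes the component analysis unnecessary: for any intermediate index $k$ the only possible relations are $v_i\prec_x v_k$ or $v_i\not\sim_x v_k$ (and symmetrically for the pair $(v_k,v_j)$), and transitivity of $\prec_x$ forbids both forward relations from holding simultaneously, so the intermediate node always yields a dominating conflict on $[i,k]$ or on $[k,j]$, each a proper sub-interval of $[i,j]$. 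Two further points are in your favour: you explicitly check the $c_y(p)\neq c_y(q)$ requirement in the definition of a conflict (the paper tacitly assumes the analogous fact for the conflicts it constructs, though it holds for the same reason you give), and your closing remark is accurate — the statement needs only that $\mathcal{S}_x$ is an $st$-ordering, and indeed the paper's own proof never invokes the two conditions defining a \good $st$-ordering either.
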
 
\begin{proof}
Suppose that there is a vertex $z \in H$ such that $c_x(z) = v_j$ 
lies between~$c_x(u) = v_i$ and $c_x(w) = v_k$ in $\mathcal{S}_x$, i.e., $i < j < k$. 
First, suppose that~$c_x(u)$ and $c_x(w)$ belong to the same connected component $C$ 
of $D_x\langle i,k\rangle$. Then, by definition, $v_i$ and $v_k$ are a source and 
a sink of $C$, respectively. Since $\{u, w\}$ is an $x$-conflict, we 
have $u \nprec_x w$; hence, there is another source $c_x(s)$ in~$C$, for some 
vertex $s \in H$, such that $s \prec_x w$. Since $c_x(u)$ and $c_x(s)$ are 
different sources of~$C$, we have $u \not\sim_x s$, and thus $\{u,s\}$ is an 
$x$-conflict dominating the minimal $x$-conflict $\{u,w\}$; a contradiction. 
Suppose now that $c_x(u)$ and $c_x(w)$ belong to different components. 
Then,~$c_x(z)$ does not 
belong to the same component as one of them, say $c_x(u)$. 
Thus, $u \not\sim_x z$, i.e., $\{u,z\}$ is an $x$-conflict dominating~$\{u,w\}$; a contradiction. 
\end{proof}

We are now ready to present our algorithm to assign $x$-coordinates to the vertices of $H$ so that all minimal $x$-conflicts are \sat. 

\subsection{A greedy drawing algorithm when $D_x$ and $D_y$ admit \good $st$-orderings} \label{sse:algorithm}
\medskip
We extend some definitions from vertices of~$H$ to nodes of~$D_x$. Namely, we 
say~$v_i\prec_x v_j$, if there is a directed path in~$D_x$ from~$v_i$ and~$v_j$. 
Also, we say that there is a (minimal) $x$-conflict~$\{v_i,v_j\}$ in~$D_x$, if 
there is a (minimal) $x$-conflict~$\{u,w\}$ in~$H$ such that~$c_x(u)=v_i$ and~$c_x(w)=v_j$. 

For $0<i,j\le m$, let $x_{i,j}:=x_{j}-x_i$ be the \emph{$x$-distance} between~$v_i$ and~$v_j$.
To prove that a \good $st$-ordering $\mathcal{S}_x$ allows for a greedy 
realization, we develop a system of inequalities describing the geometric 
requirements for the $x$-distance of consecutive nodes in $\mathcal{S}_x$ in a 
greedy drawing, and then prove that this system always admits a solution 
since $\mathcal{S}_x$ is \good. First note that, for every $0<i<m$ such that 
there is no minimal $x$-conflict $\{v_i,v_{i+1}\}$, we only require the 
$x$-distance to be positive, so we define the following \emph{trivial inequality}.

\begin{equation*}
  x_{i,i+1}>0.\tag{trivial inequality}
  \label{eq:trivial}
\end{equation*}

For every $0<i<m$ such that there is a minimal $x$-conflict $\{v_i,v_{i+1}\}$,
we define two inequalities that describe the necessary conditions for the 
$x$-conflict to be \sat. Let~$u$ and $w$, with~$c_x(u)=v_i$ and~$c_x(w)=v_{i+1}$, be responsible 
for~$\{v_i,v_{i+1}\}$. We assume 
that $u \prec_y w$; the other case is symmetric.
\begin{figure}[t]
  \begin{minipage}[b]{.5\textwidth}
    \centering
    \includegraphics[page=1]{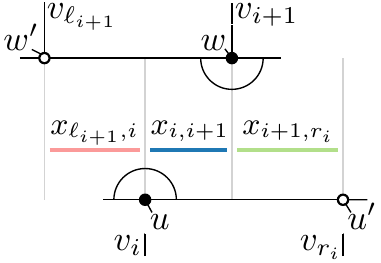}
    \hfill
    \includegraphics[page=2]{inequalities}
    \caption{Solving the inequalities of~$x_{i,i+1}$
      implies~$u\notin\cell(w)$ and~$w\notin\cell(u)$.}
    \label{fig:inequalities}
  \end{minipage}
  \hfill
  \begin{minipage}[b]{.46\textwidth}
    \centering
    \includegraphics{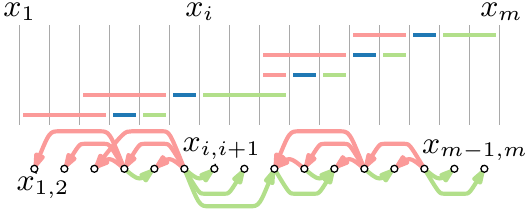}
    \caption{The relation graph defined by the left and right
      inequalities}
    \label{fig:relationgraph}
  \end{minipage}
\end{figure}

By assumption, $v_i$ lies to the bottom left
of~$v_{i+1}$, so we only have to consider the part of~$\cell(w)$ to the bottom left of~$w$, which we denote by $\cell_{\swarrow}(w)$ (dark region in Fig.~\ref{fig:inequalities}). 
Let~$(w',w)$ be the bottommost incoming edge of~$v_{i+1}$ with~$c_x(w')=v_{\ell_{i+1}}$. 
Then, the left boundary of~$\cell_{\swarrow}(w)$ is delimited by the vertical 
line through the mid-point of~$(w',w)$.
Thus, we require
\[x_{i,i+1}>x_{\ell_{i+1},i+1}/2\Leftrightarrow x_{i,i+1}>x_{{\ell_{i+1}},i}\]
Symmetrically, we only consider the part~$\cell_{\nearrow}(u)$ 
of~$\cell(u)$ to the top right of~$u$ (light region in 
Fig.~\ref{fig:inequalities}), which is bounded by the vertical line through the mid-point of the topmost outgoing edge~$(u,u')$ of~$v_i$ with~$c_x(u')=v_{r_i}$.
Thus, we require
\[x_{i,i+1}>x_{i,{r_i}}/2\Leftrightarrow x_{i,i+1}>x_{i+1,{r_i}}\]
Since $v_{\ell_{i+1}}$ and~$v_i$ (and $v_{i+1}$ and~$v_{r_i}$) are not 
necessarily consecutive in the $st$-ordering, we express the 
$x$-distance $x_{{\ell_{i+1}},i}$ (and~$x_{i,{r_i}}$) as the sum of 
the $x$-distances between the consecutive nodes between them 
in the $st$-ordering. This gives the \emph{left} and the 
\emph{right inequality}.

\begin{flalign*}
  x_{i,i+1}>\sum_{j={\ell_{i+1}}}^{i-1}x_{j,j+1} 
  & \quad\text{(left inequality)}
  & x_{i,i+1}>\sum_{j=i+1}^{{r_i}-1}x_{j,j+1}
  & \quad\text{(right inequality)}
\end{flalign*}

\noindent Note that for every variable~$x_{i,i+1}$ there exists either a trivial
inequality or a left and right inequality. Consider the following triangulated 
matrices, where $c_{i,j}=-1$ if $i>j\ge\ell_{i+1}$ or~$i<j\le r_i$,
where $c_{i,j}=1$ if~$i=j$, and~$c_{i,j}=0$ otherwise.

\begin{align*}
  A&=\left(\begin{matrix}
    c_{1,1} &  & 0 & \\
    \vdots & \ddots & \\
    c_{m-1,1} & \cdots & c_{m-1,m-1}
  \end{matrix}\right)\\
  B&=\left(\begin{matrix}
    c_{1,1} & \cdots & c_{1,m-1} \\
     & \ddots & \vdots& \\
    0 &  & c_{m-1,m-1}\\
  \end{matrix}\right)\\
  x&=\left(\begin{matrix}
  x_{1,2}\\
  \vdots\\
  x_{m-1,m}
  \end{matrix}\right)
\end{align*}
We express the left and trivial (right and trivial) inequalities as $Ax>0$ (as $Bx>0$).
Any vector~$x>0$ determines a unique rectilinear drawing:
we assign to each vertex the $y$-coordinate defined by $S_y$,
we assign to~$v_1$ the $x$-coordinate~$x_1=0$ and to every other~$v_i$ the $x$-coordinate $x_{i}=x_{i-1}+x_{i-1,i}$.
Since~$x>0$,
the $x$-coordinates preserve the \good $st$-ordering
and resolve all $x$-conflicts.

\begin{lemma}\label{lem:inequalities}
A vector~$x=(x_{1,2},\ldots,x_{m-1,m})^\top>0$ solves both~$Ax>0$ and $Bx>0$
if and only if it determines a drawing where all $x$-conflicts are \sat.
\end{lemma}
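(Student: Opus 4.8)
The plan is to localize the statement to a single minimal $x$-conflict and then read the left and right inequalities off the geometry of the two cells involved. Since $x>0$, every trivial inequality holds automatically, so $Ax>0$ is equivalent to the conjunction of all left inequalities and $Bx>0$ to that of all right inequalities; moreover, any $x>0$ determines a drawing $\Gamma_x$ that respects $\mathcal{S}_x$ (as observed right before the lemma). By Lemma~\ref{lem:dominations} and its trivial converse, all $x$-conflicts are \sat in $\Gamma_x$ if and only if all \emph{minimal} $x$-conflicts are, and by Lemma~\ref{lem:consecutive} every minimal $x$-conflict has the form $\{v_i,v_{i+1}\}$, hence corresponds to exactly one index $i$ carrying a left and a right inequality. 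It therefore suffices to prove, for each such $i$, that the minimal $x$-conflict $\{v_i,v_{i+1}\}$ is \sat in $\Gamma_x$ if and only if both the left and the right inequality for $i$ hold; conjoining over all such $i$ then yields the lemma.

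The heart of the argument is this per-conflict equivalence. Fix the responsible vertices $u,w$ with $c_x(u)=v_i$, $c_x(w)=v_{i+1}$, and assume $u\prec_y w$ (the case $w\prec_y u$ is symmetric). By Lemma~\ref{le:responsible-flat} both $u$ and $w$ are flat, and by the definition of responsible vertices $u$ is the topmost vertex of the vertical path $v_i$ and $w$ is the bottommost vertex of the vertical path $v_{i+1}$; hence $u$ is a north-oriented flat vertex (a horizontal edge on each side and no upward edge) and $w$ is a south-oriented flat vertex (a horizontal edge on each side and no downward edge). Consequently $\cell(u)$ is a vertical strip, unbounded above, whose right boundary is the perpendicular bisector of $u$ and its right horizontal neighbour $u'$, and $\cell(w)$ is a vertical strip, unbounded below, whose left boundary is the perpendicular bisector of $w$ and its left horizontal neighbour $w'$. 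Here $c_x(u')=v_{r_i}$ is the head of the topmost outgoing edge of $v_i$ and $c_x(w')=v_{\ell_{i+1}}$ is the tail of the bottommost incoming edge of $v_{i+1}$; both are well defined, and since $v_i\not\sim_x v_{i+1}$ there is no $D_x$-edge between the two vertical paths, so $i+1<r_i$ and $\ell_{i+1}<i$, which makes the sums in the two inequalities non-empty and supported only on indices strictly between $v_i$ and $v_{i+1}$ in $\mathcal{S}_x$. Since $v_i$ lies to the bottom-left of $v_{i+1}$ in any drawing respecting $\mathcal{S}_x$, in $\Gamma_x$ the vertex $w$ lies above and to the right of $u$, so $w\in\cell(u)$ holds precisely when $x(w)$ is to the left of the bisector of $u$ and $u'$; rewriting this midpoint condition and telescoping $x_{i,r_i}$ as the sum of the $x$-distances of consecutive nodes between $v_{i+1}$ and $v_{r_i}$ shows that $w\notin\cell(u)$ is exactly the right inequality for $i$. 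Symmetrically, $u$ lies in the bottom-left part of $\cell(w)$ precisely when $x(u)$ is to the right of the bisector of $w$ and $w'$, and $u\notin\cell(w)$ is exactly the left inequality for $i$ (see Fig.~\ref{fig:inequalities}). Since $\{v_i,v_{i+1}\}$ is \sat precisely when both $w\notin\cell(u)$ and $u\notin\cell(w)$, the per-conflict equivalence follows.

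I expect the main obstacle to be this second step, and within it the precise identification of the two cell shapes: one must argue that, on the side of $\cell(u)$ facing $w$ (respectively, on the side of $\cell(w)$ facing $u$), the only active Voronoi constraint is the single horizontal neighbour named in the inequality. This relies on $u$ (resp.\ $w$) being the extreme vertex of its vertical path and being flat with the correct orientation, so that it has exactly one horizontal edge on the relevant side and no vertical edge obstructing that side, together with the structural fact, coming from $v_i\not\sim_x v_{i+1}$, that this horizontal neighbour lies strictly beyond the other conflicting path. The remaining work — the elementary algebra turning the midpoint conditions into the displayed summation form, and the bookkeeping for the symmetric case $w\prec_y u$ — is routine.
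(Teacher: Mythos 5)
Your proof is correct and follows essentially the same route as the paper's: reduce via Lemmas~\ref{lem:dominations} and~\ref{lem:consecutive} to the minimal $x$-conflicts $\{v_i,v_{i+1}\}$, and identify the left and right inequalities with the two cell-membership conditions $u\notin\cell(w)$ and $w\notin\cell(u)$ exactly as in the discussion preceding the lemma and Fig.~\ref{fig:inequalities}. You merely spell out in more detail the geometric justification (flatness and orientation of the responsible vertices, the bisector description of the relevant cell boundaries, non-emptiness of the sums) that the paper delegates to that preceding discussion.
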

\begin{proof}
	First, suppose that~$x$ solves both~$Ax>0$ and~$Bx>0$.
	Let~$\{v_i,v_j\}$ be a minimal $x$-conflict. By Lemma~\ref{lem:consecutive},
	we have either~$j=i-1$ or~$j=i+1$; without loss of generality, assume $j=i+1$.
	Consider the $i$-th row~$a_i$ in matrix~$A$. By definition of~$A$, we have
	that~$a_i\cdot x>0$ is equivalent to the left inequality of~$\{v_i,v_j\}$.
	If the left inequality of~$\{v_i,v_j\}$ is \sat, then~$v_i$ lies outside
	the cell of~$v_j$; see Fig.~\ref{fig:inequalities}. 
	Analogously, the $i$-th row~$b_i$ in matrix~$B$
	gives the right inequality $b_i\cdot x>0$ of~$\{v_i,v_j\}$, which
	implies that~$v_j$ lies outside the cell of~$v_i$. Hence, the
	minimal $x$-conflict~$\{v_i,v_j\}$ is \sat
	and, by Lemmas~\ref{lem:conflict-sat-greedy} and~\ref{lem:dominations},
	all $x$-conflicts are resolved in~$\Gamma$.
	
	Now, suppose that~$x$ does not solve both~$Ax>0$ and~$Bx>0$; without loss 
	of generality, assume that some row~$a_i$ of~$A$ is not \sat, that is, we have
	\[x_{i,i+1}\le\sum_{j={\ell_{i+1}}}^{i-1}x_{j,j+1}=x_{\ell_{i+1},i}=x_{\ell_{i+1},i+1}-x_{i,i+1}.\]
	Then,~$v_i$ lies in $\cell(v_{i+1})$, so the drawing determined by~$x$
	is not greedy.
\end{proof}

Note that we can always solve $Ax>0$ and $Bx>0$ independently by solving the
linear equation systems $Ax=1$ and $Bx=1$ via forward %
substitution, since~$A$ and~$B$ are triangular. We prove that there is always a vector~$x$ solving~$Ax>0$ and~$Bx>0$ simultaneously.
Let~$C=A+B-I_{m-1}$ be the matrix defined by the values of~$c_{i,j}$.
We claim that any solution to the linear inequality system $Cx>0$ is also a solution to
both~$Ax>0$ and~$Bx>0$.
To see this, consider the inequalities described by the $i$-th row~$a_i$,~$b_i$, and $c_i$ of~$A$,~$B$, and~$C$, respectively. We have
\begin{eqnarray*}
a_i\cdot x>0&\Leftrightarrow& x_{i,i+1}>\sum_{j=1}^{i-1}(c_{i,j} x_{j,j+1}),\\
b_i\cdot x>0&\Leftrightarrow& x_{i,i+1}>\sum_{j=i+1}^{m-1}(c_{i,j} x_{j,j+1}),\\
c_i\cdot x>0&\Leftrightarrow& x_{i,i+1}>\sum_{j=1}^{i-1}(c_{i,j} x_{j,j+1}) + \sum_{j=i+1}^{m-1}(c_{i,j} x_{j,j+1}).
\end{eqnarray*}
So $c_i\cdot x>0$ implies both $a_i\cdot x>0$ and $b_i\cdot x>0$ and our claim follows. We now show that~$C$ can be triangulated.
For this, we define the \emph{relation graph} corresponding to the adjacency
matrix $I_{m-1}-C$ that contains a vertex~$u_i$
for each interval~$x_{i,i+1}$, $1\le i<m$, and a directed edge from a 
vertex~$u_i$ to a vertex~$u_j$ if and only if $c_{i,j}=-1$;
see Fig.~\ref{fig:relationgraph}.

\begin{lemma}\label{lem:acyclic}
  The relation graph of a \good $st$-ordering is acyclic.
\end{lemma}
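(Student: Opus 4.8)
The plan is to argue by contradiction: suppose the relation graph contains a directed cycle, and show that this forces a violation of one of the two \good conditions (\ref{c:strips-2comp}) or (\ref{c:strips-precede}). The first step is to understand what an edge of the relation graph means geometrically. By the definition just above the lemma, there is an edge from $u_i$ to $u_j$ exactly when $c_{i,j}=-1$, i.e. when either $i>j\ge \ell_{i+1}$ or $i<j\le r_i$. In the first situation, the left inequality for the minimal $x$-conflict $\{v_i,v_{i+1}\}$ involves the interval $x_{j,j+1}$, which happens because $v_{\ell_{i+1}}$ is the $x$-coordinate of the other endpoint $w'$ of the bottommost incoming edge of $v_{i+1}$; in the second, symmetrically, $v_{r_i}$ is the endpoint of the topmost outgoing edge of $v_i$. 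So an edge $u_i\to u_j$ with $j<i$ ``points backward'' and is caused by an incoming edge of $v_{i+1}$ reaching back past $v_i$, while an edge $u_i\to u_j$ with $j>i$ ``points forward'' and is caused by an outgoing edge of $v_i$ reaching forward past $v_{i+1}$.

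Next I would translate a directed cycle in the relation graph into a configuration of overlapping intervals in $\mathcal{S}_x$. A cycle $u_{i_1}\to u_{i_2}\to\cdots\to u_{i_k}\to u_{i_1}$ must contain both a forward edge and a backward edge (a monotone sequence of indices cannot close up), so there are two consecutive edges, say $u_p\to u_q$ forward ($p<q\le r_p$) and $u_q\to u_p'$ backward ($p'<q$, $p'\ge \ell_{q+1}$ — or the analogous thing with a shared vertex along the cycle). The key observation is that each such edge certifies the existence of an actual edge of $H$ (the topmost outgoing edge of $v_p$, respectively the bottommost incoming edge of $v_{q+1}$) that spans a whole range of consecutive $st$-ordering positions and therefore, being a single vertical path in $H$, forces those ranges to be ``linked'': any horizontal line crossing that vertical path crosses all the vertical paths strictly between its endpoints, so by Lemma~\ref{le:same-coordinate} (or directly by planarity of $D_x$) those nodes cannot be split off into a separate component. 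Chasing the cycle, I would extract two indices $i<j$ such that $D_x\langle i,j\rangle$ is forced to have three or more components, or two components that interleave in $\mathcal{S}_x$, contradicting \good-ness. Concretely, the backward edge gives a ``bridge'' on one side and the forward edge gives a ``bridge'' on the other side of some index, and the existence of the minimal $x$-conflicts $\{v_p,v_{p+1}\}$ and $\{v_q,v_{q+1}\}$ (which is what made $u_p$ and $u_q$ vertices of the relation graph with these incident edges in the first place) says that $v_p\not\sim_x v_{p+1}$ and $v_q\not\sim_x v_{q+1}$, i.e. there are genuine ``cuts'' there; combining a cut with the two bridges yields the forbidden component structure.

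I expect the main obstacle to be the bookkeeping in the last step: carefully choosing, from an arbitrary directed cycle, the pair of indices $i<j$ and the two components of $D_x\langle i,j\rangle$, and verifying that the edges of $H$ certified by the relation-graph edges really do keep the wrong nodes in the same component while the $x$-conflicts really do separate the components in an incompatible order. It may be cleanest to first reduce to a cycle of length two (an edge $u_i\to u_j$ and an edge $u_j\to u_i$ with $i\ne j$), handle that base case by exhibiting three components directly, and then show any longer cycle can be shortcut down to such a pair using transitivity of $\prec_x$ together with the spanning property of the certified $H$-edges. Once the contradiction is reached for the two-cycle, the general case should follow by an ``innermost crossing pair'' argument analogous to the one used in the proof of Theorem~\ref{th:universal-char} to pick consecutive conflicting vertices.
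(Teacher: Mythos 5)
Your reduction to a two-cycle is the right first move and matches the paper's strategy (the paper achieves it via the monotonicity of the coefficients, Property~\ref{prop:abconsecutive}: the set of $j$ with $c_{i,j}=-1$ forms intervals abutting $i$, so a shortest cycle with a forward and a backward edge can always be shortened to length two). The genuine gap is in your final step. From a two-cycle between $u_\alpha$ and $u_\beta$ (so $\alpha<\beta$, $\alpha\ge\ell_{\beta+1}$, $\beta+1\le r_\alpha$, with minimal $x$-conflicts $\{v_\alpha,v_{\alpha+1}\}$ and $\{v_\beta,v_{\beta+1}\}$) you hope to exhibit a window $D_x\langle i,j\rangle$ with three components or two interleaved ones, i.e.\ a violation of \ref{c:strips-2comp} or \ref{c:strips-precede}. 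But that configuration is \emph{combinatorially consistent} with a \good $st$-ordering once $\beta>\alpha+1$: the paper's Properties~\ref{prop:path_a_b} and~\ref{prop:path_a1_b1} show that the \good conditions merely \emph{force} the directed paths $v_\alpha\prec_x v_\beta$ and $v_{\alpha+1}\prec_x v_{\beta+1}$ to exist, and with those paths present no window has a forbidden component structure. So ``combining a cut with the two bridges'' does not produce the contradiction you want; only the degenerate case $\beta=\alpha+1$ dies by component counting (Property~\ref{prop:bnota1}).

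The missing idea is that the contradiction is ultimately \emph{geometric}, not combinatorial: the responsible vertices of the two conflicts are flat vertices with prescribed north/south orientations, and the edge $(w_{\ell_{\beta+1}},w_{\beta+1})$ certified by the backward relation-graph edge spans all $x$-positions from $\ell_{\beta+1}\le\alpha$ up to $\beta+1$ at the fixed height $\psi_{\beta+1}$. The paper then runs a case analysis on the relative order of $\psi_{\alpha+1}$, $\psi_\beta$, $\psi_{\beta+1}$ and shows that, in every case, one of the forced directed paths ($v_{\alpha+1}\prec_x v_{\beta+1}$ or $v_{\alpha+1}\prec_x v_\beta$) must cross that horizontal edge, contradicting planarity of $H$. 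Your sketch never brings the $y$-coordinates, the flat-angle orientations, or planarity of the underlying drawing into play, and without them the argument cannot close. (A smaller imprecision: the shortcut of longer cycles is cleaner via the interval property of the $c_{i,j}$ than via transitivity of $\prec_x$, since relation-graph edges are defined by the positions $\ell_{i+1},r_i$ rather than by reachability in $D_x$.)
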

\begin{proof}
	Let~$S_x=v_1,\ldots,v_m$ be a good $st$-ordering of~$D_x$,
	let~$A,B,C$ be the matrices as defined above,
	and let~$u_1,\ldots,u_m$ be the vertices of its relation graph.
	We call a directed edge~$(u_i,u_j)$ a \emph{left edge} if~$i<j$ and a
	\emph{right edge} otherwise. Note that a left (right) edge corresponds to a
	part of a left (right) inequality.
	We first have to prove the following property for the values $c_{i,j}$
	of the matrices~$A,B,C$.

	\begin{property}\label{prop:abconsecutive}
		For any~$0<i<j<k<m$, we have $c_{k,j}\le c_{k,i}$ and $c_{i,j}\le c_{i,k}$.
	\end{property}
	\begin{proof}
		Assume that $0=c_{k,j}>c_{k,i}=-1$. By definition of~$c_{k,i}$, we have 
		that $i\ge\ell_{k+1}$. But then $j>i\ge\ell_{k+1}$ implies~$c_{k,j}=-1$.
		Furthermore, assume that $0=c_{i,k}>c_{i,j}=-1$. By definition of~$c_{i,j}$,
		we have that~$k\le r_i$. But then $j<k\le r_i$ implies~$c_{i,k}=-1$.
	\end{proof}

	\noindent Consider the shortest cycle $u_{\lambda_1},\ldots,u_{\lambda_k},u_{\lambda_1}$ 
	in the relation graph. Obviously, there is at least one left edge and at least one right
	edge in the cycle. We will first show that this shortest cycle has length~$2$.
	Without loss of generality, assume that $(u_{\lambda_1},u_{\lambda_2})$ is a right edge; the
	other case is symmetric.
	Let~$i$ be the smallest number such that~$(u_{\lambda_i},u_{\lambda_{i+1}})$
	is a left edge. Then~$\lambda_1<\ldots<\lambda_i$, and we consider three
	cases.
	
	\medskip
	\newcase
	\renewcommand{\thecasecounter}{\Roman{casecounter}}
	\ccase{c:longcycle-equal} $\lambda_{i+1}=\lambda_{i-1}$.
	Then there is a cycle $u_{\lambda_{i-1}},u_{\lambda_{i}},u_{\lambda_{i+1}}=u_{\lambda_{i-1}}$
	of length~2.
	
	\smallskip
	\ccase{c:longcycle-smaller} $\lambda_{i+1}<\lambda_{i-1}<\lambda_i$.
	The edge~$(u_{\lambda_i},u_{\lambda_{i+1}})$ is a left edge, so 
	$c_{\lambda_{i},\lambda_{i+1}}=-1$. However, by Property~\ref{prop:abconsecutive},
	$c_{\lambda_{i},\lambda_{i-1}}\le c_{\lambda_{i},\lambda_{i+1}}=-1$,
	so there must be a left edge~$(u_{\lambda_i},u_{\lambda_{i-1}})$.
	Then there is a cycle $u_{\lambda_{i-1}},u_{\lambda_{i}},u_{\lambda_{i-1}}$
	of length~2.
	
	\smallskip
	\ccase{c:longcycle-larger} $\lambda_{i-1}<\lambda_{i+1}<\lambda_i$.
	The edge~$(u_{\lambda_{i-1}},u_{\lambda_{i}})$ is a right edge, so 
	$c_{\lambda_{i-1},\lambda_{i}}=-1$. However, by Property~\ref{prop:abconsecutive},
	$c_{\lambda_{i-1},\lambda_{i+1}}\le c_{\lambda_{i-1},\lambda_{i}}=-1$,
	so there must be a right edge~$(u_{\lambda_{i-1}},u_{\lambda_{i+1}})$.
	Hence, there is a shorter cycle 
	$u_{\lambda_1},\ldots,u_{\lambda_{i-1}},u_{\lambda_{i+1}},\ldots,u_{\lambda_{k}}$.
	\renewcommand{\thecasecounter}{\arabic{casecounter}}

	\medskip\noindent
	From our case analysis, it follows that~$k=2$. Let~$\alpha=\lambda_1$ and~$\beta=\lambda_2$.
	Then there are two minimal $x$-conflicts $\{\alpha,\alpha+1\}$ 
	and~$\{\beta,\beta+1\}$ with $\alpha<\beta$, $\alpha\ge\ell_{\beta+1}$, 
	and~$\beta+1\le r_{\alpha}$.
	Let $w_\alpha,w_{\alpha+1},w_\beta,w_{\beta+1}$ be the responsible
	vertices for these two $x$-conflicts with
	$c_x(w_\alpha)=v_\alpha$, $c_x(w_{\alpha+1})=v_{\alpha+1}$,
	$c_x(w_\beta)=v_\beta$, and $c_x(w_{\beta+1})=v_{\beta+1}$.
	Assume that $w_{\alpha}\prec_y w_{\alpha+1}$; the other case is symmetric.
	By the definition of $x$-conflicts, we have~$v_\alpha\not\prec_x v_\beta$ 
	and~$v_\beta\not\prec_x v_{\beta+1}$. 
	We first show that we cannot have~$\beta=\alpha+1$.
	
	\begin{figure}[t]
		\subcaptionbox{}{\includegraphics[page=1]{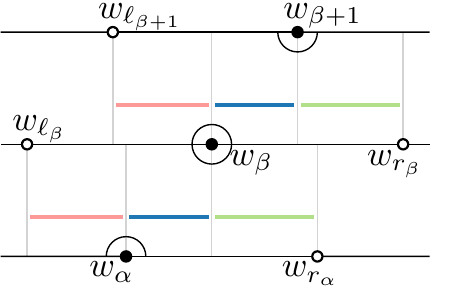}}
		\hfil
		\subcaptionbox{}{\includegraphics[page=2]{acyclicity-ba1}}
		\hfil
		\subcaptionbox{}{\includegraphics[page=3]{acyclicity-ba1}}
		\caption{Illustration for the proof of Property~\ref{prop:bnota1}: $\beta>\alpha+1$
			by assuming that~$\beta=\alpha+1$. (a) $v_\alpha\not\prec_x v_{\beta+1}$; (b) $v_\alpha\prec_x v_{\beta+1}$; (c) $v_\alpha\prec_x v_{\beta+1}$.}
		\label{fig:acyclicity-ba1}
	\end{figure}

	\begin{property}\label{prop:bnota1}
		$\beta>\alpha+1$.
	\end{property}
	\begin{proof} 
		Assume that $\beta = \alpha+1$; see Fig.~\ref{fig:acyclicity-ba1}.
		If there is no directed path between~$v_\alpha$ and~$v_{\beta+1}$,
		then the graph~$D_x\langle \alpha,\beta+1\rangle$
		contains three connected components, which contradicts Condition~\ref{c:strips-2comp}.  
		On the other hand, if there is a directed path between~$v_\alpha$ 
		and~$v_{\beta+1}$, then the graph~$D_x\langle \alpha,\beta+1\rangle$
		contains two connected components;
		one component that contains exactly~$v_\alpha$ and~$v_{\beta+1}$,
		and one component that contains only~$v_\beta$. However, since $\alpha<\beta<\beta+1$,
		this contradicts Condition~\ref{c:strips-precede}.
	\end{proof}

	\noindent We now show some properties on the 
	existence of directed paths between vertices~$v_\alpha,v_{\alpha+1},v_\beta,v_{\beta+1}$.
	
	\begin{property}\label{prop:path_a_b}
		$v_\alpha\prec_x v_\beta$. 
	\end{property}
	\begin{proof}
		Assume that~$v_\alpha\not\prec_x v_\beta$.
		Consider the graph~$D_x\langle \alpha,\beta+1\rangle$.
		If $v_\alpha\not\prec_x v_{\beta+1}$, then the graph has
		three connected components, which contradicts Condition~\ref{c:strips-2comp}. 
		Otherwise, there is a connected component
		that contains $v_\beta$ and a connected component that 
		contains~$v_\alpha$ and~$v_{\beta+1}$, which contradicts
		Condition~\ref{c:strips-precede}.
	\end{proof}
	
	\begin{property}\label{prop:path_a1_b1}
		$v_{\alpha+1}\prec_x v_{\beta+1}$.
	\end{property}
	\begin{proof}
		Assume that~$v_{\alpha+1}\not\prec_x v_{\beta+1}$. Consider the 
		graph~$D_x\langle \alpha,\beta+1\rangle$.
		If $v_\alpha\not\prec_x v_{\beta+1}$, then the graph has
		three connected components, which contradicts Condition~\ref{c:strips-2comp}. 
		Otherwise, there is a connected component
		that contains $v_{\alpha+1}$ and one that 
		contains~$v_\alpha$ and~$v_{\beta+1}$, which contradicts Condition~\ref{c:strips-precede}.
	\end{proof}
	
	\begin{property}\label{prop:path_a1_b} 
		$v_{\alpha+1}\prec_x v_{\beta}$.
	\end{property}
	\begin{proof}    
		Assume that~$v_{\alpha+1}\prec_x v_{\beta}$.
		Since $v_\alpha\prec_x v_\beta$, by 
		Property~\ref{prop:path_a_b}, it follows that
		the graph~$D_x\langle \alpha,\beta\rangle$ has a connected component
		that contains~$v_{\alpha+1}$ and one that 
		contains~$v_\alpha$ and~$v_{\beta}$; a contradiction to Condition~\ref{c:strips-precede}.
	\end{proof}

	Assume that we know the exact $y$-coordinates of every vertex. 
	We now have to analyze the relative positions of the vertices
	$w_\alpha,w_{\alpha+1},w_\beta,w_{\beta+1}$ with $y$-coordinates
	$\psi_\alpha,\psi_{\alpha+1},\psi_\beta,\psi_{\beta+1}$.  
	We will show
	that any choice of $y$-coordinates gives a contradiction. Note 
	that~$\psi_\alpha<\psi_{\alpha+1}$ by assumption and~$\psi_\beta\neq\psi_{\beta+1}$
	by the $x$-conflict~$\{\beta,\beta+1\}$.
	Recall that~$\alpha\ge\ell_{\beta+1}$ and~$\beta+1\le r_{\alpha}$.
	Further, let~$(w_\alpha,w_{r_{\alpha}})$ with $c_x(w_{r_{\alpha}})=v_{r_{\alpha}}$
	be the right horizontal edge of~$w_\alpha$,
	let $(w_{\ell_{\alpha+1}},w_{\alpha+1})$ with $c_x(w_{\ell_{\alpha+1}})=v_{\ell_{\alpha+1}}$
	be the left horizontal edge of~$w_{\alpha+1}$, 
	let $(w_\beta,w_{r_{\beta}}$ with $c_x(w_{r_{\beta}})=v_{r_{\beta}}$
	be the right horizontal edge of~$w_\beta$,
	and let $(w_{\ell_{\beta+1}},w_{\beta+1})$ with $c_x(w_{\ell_{\beta+1}})=v_{\ell_{\beta+1}}$
	be the left horizontal edge of~$w_{\beta+1}$; refer to the definition of the
	left and right inequalities.
	We distinguish between the following cases. 
	
	\begin{figure}[t]
		\centering
		\subcaptionbox{\label{fig:acyclicity-bb1-a1b1-1}}{\includegraphics[page=1]{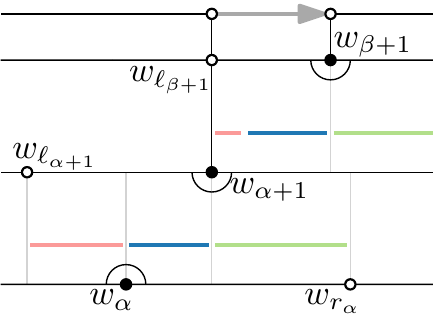}}
		\hfill
		\subcaptionbox{\label{fig:acyclicity-bb1-a1b1-2}}{\includegraphics[page=2]{acyclicity-bb1}}
		\hfill
		\subcaptionbox{\label{fig:acyclicity-bb1-b1a1}}{\includegraphics[page=3]{acyclicity-bb1}}
		\caption{Illustration for Case~\ref{sc:nocycle-neq-bb1}:  $\psi_\beta<\psi_{\beta+1}$;
			(a)--(b) The two cases for the path from~$v_{\alpha+1}$ to~$v_{\beta+1}$
			in  Case~\ref{ssc:nocycle-neq-bb1-b1_above_a1}: $\psi_{\beta+1}>\psi_{\alpha+1}$.
			in (a)~the  path starts above~$\psi_{\beta+1}$, while in (b)~the path starts below~$\psi_{\beta+1}$.
			(c) Case~\ref{ssc:nocycle-neq-bb1-b1_below_a1}: $\psi_{\beta}<\psi_{\beta+1}<\psi_{\alpha+1}$. 
			The path from~$v_{\alpha+1}$ to~$v_\beta$.}
		\label{fig:acyclicity-bb1}
	\end{figure}
		
	\newcase
	\medskip
	\ccase{sc:nocycle-neq-bb1}
	$\psi_\beta<\psi_{\beta+1}$. This implies that~$w_\beta\prec_y w_{\beta+1}$
	and thus~$w_\beta$ has a north-oriented flat angle and~$w_{\beta+1}$ has a 
	south-oriented flat angle; see Fig.~\ref{fig:acyclicity-bb1}.
	
	\smallskip
	\subcase{ssc:nocycle-neq-bb1-b1_above_a1}
	$\psi_{\beta+1}>\psi_{\alpha+1}$.
	By Property~\ref{prop:path_a1_b1},~$v_{\alpha+1}\prec_x v_{\beta+1}$. 
	If the corresponding path starts in~$v_{\alpha+1}$ at a $y$-coordinate $\ge\psi_{\beta+1}$,
	then we have that $\alpha\ge\ell_{\beta+1}\ge\alpha+1$, a contradiction;
	see Fig.~\ref{fig:acyclicity-bb1-a1b1-1}.
	Otherwise, since~$v_{\beta+1}$ has a south-oriented flat angle,
	this path has to end at a $y$-coordinate $\ge\psi_{\beta+1}$
	and its last segment is a horizontal segment.
	Hence, the path has to traverse some point with 
	$y$-coordinate~$\psi_{\beta+1}$ and with $x$-coordinate between~$\alpha+1$
	and~$\beta+1$; see Fig.~\ref{fig:acyclicity-bb1-a1b1-2}. 
	However, all of these points lie on the 
	edge~$(w_{\ell_{\beta+1}},w_{\beta+1})$, due to~$\ell_{\beta+1}\le\alpha$,
	which contradicts planarity.
	
	\smallskip
	\subcase{ssc:nocycle-neq-bb1-b1_at_a1}
	$\psi_{\beta+1}=\psi_{\alpha+1}$.
	Then we have $\ell_{\beta+1}=\alpha+1>\ell_{\beta+1}$; a contradiction.
	
	\smallskip
	\subcase{ssc:nocycle-neq-bb1-b1_below_a1}
	$\psi_{\beta}<\psi_{\beta+1}<\psi_{\alpha+1}$; see Fig.~\ref{fig:acyclicity-bb1-b1a1}.
	By Property~\ref{prop:path_a1_b},~$v_{\alpha+1}\prec_x v_{\beta}$. 
	Since~$w_{\alpha+1}$ has a south-oriented flat angle
	and~$w_{\beta}$ has a north-oriented flat angle, the corresponding path has to
	traverse some point with $y$-coordinate~$\psi_{\beta+1}$ 
	and with $x$-coordinate between~$v_{\alpha+1}$ and~$v_{\beta}$.
	However, all of these points lie on the edge $(w_{\ell_{\beta+1}},w_{\beta+1})$,
	due to~$\ell_{\beta+1}\le\alpha$, which contradicts planarity.

	\medskip
	\ccase{sc:nocycle-neq-b1b}
	$\psi_{\beta+1}<\psi_{\beta}$. This implies that~$w_{\beta+1}\prec_y w_{\beta}$
	and thus~$w_{\beta+1}$ has a north-oriented flat angle and~$v_{\beta}$ has a 
	south-oriented flat angle; see Fig.~\ref{fig:acyclicity-b1b}.
	
	\smallskip
	\subcase{ssc:nocycle-neq-b1b-b1_above_a1}
	$\psi_\beta>\psi_{\beta+1}>\psi_{\alpha+1}$.
	By Property~\ref{prop:path_a1_b},~$v_{\alpha+1}\prec_x v_{\beta}$. 
	If the corresponding path starts in~$v_{\alpha+1}$ at a $y$-coordinate $>\psi_{\beta+1}$,
	then we have that $\alpha\ge\ell_{\beta+1}\ge\alpha+1$, a contradiction;
	see Fig.~\ref{fig:acyclicity-b1b-a1b1-1}.
	Otherwise, since~$w_{\beta}$ has a south-oriented flat angle,
	this path has to end at a $y$-coordinate $\ge\psi_{\beta}>\psi_{\beta+1}$
	and its last segment is a horizontal segment.
	Hence, the path has to traverse some point with 
	$y$-coordinate~$\psi_{\beta+1}$ and with $x$-coordinate between~$\alpha+1$
	and~$\beta$; see Fig.~\ref{fig:acyclicity-b1b-a1b1-2}. 
	However, all of these points lie on the 
	edge~$(w_{\ell_{\beta+1}},w_{\beta+1})$, due to~$\ell_{\beta+1}\le\alpha$,
	which contradicts planarity. 
	
	\smallskip
	\subcase{ssc:nocycle-neq-b1b-b1_at_a1}
	$\psi_{\beta+1}=\psi_{\alpha+1}$.
	Then we have $\ell_{\beta+1}=\alpha+1>\ell_{\beta+1}$; a contradiction.
	
	\smallskip
	\subcase{ssc:nocycle-neq-b1b-b1_below_a1}
	$\psi_{\beta+1}<\psi_{\alpha+1}$; see Fig.~\ref{fig:acyclicity-b1b-b1a1}.
	By Property~\ref{prop:path_a1_b1},~$v_{\alpha+1}\prec_x~v_{\beta+1}$. 
	Since~$w_{\alpha+1}$ has a south-oriented flat angle
	and~$w_{\beta+1}$ has a north-oriented flat angle, the corresponding path has 
	to traverse some point with $y$-coordinate~$\psi_{\beta+1}$ 
	and with $x$-coordinate between~$v_{\alpha+1}$ and~$v_{\beta+1}$.
	However, all of these points lie on the edge $(w_{\ell_{\beta+1}},w_{\beta+1})$,
	due to~$\ell_{\beta+1}\le\alpha$, which contradicts planarity.
	
	\medskip From the above case analysis, it follows that there is no valid $y$-coordinate~$\psi_{\beta+1}$
	in any rectilinear drawing. Thus, there cannot be any cycle in the
	relation graph and the proof of the lemma follows.
\end{proof}

	\begin{figure}[t]
		\centering
		\subcaptionbox{\label{fig:acyclicity-b1b-a1b1-1}}{\includegraphics[page=1]{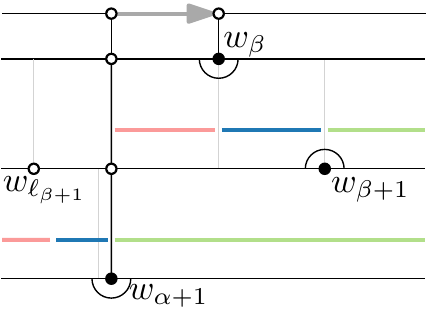}}
		\hfill
		\subcaptionbox{\label{fig:acyclicity-b1b-a1b1-2}}{\includegraphics[page=2]{acyclicity-b1b}}
		\hfill
		\subcaptionbox{\label{fig:acyclicity-b1b-b1a1}}{\includegraphics[page=3]{acyclicity-b1b}}
		\caption{Illustration for Case~\ref{sc:nocycle-neq-b1b}:  $\psi_{\beta+1}<\psi_{\beta}$. 
			(a)--(b) The two cases for the path from~$v_{\alpha+1}$ to~$v_{\beta}$
			in  Case~\ref{ssc:nocycle-neq-b1b-b1_above_a1}: $\psi_\beta>\psi_{\beta+1}>\psi_{\alpha+1}$;
			in (a)~the path starts above~$\psi_{\beta+1}$, while in (b)~the path starts below~$\psi_{\beta+1}$.
			(c) Case~\ref{ssc:nocycle-neq-b1b-b1_below_a1}: $\psi_{\beta+1}<\psi_{\alpha+1}$.
			The path from~$v_{\alpha+1}$ to~$v_{\beta+1}$.}
		\label{fig:acyclicity-b1b}
	\end{figure}

From the acyclicity of the relation graph, we show in the following lemma that~$C$ is
triangularizable.

\begin{lemma}\label{lem:triangularizable}
  The matrix~$C$ is triangularizable.
\end{lemma}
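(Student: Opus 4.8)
The plan is to read off triangularizability of~$C$ directly from the acyclicity of the relation graph, which is exactly what Lemma~\ref{lem:acyclic} provides. Recall that the relation graph~$R$ is the directed graph on vertices $u_1,\dots,u_{m-1}$ whose adjacency matrix is $I_{m-1}-C$; concretely, $(u_i,u_j)$ is an arc of~$R$ if and only if $c_{i,j}=-1$, and~$R$ has no loops since $c_{i,i}=1$. By Lemma~\ref{lem:acyclic}, $R$ is a DAG, so it admits a topological ordering: there is a bijection $\pi\colon\{1,\dots,m-1\}\to\{1,\dots,m-1\}$ such that $u_{\pi(1)},\dots,u_{\pi(m-1)}$ lists the vertices in topological order, that is, whenever $(u_i,u_j)$ is an arc of~$R$ we have $\pi^{-1}(i)<\pi^{-1}(j)$.

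Next I would let~$P$ be the permutation matrix of~$\pi$ and set $C':=PCP^{\top}$, so that $C'_{a,b}=c_{\pi(a),\pi(b)}$ for all~$a,b$; since~$P$ is a permutation matrix, $P^{\top}=P^{-1}$, so this is a \emph{simultaneous} permutation of the rows and columns of~$C$ by the same permutation, which is precisely the notion of triangularizability needed here. In particular the diagonal is mapped to the diagonal, and $C'_{a,a}=c_{\pi(a),\pi(a)}=1$. For an off-diagonal entry, suppose $C'_{a,b}=-1$ with $a\neq b$. Then $c_{\pi(a),\pi(b)}=-1$, so $(u_{\pi(a)},u_{\pi(b)})$ is an arc of~$R$, and therefore $\pi(a)$ precedes $\pi(b)$ in the topological order, that is, $a=\pi^{-1}(\pi(a))<\pi^{-1}(\pi(b))=b$. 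Hence every nonzero off-diagonal entry of~$C'$ lies strictly above the diagonal, so~$C'$ is upper triangular; since~$C'$ arises from~$C$ by permuting its rows and columns, $C$ is triangularizable, as claimed.

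The argument involves no genuine obstacle beyond Lemma~\ref{lem:acyclic}: all the difficulty sits in establishing acyclicity of the relation graph, which has already been done. The one subtlety I would be careful about is that rows and columns must be permuted by the \emph{same} permutation~$\pi$, so that the all-ones diagonal $c_{i,i}=1$ of~$C$ is preserved on the diagonal of~$C'$; permuting rows and columns independently would in general destroy triangularity. As a closing remark one can note the intended use: once~$C$ is in triangular form $C'=PCP^{\top}$, the system $C'y>0$ can be solved by forward substitution (for instance by solving $C'y=1$), and then $x:=P^{\top}y>0$ solves $Cx>0$ and hence, as observed above, both $Ax>0$ and $Bx>0$.
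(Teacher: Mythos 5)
Your proof is correct and follows essentially the same route as the paper: invoke Lemma~\ref{lem:acyclic} to get a topological order of the relation graph, conjugate~$C$ by the corresponding permutation matrix, and observe that the result is triangular with $1$'s on the diagonal. You simply spell out the index bookkeeping (and the point that rows and columns must be permuted by the same permutation) more explicitly than the paper does.
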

\begin{proof}
By Lemma~\ref{lem:acyclic}, the relation graph described by the matrix~$I_{m-1}-C$ is acyclic. Hence, there is a permutation matrix~$P$
(corresponding to a topological sort) such that~$P(I_{m-1}-C)P^{-1}$
is triangulated with only $0$'s on the diagonal. Thus,~$PI_{m-1}P^{-1}-PCP^{-1}=I_{m-1}-PCP^{-1}$ is triangulated with only $0$'s on the diagonal, so $PCP^{-1}$ is triangulated with only $1$'s on the diagonal.
\end{proof}

Since~$C$ is triangularizable by Lemma~\ref{lem:triangularizable}, the system
of linear equations $Cx=1$ always has a solution, which solves~$Ax>0$ and~$Bx>0$ simultaneously.
This concludes the sufficiency proof for Theorem~\ref{thm:characterization}.

\subsection{Area requirements of greedy rectilinear drawings}
\medskip
In this subsection we consider the area requirements of greedy drawings of rectilinear representations. The first observation in this direction is that the construction presented in the previous subsection ensures that all the coordinates are integer. However, the area of the produced drawing is in general not minimum, since we just require an interval to be larger than the sum of the length of all intervals of the left and right inequality. 

In the following, we strengthen the algorithmic part of the characterization by showing that, given \good $st$-orderings of the two DAGs $D_x$ and~$D_y$, we can construct in polynomial time a rectilinear greedy drawing with minimum area respecting the given $st$-orderings. 

\begin{theorem}\label{thm:construction}
  Let $H$ be a \convex rectilinear representation of a biconnected  
  plane graph and let~$\mathcal S_x$ and~$\mathcal S_y$
  be \good $st$-orderings of~$D_x$ and~$D_y$. We can compute a
  greedy drawing of~$H$ that respects~$\mathcal S_x$
  and~$\mathcal S_y$ with minimum area in $O(n^2)$ time.
\end{theorem}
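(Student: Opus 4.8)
The plan is to reduce the construction of a minimum-area drawing to two independent one-dimensional optimizations and to solve each of them by back-substitution, exploiting the acyclicity of the relation graph established in Lemma~\ref{lem:acyclic}. By Lemma~\ref{lem:independent-x-y}, a drawing of $H$ that respects $\mathcal S_x$ and $\mathcal S_y$ is greedy precisely when its $x$-coordinates respect $\mathcal S_x$ and resolve all $x$-conflicts and its $y$-coordinates respect $\mathcal S_y$ and resolve all $y$-conflicts, and any such $x$-assignment may be glued to any such $y$-assignment. Since, for a drawing respecting $\mathcal S_x$ and $\mathcal S_y$, the width equals $\sum_i x_{i,i+1}$ and the height equals the analogous sum of consecutive $y$-distances, the width depends only on the $x$-assignment and the height only on the $y$-assignment. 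Hence the minimum area over all such drawings equals (minimum width) $\times$ (minimum height), and a drawing attaining both minima at once is itself greedy by Lemma~\ref{lem:independent-x-y}. It therefore suffices to compute a minimum-width $x$-assignment; the $y$-assignment is symmetric.

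For the $x$-direction I would first recall that, by Lemmas~\ref{lem:conflict-sat-greedy}, \ref{lem:dominations}, \ref{lem:consecutive} and~\ref{lem:inequalities}, an integer vector $x=(x_{1,2},\dots,x_{m-1,m})$ with $x_{i,i+1}\ge 1$ for all $i$ determines an admissible $x$-assignment (one respecting $\mathcal S_x$ and resolving all $x$-conflicts) if and only if it satisfies the integral forms of the left and right inequalities at every index $i$ for which $\{v_i,v_{i+1}\}$ is a minimal $x$-conflict; a strict inequality between integers is tightened to a non-strict one with the right-hand side increased by one, so these read $x_{i,i+1}\ge 1+\sum_{j\in L_i}x_{j,j+1}$ and $x_{i,i+1}\ge 1+\sum_{j\in R_i}x_{j,j+1}$, where $L_i$ and $R_i$ are the index sets appearing in the left and right inequalities. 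Minimizing $\sum_i x_{i,i+1}$ subject to these constraints is the linear program promised in the introduction; its crucial feature is that every constraint bounds a variable from below by a nonnegative combination of other variables, and that the induced variable-dependency digraph is exactly the relation graph of $\mathcal S_x$, whose nodes enumerate the intervals $x_{i,i+1}$ and whose arcs record the minimal $x$-conflicts (at most $m-1$ of them by Lemma~\ref{lem:consecutive}).

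Since the relation graph is acyclic by Lemma~\ref{lem:acyclic}, I would compute the componentwise-minimum solution by processing the nodes $u_i$ in reverse topological order of the relation graph (heads before tails): if $\{v_i,v_{i+1}\}$ is not a minimal $x$-conflict, set $x_{i,i+1}^{\star}:=1$; otherwise set $x_{i,i+1}^{\star}:=\max\bigl\{1+\sum_{j\in L_i}x_{j,j+1}^{\star},\ 1+\sum_{j\in R_i}x_{j,j+1}^{\star}\bigr\}$, with an empty sum read as $0$. All right-hand sides refer only to already-computed values, so $x^{\star}$ is well defined; it is admissible by construction, and an induction along the reverse topological order shows $x'_{i,i+1}\ge x_{i,i+1}^{\star}$ for every admissible integer vector $x'$ and every $i$. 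Consequently $\sum_i x'_{i,i+1}\ge\sum_i x_{i,i+1}^{\star}$, so $x^{\star}$ has minimum width. Running the symmetric procedure on $D_y$ yields a minimum-height $y$-assignment $y^{\star}$, and gluing the two produces, by Lemma~\ref{lem:independent-x-y}, a greedy drawing of $H$ respecting $\mathcal S_x$ and $\mathcal S_y$ whose area is the product of the minimum width and the minimum height, hence minimum among all such drawings. For the time bound, the relation graph has at most $n$ nodes and $O(n^2)$ arcs (the sets $L_i,R_i$ have total size $O(n^2)$); building it together with the reachability information needed to detect minimal $x$-conflicts, the topological sort, and the back-substitution (each $x_{i,i+1}^{\star}$ being one sum over $L_i\cup R_i$) all run in $O(n^2)$ time, and the $y$-direction is analogous, for $O(n^2)$ overall.

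The step I expect to be the main obstacle is the optimality argument of the third paragraph. The feasibility proof of Theorem~\ref{thm:characterization} used the system $Cx=\mathbf 1$, whose right-hand side at index $i$ is the sum of \emph{both} blocks $L_i$ and $R_i$, whereas the minimum solution must instead take the \emph{maximum} of the two one-block bounds. One has to verify that this ``max of two'' recursion is still well defined and yields a vector dominated componentwise by every feasible solution — this is exactly where the acyclicity of the relation graph (Lemma~\ref{lem:acyclic}) is indispensable, and the domination claim requires a careful, though elementary, induction on the reverse topological order. Once this is settled, the decomposition of the area into width times height via Lemma~\ref{lem:independent-x-y} and the $O(n^2)$ bookkeeping are routine.
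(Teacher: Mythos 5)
Your proposal is correct and follows essentially the same route as the paper: reduce to the left/right inequalities for minimal $x$-conflicts (via Lemmas~\ref{lem:conflict-sat-greedy}, \ref{lem:dominations}, \ref{lem:consecutive}, \ref{lem:inequalities}), use the acyclicity of the relation graph (Lemma~\ref{lem:acyclic}) to compute the componentwise-minimum solution $x_{i,i+1}=\max\{\sum_{L_i},\sum_{R_i}\}+1$ by back-substitution in topological order, glue the $x$- and $y$-assignments via Lemma~\ref{lem:independent-x-y}, and account for $O(n^2)$ time. The induction establishing componentwise domination that you flag as the main obstacle is exactly the paper's (more tersely stated) ``minimality follows by the necessity of the constraints,'' and your explicit statement of it is if anything cleaner than the paper's source/sink phrasing.
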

\begin{proof}
	By Theorem~\ref{thm:characterization}, there is always a rectilinear greedy 
	drawing of~$H$, and we can construct one by solving the linear equality 
	system~$Cx=1$ as described above for both~$D_x$ and~$D_y$. Since all 
	inequalities are necessary and sufficient, by Lemma~\ref{lem:inequalities}, a 
	solution of minimum area will have the form $x_{i,i+1}=1$, if there is no 
	minimal $x$-conflict~$\{v_i,v_{i+1}\}$,
	while otherwise we have:
	\[x_{i,i+1}=\max\{\sum_{j=i+1}^{{r_i}-1}x_{j,j+1},\sum_{j={\ell_{i+1}}}^{i-1}x_{j,j+1}\}+1\]
	We can find such a solution in quadratic time by solving the following linear program.
	\[
	\text{minimize}\hspace{1em} \sum_{i=1}^{m-1} x_{i,i+1} \hspace{2em}  
	\text{subject to}\hspace{1em} Ax\ge 1 \hspace{1em}\text{and}\hspace{1em} Bx\ge 1.
	\]
	
	Note that the inequalities already imply~$x\ge 1$.
	By the acyclicity of the constraints, there is always a
	solution that satisfies~$a_ix_i=1$ or~$b_ix_i=1$ for each~$1\le i<m$,
	where~$a_i$ and~$b_i$ correspond to the $i$-th row of the matrices~$A$
	and~$B$, respectively;
	hence, the linear program will assign to each~$x_{i,i+1}$ the value
	$x_{i,i+1}=\max\{\sum_{j=i+1}^{{r_i}-1}x_{j,j+1},\sum_{j={\ell_{i+1}}}^{i-1}x_{j,j+1}\}+1$, 
	which is an integer\footnote{Formally, one would have to prove that the constraint
		matrix is totally unimodular, from which we refrain here since the fact that
		we obtain an integral solution should be clear.}.
	
	For the running time, we first have to find all minimal $x$-conflicts. To this
	end, we only have to check whether two consecutive nodes in the
	$st$-orderings have an $x$-conflict; this can clearly be done in linear time
	per node pair, so in~$O(n^2)$ time in total. Then, we have to 
	create the matrices~$A$,~$B$, and~$C$, which have at most~$m-1$ rows and
	columns each (since~$D_x$ and~$D_y$ might have fewer nodes than~$H$). This takes~$O(n^2)$ time each. In order to triangularize~$C$, we have to compute a
	topological order on the DAG defined by the adjacency matrix~$I_m-C$;
	this can be done in~$O(n^2)$ time using, e.g., depth-first search.
	Finally, we can solve the linear program in polynomial time. 
	
	In general, it is not known whether the linear program can be solved
	in~$O(n^2)$ time; the best-known bound is~$O^*(n^{\omega})$
	where~$\omega\approx 2.372$ is the current matrix multiplication time~\cite{cls-slpcm-stoc19}.
	However, we can reduce the runtime for finding a rectilinear
	greedy drawing of~$H$ with minimum area by solving the inequalities ``by hand''.
	Let~$A^*=\left(a_1^*,\ldots,a_m^*\right)^\top=PAP^{-1}$
	and~$B^*=\left(b_1^*,\ldots,b_m^*\right)^\top=PBP^{-1}$.
	Let~$C^*=\left(a_1^*,b_1^*,\ldots,a_m^*,b_m^*\right)^\top$
	and $x^*=xP^{-1}$.
	Obviously, the following linear program
	is equivalent to the one above and since both~$A^*$ and~$B^*$ are upper
	triangulated, we can solve it bottom-up two rows at a time in $O(n^2)$ time.
	\[
	\text{minimize}\hspace{1em} \sum_{i=1}^{m-1} x_{i,i+1} \hspace{2em}
	\text{subject to}\hspace{1em} C^*x^*\ge 1
	\]

	We can also use a more algorithmical approach.
	We can assign the values to each~$x_{i,i+1}$ already while using the
	topological sort to triangulate~$I_m-C$; according to this topological
	sort, we can assign~$x_{i,i+1}=1$ to all sources of the DAG $I_m-C$
	and the maximum of $\sum_{j=i+1}^{{r_i}-1}x_{j,j+1}+1$ and
	$\sum_{j={\ell_{i+1}}}^{i-1}x_{j,j+1}+1$ to all non-sources.
	By this, all linear inequalities are \sat and the minimality follows
	by the necessity of the constraints. Since the DAG~$I_m-C$ has at most~$O(m^2)$
	edges, this algorithm works in~$O(n^2)$ time.
\end{proof}

In the following we show that, although minimum, the area of the drawings produced by our algorithm may be non-polynomial in some cases; namely, Theorem~\ref{th:exponential-area} states that there exist \convex rectilinear representations whose DAGs admit \good $st$-orderings, but there is no combination of them resulting in a succinct greedy drawing, since the solutions of the corresponding system of inequalities are always exponential in the input size. Observe that, on the contrary, every universal greedy rectilinear representation of an $n$-vertex graph is succinct, since by Corollary~\ref{co:universal-min-area} it has a (greedy) drawing of minimum area on an integer grid of size $O(n^2$)~\cite{DBLP:journals/comgeo/BridgemanBDLTV00,DBLP:journals/siamcomp/Tamassia87}.

\begin{theorem}\label{th:exponential-area}
There exist rectilinear representations whose every greedy rectilinear drawing has exponential area, even if~$D_x$ and~$D_y$ are series-parallel.
\end{theorem}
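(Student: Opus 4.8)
The plan is to build, for every $n$, a \convex rectilinear representation $H_n$ of a biconnected plane graph with $\Theta(n)$ vertices for which both $D_x$ and $D_y$ are series-parallel, $H_n$ is greedy realizable (so the statement is not vacuous), and yet every greedy rectilinear drawing of $H_n$ has $x$-extent $2^{\Omega(n)}$ on an integer grid, hence exponential area. The lower bound will come entirely from the left inequalities of a long chain of minimal $x$-conflicts: by Lemma~\ref{lem:inequalities} every greedy rectilinear drawing of $H_n$ satisfies all left and right inequalities of its minimal $x$-conflicts, so it suffices to engineer $H_n$ so that, for \emph{every} \good $st$-ordering of $D_x$, at least $\Omega(n)$ of these left inequalities each force the next $x$-distance to exceed the sum of all preceding ones, i.e.\ $x_{1,i+1}>2\,x_{1,i}$ at $\Omega(n)$ positions~$i$; chaining these doublings and using that on an integer grid $x_{1,\cdot}$ is a non-decreasing sequence of positive integers yields $x_{1,m}\ge 2^{\Omega(n)}$.

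The structural feature we want to put into $H_n$ is that $D_x$ has a ``long'' source, namely the leftmost vertical path of $H_n$, from which $\Theta(n)$ horizontal edges of $H_n$ leave rightwards at distinct heights, each ending at a flat vertex of $H_n$ that is the bottommost vertex of its own column. If $w$ is such a flat vertex with $c_x(w)=v_{i+1}$, then whenever $\{v_i,v_{i+1}\}$ is a minimal $x$-conflict for which $w$ is the responsible vertex ``on top'' (a situation we arrange to always hold, since these columns will sit above the rest of the drawing), the bottommost incoming edge of $v_{i+1}$ is the long horizontal edge from the source, so the index $\ell_{i+1}$ appearing in the left inequality equals $1$ \emph{in every} $st$-ordering; hence the left inequality reads $x_{i,i+1}>\sum_{j=1}^{i-1}x_{j,j+1}=x_{1,i}$, i.e.\ $x_{1,i+1}>2\,x_{1,i}$. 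Concretely, $H_n$ is obtained from a ``staircase''-shaped orthoconvex outer boundary with $\Theta(n)$ steps, a tall vertical path on the far left, the $\Theta(n)$ long horizontal edges routed across, and a few extra chords so that the representation becomes biconnected, all internal faces become rectangles, and the outer face stays orthoconvex; one checks that the resulting $D_x$ is a directed path augmented by these long edges in a way that admits a series-parallel decomposition, that $D_y$ is likewise series-parallel, and that each admits a \good $st$-ordering, so that $H_n$ is greedy realizable by Theorem~\ref{thm:characterization}.

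To see that the $\Omega(n)$ doublings occur in every \good $st$-ordering $\mathcal S_x$ of $D_x$ — which matters because a greedy drawing is only guaranteed to respect \emph{some} \good $st$-ordering (Lemma~\ref{le:necessity}) — we use that the $\Theta(n)$ ``special'' columns (those whose bottommost vertex is the right endpoint of a long source edge) are pairwise $\prec_x$-incomparable, so by Condition~\ref{c:strips-2comp} no three of them can be consecutive in $\mathcal S_x$; moreover every $\prec_x$-successor of a special column follows it in $\mathcal S_x$, and its only $\prec_x$-predecessor is the source, so all but $O(1)$ of the special columns are immediately preceded in $\mathcal S_x$ by some column with which they are $\prec_x$-incomparable. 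Each such pair is a minimal $x$-conflict of the type above, sitting at a distinct position $i$, so chaining $x_{1,i+1}>2\,x_{1,i}$ over these $\Omega(n)$ positions gives $x_{1,m}\ge 2^{\Omega(n)}$ on an integer grid, and therefore every greedy rectilinear drawing of $H_n$ has width, and hence area, exponential in $n$.

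The main obstacle is the explicit construction and its verification, not the combinatorial analysis above (which is routine once $H_n$ is fixed). What takes care is producing a concrete family $H_n$ that is simultaneously a legal \convex rectilinear representation of a biconnected plane graph of maximum degree four — in particular with the $\Theta(n)$ long horizontal edges routed without crossings and with all responsible vertices flat and on the orthoconvex outer boundary — and has $D_x$ and $D_y$ both series-parallel while realizing the required chain of incomparable special columns. The delicate point is reconciling the ``long horizontal edges emanating from the source'' (needed to force $\ell_{i+1}=1$) with orthoconvexity of the outer face and with the series-parallel structure of the two DAGs, and making sure $H_n$ stays greedy realizable at all.
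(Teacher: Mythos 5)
Your doubling mechanism is precisely the one the paper uses, up to a left--right reflection: the paper's witness (Fig.~\ref{fig:exp-staircase}) is a staircase of horizontal paths $u_i,w_i,z_i,v_i$ that all end on a single shared vertical path, the common \emph{sink} column of $D_x$, and the chain of minimal $x$-conflicts $\{c_x(z_i),c_x(w_{i+1})\}$ yields, via the \emph{right} inequalities anchored at that shared column, $x(v_1)-x(z_i)>2\,(x(v_1)-x(z_{i+1}))$ and hence a $2^{q-1}$ blow-up; you anchor the \emph{left} inequalities at a shared source column instead, which is the mirror image and equally sound. Your appeal to Lemma~\ref{lem:inequalities} for necessity, and your recognition that the doubling chain must survive in \emph{every} \good $st$-ordering, are both correct in spirit.

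The genuine gap is that the witness is never exhibited. The theorem is an existence statement, so the explicit family and its verification --- that it is a legal convex rectilinear representation of a biconnected plane graph of maximum degree four, that $D_x$ and $D_y$ are series-parallel, that a \good $st$-ordering exists at all (so the instance is greedy realizable and the claim is not vacuous), and, crucially, that in every \good $st$-ordering $\Omega(n)$ consecutive pairs really are minimal $x$-conflicts whose responsible bottommost flat vertex has its left neighbour on the source column (forcing $\ell_{i+1}=1$) --- is the substance of the proof, and you explicitly defer all of it, calling it the ``main obstacle'' and ``delicate''. The paper settles exactly these points by fixing the four vertex classes $v_i,w_i,z_i,u_i$ and then proving that $D_x$ admits only \emph{two} \good $st$-orderings, both of the interleaved form $c_x(w_1),c_x(z_1),c_x(w_2),c_x(z_2),\dots$, which turns the ``for every ordering'' step into a short explicit computation rather than the looser counting argument you sketch (which, as stated, still owes an argument that each of the $\Omega(n)$ incomparable consecutive pairs contributes a left inequality of the anchored form, not merely some conflict). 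None of this is unfixable --- the paper's construction shows your blueprint can be realized --- but as written the proposal stops where the proof begins.
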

\begin{proof}
	We first describe a rectilinear representation $H$, and then we show that it 
	satisfies the properties of the statement; see Fig.~\ref{fig:exp-staircase}. 
	The vertex set of $H$ consists of four sets $v_1,\dots,v_q$, $w_1,\dots,w_{q-1}$, 
	$z_1,\dots,z_q$, and $u_2,\dots,u_q$, connected as follows. Vertices $v_1,
	\dots,v_q$ belong to a vertical path $\pi_x(v)$, so that they appear in this 
	order from bottom to top. Then, for each $i = 2, \dots, q-1$, we add a 
	horizontal path $\pi_y^i = u_i,w_i,z_i,v_i$ such that these vertices appear in 
	this left-to-right order. Also, we add a horizontal path 
	$\pi_y^1 = w_1,z_1,v_1$ and a horizontal path~$\pi_y^q=u_q,z_q,v_q$
	such that these vertices appear in this left-to-right order. Finally, for 
	each $i = 1, \dots, q-1$, we add a vertical path~$\pi_x^i$ composed of a 
	single edge $(w_i,u_{i+1})$.
	Observe that $H$ is \convex.
	
	\begin{figure}[t]
		\centering
		\subcaptionbox{\label{fig:exp-staircase}}{\includegraphics[page=1]{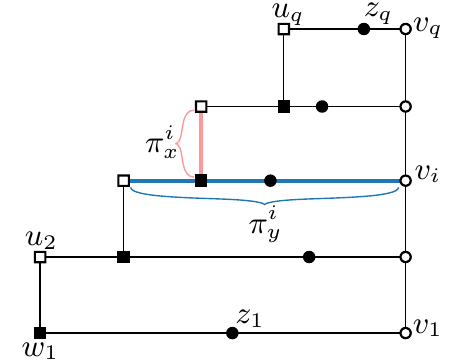}}
		\hfill
		\subcaptionbox{\label{fig:exp-seriesparallel}}{\includegraphics[page=2]{staircase}}
		\hfill
		\subcaptionbox{\label{fig:exp-constraints}}{\includegraphics[page=3]{staircase}}
		\caption{Illustration for the proof of Theorem~\ref{th:exponential-area}. (a) Rectilinear representation~$H$; (b) The DAG~$D_x$; (c) A \good $st$-ordering and its constraints.}
		\label{fig:exponential}
	\end{figure}
	
	We now consider DAGs~$D_x$ and $D_y$, and their possible \good $st$-orderings. 
	The first observation is that~$D_y$ is a directed path from the vertex 
	$c_y(v_1)$ corresponding to $\pi_y^1$ to the vertex $c_y(v_q)$ corresponding 
	to~$\pi_y^q$. Thus, $D_y$ admits a unique $st$-ordering, which is trivially 
	\good. As for $D_x$, it consists of the series-parallel graph depicted in 
	Fig.~\ref{fig:exp-seriesparallel}, whose unique source is the vertex $c_x(w_1)$ 
	corresponding to $\pi_x^1$ and whose unique sink is the vertex $c_x(v_1)$ 
	corresponding to $\pi_x(v)$. Although $D_x$ admits several st-orderings, we 
	claim that only two of them are \good.
	
	Observe that $D_x$ contains a directed path 
	$c_x(w_1), \dots,c_x(w_{q-1}),c_x(z_q),c_x(v_1)$, and thus these vertices appear
	in this order in any st-ordering. Thus, the only possible $st$-orderings differ 
	by the placement of 
	vertices $c_x(z_1), \dots, c_x(z_{q-1})$. Note that $c_x(z_{q-1})$ must 
	appear after $c_x(w_{q-1})$ in any $st$-ordering. Thus, if we consider the 
	subgraph of $D_x$ induced by the vertices from the one following~$c_x(w_{q-1})$
  to the one preceding $c_x(v_1)$ in any $st$-ordering, we always have a 
	connected component consisting only of $c_x(z_{q-1})$, and another connected 
	component consisting only of $c_x(z_{q})$. This implies that 
	no other vertex can be placed after $c_x(w_{q-1})$, as otherwise the 
	resulting $st$-ordering would not be \good. 
	Consider now the subgraph of $D_x$ induced by the vertices from the one 
	following~$c_x(w_{q-2})$ to the one preceding $c_x(v_1)$ in any $st$-
	ordering. Again, we have already two connected components, namely one 
	consisting only of $c_x(z_{q-2})$ and one consisting 
	of $c_x(w_{q-1})$, $c_x(z_{q-1})$, and $c_x(z_{q})$. This implies that no other vertex can 
	be placed after $c_x(w_{q-2})$, as otherwise the resulting $st$-ordering 
	would not be \good. In particular, this implies that $c_x(w_{q-2})$, $c_x(z_{q
		-2})$, and~$c_x(w_{q-1})$ are consecutive in any \good $st$-ordering of $D_x$.
	Repeating this argument for every $i \leq q-2$, we obtain that $c_x(w_1), c_x(z_1), c_x(w_2),c_x(z_2), \dots, c_x(w_{q-2}), c_x(z_{q-2})$, and $c_x(w_{q-1})$ are consecutive in any \good $st$-ordering of $D_x$. Thus, there exist only two \good $st$-orderings, which only differ for the position of $c_x(z_{q-1})$ with respect to the positions of $c_x(w_{q})$ and of~$c_x(z_{q})$; recall that, in a \good $st$-ordering, $c_x(z_{q-1})$ must appear either before or after both of $c_x(w_{q})$ and $c_x(z_{q})$.
	
	Assume that $c_x(z_{q-1})$ appears before~$c_x(z_{q})$ in the \good $st$-ordering;
	see Fig.~\ref{fig:exp-constraints}. The~other case is analogous.
	For ease of notation, we say~$w_q:=z_q$.
	By the good $st$-ordering, we have:
	\[
	x(z_i)>x(u_i)
	\label{eq:exp-1}
	\]
	for every $1\le i<q$.
	Recall the definitions of
	minimal $x$-conflict and right inequality from Section~\ref{se:greedy-rectilinear}.
	For each~$1\le i<q$, we have
	a minimal $x$-conflict between~$c_x(z_i)$ and~$c_x(w_{i+1})$
	with $r_{c_x(z_i)}=c_x(v_1)$.
	This gives us the right inequality:
	
	\begin{align}
	&x(w_{i+1})-x(z_{i})>x(v_1)-x(w_{i+1}) \notag\\
	\Leftrightarrow\quad&  x(v_1)-x(z_i)>2\left(x(v_1)-x(w_{i+1})\right)
	>_{(\ref{eq:exp-1})} 2\left(x(v_1)-x(z_{i+1})\right)
	\label{eq:exp-2}
	\end{align}
	
	Hence, we obtain:
	
	\begin{align}
	x(v_1)-x(z_1)>_{(\ref{eq:exp-2})}2\left(x(v_1)-x(z_2)\right)>\ldots
	>_{(\ref{eq:exp-2})}2^{q-1}\left(x(v_1)-x(z_q)\right).
	\label{eq:exp-3}
	\end{align}
	
	Since a solution to the right inequalities is necessary for a
	rectilinear greedy drawing by Lemma~\ref{lem:inequalities},
	any greedy drawing of~$H$ must satisfy Equation~\ref{eq:exp-3}.
	However, by $q\in \Omega(n)$, this implies a lower bound on the area
	of~$2^{\Omega(n)}$.
\end{proof}

\subsection{A linear-time algorithm for a special family of instances}
\medskip
We conclude the section by presenting an algorithm to efficiently test, for a meaningful subset of instances, whether a rectilinear representation is greedy realizable. In particular we show that, when an $st$-graph is series-parallel, it is possible to test efficiently whether it admits a \good $st$-ordering, and thus satisfies the condition of the characterization presented in Theorem~\ref{thm:characterization}.

\begin{theorem}\label{thm:seriesparallel}
Let $H$ be a \convex rectilinear representation of a biconnected plane 
graph. If~$D_x$ and~$D_y$ are series-parallel, we can 
test in $O(n)$ time if $H$ is greedy realizable. If the test 
succeeds, a greedy drawing of $H$ is computed in~$O(n^2)$ time.
\end{theorem}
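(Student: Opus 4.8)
The plan is to reduce the statement to a purely combinatorial question about series-parallel $st$-digraphs and to answer that question by a bottom-up dynamic program on the decomposition tree. By Theorem~\ref{thm:characterization}, $H$ is greedy realizable if and only if both $D_x$ and $D_y$ admit \good $st$-orderings, and once such orderings are available Theorem~\ref{thm:construction} produces a greedy drawing in $O(n^2)$ time. Hence it suffices to prove the following: given a series-parallel $st$-digraph $D$ on at most $n$ nodes, one can decide in $O(n)$ time whether $D$ admits a \good $st$-ordering and, if so, output one. First I would compute the SPQ-decomposition tree $T$ of $D$ in linear time, using the standard recognition algorithm for series-parallel digraphs. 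A useful preliminary observation is that in \emph{any} $st$-ordering $v_1,\dots,v_m$ every prefix $v_1,\dots,v_j$ and every suffix $v_i,\dots,v_m$ induces a connected subgraph: each node is reached from $v_1$ and reaches $v_m$ along a directed path, which is monotone in the ordering and therefore stays inside the prefix (resp.\ suffix). Consequently Conditions~\ref{c:strips-2comp} and~\ref{c:strips-precede} only constrain ``proper'' intervals $v_i,\dots,v_j$ with $2\le i<j\le m-1$, which is what makes a local, tree-based analysis possible.

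The core is a bottom-up computation that assigns to every node $\mu$ of $T$, with associated series-parallel $st$-digraph $D_\mu$ of source $s_\mu$ and sink $t_\mu$, a constant-size \emph{signature} consisting of: (i) whether $D_\mu$ admits a \good $st$-ordering; (ii) the number $c_\mu\in\{0,1,2\}$ of connected components of the \emph{interior} $V(D_\mu)\setminus\{s_\mu,t_\mu\}$ (a value larger than $2$ already certifies non-realizability, since the interior is a proper interval in every $st$-ordering of $D_\mu$); and (iii) two Boolean flags $\mathrm{pre}_\mu$ and $\mathrm{suf}_\mu$ indicating whether $D_\mu$ has a \good $st$-ordering in which, additionally, every nonempty prefix (resp.\ suffix) of the interior part is connected. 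For a Q-node the signature is trivial. For an S-node with children $\mu_1,\dots,\mu_k$ in series order, $D_\mu$ is realizable iff every $D_{\mu_i}$ is (a proper interval spanning several child blocks is always connected, because a suffix and a prefix of consecutive blocks are glued at their shared cut vertex, and an interval contained in one block is \good by induction), $c_\mu=1$ (the cut vertices glue all the pieces), $\mathrm{pre}_\mu=\mathrm{pre}_{\mu_1}$, and $\mathrm{suf}_\mu=\mathrm{suf}_{\mu_k}$. For a P-node with children $\mu_1,\dots,\mu_k$ one sets $\Sigma=\sum_i c_{\mu_i}$, which is exactly the number of components of the interior of $D_\mu$; if $\Sigma\ge 3$ then $D_\mu$ is not realizable; if at most one child is nontrivial the signature is inherited from that child (or is $(1,0,1,1)$ if all children are single edges); and if exactly two children $\mu_a,\mu_b$ are nontrivial (so $c_{\mu_a}=c_{\mu_b}=1$), then $D_\mu$ is realizable iff all children are and $(\mathrm{suf}_{\mu_a}\wedge\mathrm{pre}_{\mu_b})\vee(\mathrm{suf}_{\mu_b}\wedge\mathrm{pre}_{\mu_a})$ holds, with $c_\mu=2$ and $\mathrm{pre}_\mu=\mathrm{suf}_\mu=0$. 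Reading off the signature of the root answers the decision problem, and backtracking through the DP (recording, at each P-node, which block order and which flags were used) reconstructs an explicit \good $st$-ordering; since each tree node is handled in time proportional to its degree, the whole test runs in $O(n)$ time.

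The hard part is justifying the P-node rule, that is, pinning down how the interiors of distinct parallel components may be interleaved in a \good $st$-ordering. The key lemma is that, since those interiors are pairwise non-adjacent, any \good ordering must arrange them in contiguous ``blocks'': if a node of one interior lay strictly between two nodes of another interior, then the interval delimited by the latter two would contain a connected portion of the second interior whose occupied positions are interrupted by the former, i.e.\ two components that fail to be separated, contradicting Condition~\ref{c:strips-precede}; combined with Condition~\ref{c:strips-2comp} applied to the whole interior, this forces at most two nontrivial components and the block structure. Conversely, concatenating a suffix-connected \good ordering of $D_{\mu_a}$ with a prefix-connected \good ordering of $D_{\mu_b}$ yields a \good ordering, because every spanning proper interval becomes a connected suffix of the first interior followed by a connected prefix of the second — exactly two separated components. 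The main obstacle I expect is the bookkeeping around these flags: proving that the four-part signature is a sound and complete ``interface'', in particular that $\mathrm{pre}$ and $\mathrm{suf}$ propagate correctly through both kinds of composition and that an ancestor never needs any further information about $D_\mu$. This is a finite but delicate case analysis (including verifying that the restriction of a \good ordering of $D$ to $V(D_\mu)$ is itself a \good ordering of $D_\mu$, so that the recurrences are both necessary and sufficient).

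Finally, once the combinatorial test succeeds, the backtracking above returns \good $st$-orderings $\mathcal S_x$ of $D_x$ and $\mathcal S_y$ of $D_y$ in $O(n)$ time, and feeding them into Theorem~\ref{thm:construction} yields a greedy drawing of $H$ in $O(n^2)$ time, which completes the proof.
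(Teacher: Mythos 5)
Your proposal is correct and follows the same strategy as the paper's proof: reduce to deciding, for a series-parallel $st$-digraph, the existence of a \good $st$-ordering by recursing on the series-parallel decomposition, observe that a parallel composition can merge at most two components with non-empty interiors (since the interiors are pairwise non-adjacent and the whole interior is a proper interval), impose an interface condition on those two, and then invoke Theorem~\ref{thm:construction} for the $O(n^2)$ drawing. The one substantive difference is the interface condition at parallel compositions: the paper requires that the interior placed first have a single sink and the interior placed second a single source, whereas you use existential prefix/suffix-connectivity flags $\mathrm{pre}_\mu,\mathrm{suf}_\mu$. These are equivalent: if the interior has a unique sink, every interior node reaches it by a directed path inside the interior, and such a path is monotone in any $st$-ordering, so every suffix of the interior is connected in every ordering; conversely, with two sinks the suffix starting at the earlier sink isolates that sink, so some suffix is disconnected. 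Your packaging as an explicit four-part signature propagated through S- and P-nodes (together with the observation that prefixes and suffixes of an $st$-ordering are always connected, so only proper intervals matter) makes the soundness and completeness of the recursion more transparent than the paper's informal treatment, at the cost of the extra bookkeeping you already flagged; the remaining verification that the restriction of a \good ordering of $D$ to $V(D_\mu)$ is \good for $D_\mu$ goes through because all edges leaving the interior of $D_\mu$ end at $s_\mu$ or $t_\mu$, so a proper interval of the restricted ordering induces the same components as the corresponding interval of the global ordering intersected with that interior.
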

\begin{proof}
	By Theorem~\ref{thm:characterization}, we need to check whether both $D_x$ 
	and $D_y$ admit a \good $st$-ordering. We show how to check this for $D_x$ in 
	linear time, the algorithm for~$D_y$ is the same. 
	
	Consider the recursive construction of $D_x$ through series and parallel compositions. For the base case, notice that a graph consisting of a single edge trivially has a \good $st$-ordering. 
	Let $D_x$ be composed of a set of subgraphs $D_1,\dots, D_k$, forming a 
	parallel or a series composition. If we assume that if $D_x$ is composed by a 
	parallel (resp. series) composition, then each of $D_i$ was composed by series 
	(resp. parallel) composition. A construction with this property can be 
	obtained by considering each composition to be maximal.
	
	First note that, if $D_1,\dots, D_k$ form a parallel composition, then either $k=2$
	or $k=3$ and one of~$D_1,D_2,D_3$ is a single edge. 
	In fact, let $s$ and $t$ be the source and sink of $D_1,\dots, D_k$. Thus, for any 
	$st$-ordering $\mathcal{S}_x=v_1,\dots,v_m$ of $D_x$, it holds that $s = v_1$, 
	$t=v_m$, and for each internal vertex $u$ of a component in $D_1,\dots, D_k$, we 
	have $u = v_q$, for some $1 < q < m$. This implies that, for each component 
	in $D_1,\dots, D_k$ that is not a single edge, there exists a connected component 
	in $D_x\langle 1,m \rangle$. Hence, both $k > 3$ and $k=3$ where none of
	$D_1,D_2,D_3$ is a single edge would violate 
	Condition~\ref{c:strips-2comp} of a \good $st$-ordering. 
	
	Consider now a parallel composition between two vertices $s$ and $t$ 
	consisting of exactly two components $D_1$ and $D_2$ that are not a single edge. 
	Let $D^{\core}$ denote 
	the subgraph of $D_x$ induced by the nodes $V(D_x) \setminus \{s,t\}$. Recall 
	that, by Condition~\ref{c:strips-precede}, all nodes of $D_1^\core$ must 
	precede all nodes of $D_2^\core$ in a \good $st$-ordering, or vice versa. 
	Consider the case in which all nodes of $D_1^\core$ precede those of 
	$D_2^\core$, the other one is analogous. We claim that this results in a \good 
	$st$-ordering only if $D_1^\core$ has a single sink and $D_2^\core$ has a single 
	source. This follows from the observation that, for any set of sinks of 
	$D_1^\core$ and sources of $D_2^\core$, it is possible to find a pair of 
	nodes $v_p$ and~$v_q$ in any $st$-ordering $\mathcal{S}_x=v_1,\dots,v_m$ such that each 
	of these sources/sinks define a connected component 
	in $D_x\langle p,q \rangle$; thus, if there exist more than two sources/sinks, 
	then there exists no 
	\good $st$-ordering. On the other hand, if $D_1^\core$ has only one sink 
	and $D_2^\core$ only one source, none of the conditions for a \good $st$-ordering 
	are violated. From the above discussion, it follows that the only two checks 
	to perform are whether either~$D_1^\core$ has only one sink and $D_2^\core$ 
	only one source, and vice versa. If one of the checks succeeds, we compute a 
	\good $st$-ordering of $D_1^\core$ and of $D_2^\core$, and we merge them 
	according to the result of the check; otherwise, we reject the instance.
	
	When $D_1,\dots, D_k$ form a series composition, the number of 
	components $D_1,\dots, D_k$ and their structure can be arbitrary. We 
	construct \good $st$-orderings of $D_1,\dots, D_k$ recursively and merge them in 
	a \good  $st$-ordering of $D_x$.
	
	To conclude, the necessary and sufficient condition for $D_x$ to have a \good 
	$st$-ordering is that at every parallel composition either exactly two components 
	are merged or exactly three components are merged and additionally one of them is a single 
	edge, one has a single source and one has a single sink. 
	This condition can be checked in time linear to the number of 
	nodes of $D_x$. The time complexity for the construction of a rectilinear 
	greedy drawing follows from Theorem~\ref{thm:construction}.
\end{proof}

\section{Conclusions and Open Problems}\label{se:conclusions}

In this work, we introduced rectilinear greedy drawings, i.e., planar greedy drawings in the orthogonal drawing style with no bends. Our work reveals several interesting open problems.
\begin{enumerate}
\item The main problem raised by our work is whether we can test in polynomial time whether whether a rectilinear representation is greedy realizable. Due to our characterization, this is equivalent to asking whether a planar DAG admits a \good $st$-ordering. 

\item For the aforementioned open problem, we provided a linear-time testing algorithm when the DAG is series-parallel. As a further step towards an answer to our main open problem, it is worth studying the special case in which the DAG has only one source and one sink.

\item It is known that not all degree-$4$ plane graphs admit a rectilinear representation, while all of them have an orthogonal representation with bends~\cite{DBLP:journals/siamcomp/Tamassia87}. This motivates to extend the study to greedy orthogonal drawings with bends along the edges.

\item Given a biconnected plane graph $G$ (that is, without prescribed values for the geometric angles around each vertex), what is the complexity of deciding whether~$G$ admits a (universal) greedy rectilinear representation? This question pertains the intermediate step of the topology-shape-metrics approach~\cite{DBLP:journals/siamcomp/Tamassia87}.
\end{enumerate}

\bibliography{abbrv,GreedyOrtho}
\bibliographystyle{abbrvurl}

\end{document}